\def\compactify{\itemsep=0pt \topsep=0pt \partopsep=0pt \parsep=0pt}
\let\latexusecounter=\usecounter
\newcommand{\eps}{\epsilon}
\def\mE{\mathbb{E}}
\newcommand{\EE}[2]{\mE_{#1}\left[#2\right]}
\newcommand{\wsp}{weak spectral spanner}
\def\WSP{Weak Spectral Spanner}
\newcommand{\wspa}[1]{weak #1-spectral spanner}
\def\hu{{\hat{u}}}
\def\hv{{\hat{v}}}
\def\hw{{\hat{w}}}
\newtheorem{theorem}{Theorem}[section]
\newtheorem{claim}[theorem]{Claim}
\newtheorem{lemma}[theorem]{Lemma}
\newtheorem{corollary}[theorem]{Corollary}
\newtheorem{proposition}[theorem]{Proposition}
\newtheorem{definition}[theorem]{Definition}
\newtheorem{fact}[theorem]{Fact}
\newenvironment{proofof}[1]{{\vspace{3mm}\em Proof of #1.}}{\hfill%\rule{2mm}{2mm}
	\qed}
\DeclareMathOperator{\Tr}{tr}
\newcommand{\setword}[2]{%
	\phantomsection
	#1\def\@currentlabel{\unexpanded{#1}}\label{#2}%
}
\lstdefinestyle{nonumbers}{numbers=none}
\lstdefinestyle{numbers}{numbers=left, numberstyle=\footnotesize, 
stepnumber=1, numbersep=7pt, 
xleftmargin=10pt}
\def\Re{{\mathbb{R}}}
\def\Pr{{\mathbb{P}}}
\def\cS{{\cal {S}}}
\def\sI{{\mathcal{I}}}
\def\abs#1{\left|#1\right|}
\def\norm#1{\|#1\|}
\def\dist{\mathrm{dist}}
\def\sC{{\mathcal{C}}}
\DeclareMathOperator{\opt}{OPT}
\DeclareMathOperator{\argmax}{argmax}
\DeclareMathOperator{\argmin}{argmin}
\DeclareMathOperator{\rank}{rank}
\title{Composable Core-sets for Determinant Maximization Problems \\via Spectral Spanners}
\author{ \begin{tabular}{ccccc}
{\begin{tabular}{c} Piotr Indyk \\ MIT \\\small{indyk@mit.edu} \end{tabular}} & & & &
{\begin{tabular}{c}Sepideh Mahabadi \\TTIC \\\small{mahabadi@ttic.edu} \end{tabular}}  \tabularnewline \\
{\begin{tabular}{c}Shayan Oveis Gharan \\University of Washington \\\small{shayan@cs.washington.edu } \end{tabular}} & & & &
 {\begin{tabular}{c}Alireza Rezaei \\University of Washington \\\small{arezaei@cs.washington.edu} \end{tabular}} 
\end{tabular}
}
\date{}
\begin{document}

\maketitle

\begin{abstract}
We study a  generalization of classical combinatorial graph spanners to the spectral setting. Given a set of vectors $V\subseteq \Re^d$, we say a set $U\subseteq V$ is an $\alpha$-spectral $k$-spanner, for $k\leq d$, if for all $v\in V$ there is a probability distribution $\mu_v$ supported on $U$ such that 
$$vv^\intercal \preceq_k \alpha\cdot\mE_{u\sim\mu_v} uu^\intercal,$$ 
where for two matrices $A,B\in\Re^{d\times d}$ we write $A\preceq_k B$ iff the sum of the bottom $d-k+1$ eigenvalues of $B-A$ is nonnegative. In particular, $A\preceq_d B$ iff $A\preceq B$.
We show that any set $V$ has an $\tilde{O}(k)$-spectral spanner of size $\tilde{O}(k)$ and this bound is almost optimal in the worst case. 

We use spectral spanners to study composable core-sets for spectral problems. We show that for many objective functions one can use a spectral spanner, independent of the underlying function, as a core-set and obtain almost optimal composable core-sets.
For example, for the $k$-determinant maximization problem, we obtain an $\tilde{O}(k)^k$-composable core-set, and we show that this is almost optimal in the worst case. 

Our algorithm is a spectral analogue of the classical greedy algorithm for finding (combinatorial) spanners in graphs. We expect that our spanners find many other applications in distributed or parallel models of computation.  Our proof is  spectral. As a side result of our techniques, we show that the rank of diagonally dominant lower-triangular matrices are robust under ``small perturbations'' which could be of independent interests.

\end{abstract}

\thispagestyle{empty}
\setcounter{page}{0}
\newpage

\section{Introduction}
Given a graph $G$ with $n$ vertices $\{1,\dots,n\}$, we say a subgraph $H$ is a $\alpha$-(combinatorial) spanner if for every pair of vertices $u,v$ of $G$,
$$\dist_H(u,v) \leq \alpha\cdot\dist_G(u,v),$$
where $\dist_G(u,v)$ is the shortest path distance between $u,v$ in $G$. It has been shown that for any $\alpha$, $G$ has an $\alpha$-spanner with only $n^{1+O(1)/\alpha}$ many edges and that can be found efficiently \cite{elkin2004}. Such a spanner can be found by a simple algorithm which repeatedly finds and adds an edge $f=(u,v)$ where $\dist_H(u,v) > \alpha$.
Combinatorial spanners have many applications in distributed computing \cite{peleg2000distributed,francis2001idmaps,korkmaz2000source}, optimization \cite{dodis1999design,arora1998polynomial}, etc.
 
In this paper we define and study a spectral generalization of this property.
Given a set of vectors $V\subseteq \Re^d$, we say a set $U\subseteq V$ is an $\alpha$-spectral $d$-spanner of $V$ if for any vector $v\in V$, there exists a probability distribution $\mu_v$ on the vectors in $U$ such that
$$ vv^\intercal \preceq \alpha \cdot \mE_{u\sim \mu_v}uu^\intercal \hspace{0.5cm} \text{equiv} \hspace{0.5cm} \langle x,v\rangle^2 \leq \alpha\cdot\mE_{u\sim\mu_v} \langle x,u\rangle^2, \forall x\in\Re^d.$$ 
To see that this is a generalization of the graph case, let $b_{u,v}=e_u-e_v$ be the vector corresponding to an edge $\{u,v\}$ of $G$, where $e_u$ is the indicator vector of the vertex $u$. It is an exercise to show that for $V=\{b_e\}_{e\in E(G)}$ and for any $\alpha$-combinatorial spanner $H$ of $G$, the set $U=\{b_e\}_{e\in E(H)}$ is an $\alpha^2$-spectral spanner of $V$.

The following theorem is a special case of our main theorem.
\begin{theorem}[Main theorem for $k=d$]\label{thm:main-spectralspanner}
There is an algorithm that for any set of vectors $V\subseteq \Re^d$ finds an $\tilde{O}(d)$-spectral $d$-spanner of size $\tilde{O}(d)$ in time polynomial in $d$ and size of $|V|$ \footnote{The asymptotic notation $\tilde{O}(f(n))$ hides logarithmic factors in $f(n)$.}. 
\end{theorem}
 %\shayan{ can you explicitly state the main theorem for arbitrary $k$ in the spectral spanner section?}
Our algorithm is a spectral generalization of the greedy algorithm mentioned above for finding combinatorial spanners.

We further study generalizations of our spectral spanners to weaker forms of PSD inequalities. 
For two matrices $A,B$ we write $A\preceq_k B$ if for every projection matrix $\Pi$ onto a $d-k+1$ dimensional linear subspace, $\langle A,\Pi\rangle \leq \langle B,\Pi\rangle$.
For example, if $A\preceq_k B$, then sum of the top $k$ eigenvalues of $A$ is at most the sum of the top $k$ eigenvalues of $B$. Analogously, we say $U\subseteq V$ is an $\alpha$-spectral $k$-spanner of $V$, if for any $v\in V$, there is a distribution $\mu_v$ on $U$ such that $vv^\intercal \preceq_k \alpha\cdot \mathbb{E}_{u\sim\mu_v} uu^\intercal.$ In our main theorem we generalize the above statement to all $k\leq d$ and we show that to construct an $\tilde{O}(k)$ spectral $k$-spanner we only need to use $\tilde{O}(k)$ many vectors independent of the ambient dimension of the space. 

\begin{theorem}[Main]\label{thm:mainspannerfork}
	There is an algorithm that for any set of vectors $V\subseteq \Re^d$ finds an $\tilde{O}(k)$-spectral $k$-spanner of size $\tilde{O}(k)$.
	
	Furthermore, for any $\eps>0$ and $k\leq d$, there exists a set $V\subseteq \Re^d$ of size $e^{\Omega(k^\eps)}$ such that any $k^{1-\eps}$-spectral spanner of $V$ must have all vectors of $V$.
\end{theorem}
%The second part of the theorem follows simply by letting $V$ be a union of $e^{\Omega(d^\eps)}$ randomly chosen $\pm 1$ vectors in $\Re^d$. This essentially shows that our guarantee is tight in the worst case. 

\paragraph{Composable core-sets.} Our main application of spectral spanners is to design (composable) core-sets for spectral problems. 
 A function $c(V)$ that maps $V \subseteq \Re^d$ into its subset is called an $\alpha$-composable core-set of size $t$ for the function $f(\cdot)$ ~\cite{AHV-gavc-05,IMMM-ccdcm-14}, if for any collection of  sets $V_1, \ldots, V_p \subset  \Re^d$, we have 
\[ f(c(V_1) \cup \ldots \cup c(V_p)) \ge \frac1{\alpha} \cdot f(V_1\cup \ldots \cup V_p) \]
and $|c(V_i)| \le t$ for any $V_i$.  A composable core-set of a small size immediately yields a communication-efficient distributed approximation algorithm: if each set $V_i$ is stored on a separate machine, then all machines can compute and transmit core-sets, $c(V_i)$'s, to a central server, which can then perform the final computation over the union. Similarly,  core-sets make it possible to design a streaming algorithm which processes $N$ vectors in one pass using only $\sqrt{Nt}$ storage. This is achieved by dividing the stream of data into blocks of size $\sqrt{Nt}$, computing and storing a core-set for each block, and then performing the computation over the union.

In this paper we show that, for a given set $V_i\in\Re^d$, an $\alpha$-spectral spanner of $V_i$  for a proper value of $\alpha$ provides a good core-set of $V_i$s. Specifically, we show that for many (spectral) optimization problems,  such as determinant maximization, $D$-optimal design or min-eigenvalue maximization,  this approach leads to almost the best possible composable core-set in the worst case. 

In what follows we discuss a specific application, to determinant maximization, in more detail. 

\subsection{Composable Core-sets for Determinant Maximization}
Determinantal point processes are widely popular probabilistic models. Given a set of vectors $V\subseteq \Re^d$, and a parameter $k$, DPPs sample $k$-subsets $S$ of $P$ with probability $\Pr(S)$ proportional to the squared volume of the  parallelepiped spanned by the elements  of $S$. That is,
\[ \Pr(S) \sim  \det_k(\sum_{v \in S} vv^T) \]
This distribution formalizes a notion of diversity, as sets of vectors that are ``substantially different'' from each other are assigned higher probability. One can then find the ``most diverse" $k$-subset in $P$  by computing  $S$ that maximizes $\Pr(S)$, i.e., solving the {\em maximum a posteriori (MAP) decoding} problem:
\[ \max_{S \subset P, |S|=k} \Pr(S).  \]
We also refer to this problem as the \emph{$k$-determinant maximization} problem. Since their introduction to machine learning  at the beginning of this decade~\cite{kulesza2010structured,kulesza2011k,kulesza2012determinantal},  DPPs have found applications to 
video summarization~\cite{mirzasoleiman2017streaming,gong2014diverse}, 
document summarization~\cite{kulesza2012determinantal, chao2015large, kulesza2011learning}, 
%building news story timelines~\cite{gillenwater2012discovering}, 
%monitoring network design~\cite{casquilho2018design}, 
tweet timeline generation \cite{yao2016tweet},
%hyperspectral band selection~\cite{yuan2017discovering}, 
%neural network learning~\cite{zhang2017determinantal}, 
%diverse image annotation~\cite{wu2017diverse} 
object detection~\cite{lee2016individualness},
and other applications. All of the aforementioned applications involve MAP decoding.

Here we use our results on spectral spanners to construct an almost optimal composable core-set for MAP problem. Before mentioning our result let us briefly discuss relevant previous work on this problem.
The MAP problem is hard to approximate up to a factor of $2^{-ck}$ for some constant $c>0$, unless P$=$NP. \cite{cm-smvsm-09, civril2013exponential}. This lower bound was matched qualitatively by a recent paper of~\cite{nikolov2015randomized}, who gave an algorithm with  $e^k$-approximation guarantee.
%, the authors showed that 
%. The latter paper also analyzed the natural greedy  algorithm for this problem, showing that it guarantees an approximation ratio of $k!$. \footnote{Although the ratio is exponential in $k$, the greedy algorithm and its variants perform well in practice, see e.g., Section 4.2.1 in~\cite{kulesza2012determinantal}.} In a followup paper~\cite{civril2013exponential}, the authors showed that the problem is hard to approximate up to a factor of $2^{-ck}$ for some constant $c>0$, unless P$=$NP. This lower bound was matched qualitatively by a recent paper of~\cite{nikolov2015randomized}, who gave a different algorithm with an  $e^k$-approximation guarantee.
Since the data sets in the aforementioned applications can be large, there has been a considerable effort on developing efficient algorithms in distributed, streaming or parallel models of computation~\cite{mirzasoleiman2017streaming,wei2014fast,pan2014parallel,mirzasoleiman2013distributed,mirzasoleiman2015distributed,mirrokni2015randomized,barbosa2015power}.  All of these algorithms relied on the fact that  the logarithm of the volume is a submodular function, which makes it possible to obtain multiplicative factor approximation algorithms (assuming some lower bound on the volume, as otherwise the logarithm of the volume can be negative). See Section~\ref{ss:related} for an overview. 
However, this generality comes at a price, as multiplicative approximation guarantees for the logarithm of the volume translates into "exponential" guarantees for the volume, and necessitates the aforementioned extra lower bound assumptions.  As a result, to the best of our knowledge, no multiplicative approximation factor algorithms were known before for this problem,  for streaming, distributed or parallel models of computation. 

 %In this paper we address this issue by developing a versatile method for designing such algorithm via {\em core-sets}. 
%   Most of the aforementioned distributed and streaming algorithms for submodular maximization use core-set-like techniques, either explicitly or implicitly.

%\paragraph{Our results}
 In this paper we present 
the first (composable) core-set construction for the determinant maximization problem. Our main contributions are:
\begin{restatable}{theorem}{mainthmupperbound}
	\label{thm:main_upper_bound}
	There exists a polynomial time algorithm for computing an $\tilde O(k)^k$-composable core-set of size $\tilde O(k)$, for the $k$-determinant maximization problem.
\end{restatable}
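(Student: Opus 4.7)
The plan is to use a weak $k$-spectral spanner (from \autoref{thm:mainspannerfork}) of each input set $V_i$ as the core-set $c(V_i)$, and to argue that this purely structural object, computed without reference to the determinant objective, automatically serves as a composable core-set with the claimed parameters. Concretely, let $V = V_1 \cup \cdots \cup V_p$, let $S^* = \{v_1^*,\dots,v_k^*\}$ be an optimal $k$-subset with each $v_j^* \in V_{i_j}$, and let $C = c(V_1) \cup \cdots \cup c(V_p)$. The spanner property provides, for each $j$, a distribution $\mu_j$ supported on $c(V_{i_j})$ with $v_j^*(v_j^*)^T \preceq \alpha\, \mE_{u \sim \mu_j} uu^T$, where $\alpha = O(k^2 \log^3 k)$ and $|c(V_i)| = O(k \log^2 k)$.

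The heart of the reduction is a multilinearity observation. Writing the squared volume as $\det_k(\sum_j w_j w_j^T) = \|w_1 \wedge \cdots \wedge w_k\|^2$, for each $j$ this quantity is, as a function of $w_j$ alone, a PSD quadratic form $w_j^T M_j w_j$ with $M_j$ depending only on $\{w_i\}_{i \neq j}$. The operator inequality $v_j^*(v_j^*)^T \preceq \alpha\, \mE_{u \sim \mu_j} uu^T$ therefore yields the scalar bound $(v_j^*)^T M_j v_j^* \leq \alpha\, \mE_{u \sim \mu_j}[u^T M_j u]$ for any such $M_j$. Replacing $v_j^*$ by an independent sample $u_j \sim \mu_j$ one index at a time and using iterated expectations gives
\[ \det_k\!\Bigl(\sum_{j=1}^k v_j^*(v_j^*)^T\Bigr) \;\leq\; \alpha^k\, \mE_{u_1,\dots,u_k}\!\Bigl[\det_k\!\Bigl(\sum_{j=1}^k u_j u_j^T\Bigr)\Bigr]. \]

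To extract a deterministic subset of $C$ from the right-hand side, I would observe that the integrand is nonnegative, vanishes whenever $(u_1,\dots,u_k)$ has fewer than $k$ distinct entries (since then $\rank \sum_j u_j u_j^T < k$), and otherwise equals $\det_k(\sum_{v \in S} vv^T)$ for the underlying set $S \subseteq C$ of size $k$. Hence the expectation is bounded above by $\max_{S \subseteq C,\,|S|=k} \det_k(\sum_{v \in S} vv^T)$, yielding the composable core-set guarantee with ratio $\alpha^k = O(k^2 \log^3 k)^k$ and size $O(k \log^2 k)$. The multilinear reduction itself is clean; the real difficulty is packaged inside \autoref{thm:mainspannerfork}, namely constructing in polynomial time a weak $k$-spectral spanner of the claimed size with spectral approximation $O(k^2 \log^3 k)$. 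That construction---together with the rank-stability lemma for diagonally dominant lower-triangular matrices mentioned in the abstract---is the technical core of the paper, which I would invoke here as a black box.
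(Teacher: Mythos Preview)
Your approach is essentially correct and tracks the paper's proof closely, but there is one gap you should close. You invoke the full Loewner inequality $v_j^*(v_j^*)^T \preceq \alpha\,\mE_{u\sim\mu_j} uu^T$, which would indeed give $(v_j^*)^T M_j v_j^* \le \alpha\,\mE[u^T M_j u]$ for every PSD $M_j$. But \autoref{thm:mainspannerfork} only supplies the weaker relation $v_j^*(v_j^*)^T \preceq_k \alpha\,\mE_{u\sim\mu_j} uu^T$, i.e.\ domination after pairing with projections onto $(d-k+1)$-dimensional subspaces. For $k<d$ a spanner of size $O(k\log^2 k)$ cannot possibly give full $\preceq$: any $v$ outside the span of the core-set would violate it. The argument is rescued by observing that your $M_j$ has exactly the right form: up to a scalar it is $\Pi_{\langle w_i:i\ne j\rangle^\perp}$, a projection onto a subspace of dimension $d-(k-1)=d-k+1$ whenever the other $k-1$ vectors are independent (and when they are not, the volume vanishes and the inequality is trivial). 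So $\preceq_k$ is precisely what is needed. This is exactly the content of the paper's vector $k$-monotonicity lemma for $-\det_k$. Two minor points: \autoref{thm:mainspannerfork} gives $\alpha=O(k\log^3 k)$, not $O(k^2\log^3 k)$; and the paper reserves ``weak spectral spanner'' for a different object (one distribution per \emph{direction} $x$), so what you want here is the (strong) spectral $k$-spanner.

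Once patched, your route is a mild streamlining of the paper's. The paper packages the one-vector replacement step as vector $k$-monotonicity of $-\det_k^{1/k}$, feeds it into the general Proposition~\ref{thm:mainoptdesign} that turns spectral spanners into core-sets for the fractional relaxation \ref{CP:Application}, and then invokes Nikolov's rounding (Theorem~\ref{thm:offlineversion}) to pass to the integral problem, picking up an extra $e^k$. You bypass the fractional relaxation entirely: by sampling one $u_j$ per optimal vector and bounding the resulting expectation pointwise by the maximum over $k$-subsets of $C$, you go straight from the spanner inequality to the integral core-set bound. This is cleaner for determinant maximization specifically; the paper's detour through \ref{CP:Application} buys generality, since the same framework handles the other experimental-design objectives in Section~\ref{sec:experimentaldesign}.
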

Let us discuss the proof of the above theorem for the case $k=d$ using our main theorem \ref{thm:main-spectralspanner}.
Given sets of vectors $V_1,\dots, V_m$ let $U_1,\dots,U_m$ be their $\tilde O(d)$-spectral spanners respectively. Let 
$$S=\argmax_{S\subseteq \cup_i V_i: |S|=d} \Pr(S).$$
%Let us briefly discuss how spectral spanners can be used in action. Let $V\subset \Re^d$ be an arbitrary set of vectors and $U\subseteq V$ be an $\alpha$-spectral spanner for some $\alpha>1$. Then, for any set of nonnegative coefficients $\{s_v}_{v\in V}$ we have
Consider the matrix  $A=\sum_{v\in S}  \mE_{u\sim\mu_v} uu^T$, that is we substitute each vector $v$ in $S$ by a convex combination of the vectors in the spectral spanner(s). Then, by definition of spectral spanner,
$$ \frac1{\alpha} \sum_{v\in S} vv^T \preceq A.$$
Since determinant is a monotone function with respect to the Loewner order of PSD matrices,
$$ \frac1{\alpha^d} \Pr(S) =\det\left(\frac1\alpha\sum_{v\in S} vv^T\right) \leq \det(A). $$
The matrix $A$ can be seen as a fractional solution to the determinant maximization problem. In fact \cite{nikolov2015randomized} showed that $A$ can be rounded to a set $T$ of size $|T|=d$ such that $\det(A)\leq e^d \det(T)$. Therefore, we obtain an $(e\alpha)^d$ approximation for determinant maximization (see \autoref{sec:detmaximization} for more details).

The technique that we discussed above can be applied to many optimization problems. In general, if instead of the determinant, we wanted to maximize any function $f:\Re^{d\times d}\to\Re_+$, that is monotone on the Loewner order of PSD matrices, we can use the above approach to construct a fractional solution $A$ supported on the spectral spanners such that $f(A)$ is at least the optimum (up to a loss that depends on $\alpha$). Then, we can use randomized rounding ideas to round the the matrix $A$ to an integral solution of $f$. See \autoref{sec:experimentaldesign} for further examples.

We complement the above theorem by showing the above guarantee is essentially the best possible.
\begin{theorem}\label{thm:main_lower_bound}
Any composable core-sets of size at most  $k^{\beta}$ for the $k$-determinant maximization problem must incur an approximation factor of at least $(\frac{k}{\beta})^{k(1-o(1))}$, for any $\beta \ge 1$.
\end{theorem}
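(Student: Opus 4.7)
My plan is to prove the lower bound via Yao's minimax principle: fix any deterministic core-set function $c$ with $|c(V)|\le k^\beta$, construct a distribution over instances $(V_1,\dots,V_p)$ on which $c$ has expected approximation ratio at least $(k/\beta)^{k(1-o(1))}$, and then conclude by averaging that some deterministic instance witnesses this ratio in the worst case. The randomness is used only in the analysis; the resulting lower bound is deterministic.

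The hard instance I would construct plants a hidden orthonormal $k$-frame $\{f_1,\dots,f_k\}$ of $\Re^k$ across $k$ distinguished partitions, buried among ``decoy'' vectors whose distribution makes the planted special statistically indistinguishable from a decoy inside a single partition. Concretely, pick a uniformly random orthonormal basis $f_1,\dots,f_k$ of $\Re^k$, fix a size parameter $m=\poly(k)\gg k^\beta$, and for each $i\in[k]$ form $V_i$ by placing $f_i$ at a uniformly random position among $m-1$ samples from a carefully designed decoy distribution $\mu_i$. The global optimum over $\bigcup_i V_i$ is $\Omega(1)$ because $\{f_1,\dots,f_k\}$ is orthonormal.

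Because, by construction, $f_i$ and a $\mu_i$-sample are identically distributed inside a single $V_i$, the multiset $V_i$ is exchangeable, so any deterministic $c$ captures $f_i$ in $c(V_i)$ with marginal probability exactly $k^\beta/m$, independently across the distinguished partitions. Taking $m$ polynomially larger than $k^\beta$ drives the expected number of captured specials to $o(1)$, so with overwhelming probability $\bigcup_i c(V_i)$ consists entirely of decoys. It then remains to upper-bound the best $k$-subdeterminant among these at most $p k^\beta$ decoys by $(\beta/k)^{k(1-o(1))}$; dividing by the $\Omega(1)$ optimum yields the claimed approximation factor.

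The principal obstacle is the joint design of the decoy distributions $\mu_i$, which must simultaneously (i) be marginally identical to $f_i$ inside a single partition, so that the symmetrization over positions goes through, and (ii) be coordinated across partitions so that no $k$-subset drawn from the union of at most $pk^\beta$ decoys attains a near-orthonormal configuration. Purely i.i.d.\ isotropic decoys already fail requirement (ii): the best $k$-subset determinant among polynomially many i.i.d.\ unit vectors in $\Re^k$ can approach $\Omega(1)$, far larger than the target $(\beta/k)^{k(1-o(1))}$. The natural remedy is to support $\mu_i$ on a structured lower-dimensional subspace of $\Re^k$ chosen to contain $f_i$ (so that $\mu_i$ can still sample $f_i$ with the correct density) but correlated across $i$ so that any $k$ decoys drawn across partitions are forced close to a degenerate configuration, with the resulting determinant bounded via a covering/volume argument on the possible subspaces. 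Pinning down the precise decoy distribution and matching the exponent at $(k/\beta)^{k(1-o(1))}$---rather than a weaker $(C/\beta)^{O(k)}$ bound---is where the bulk of the technical effort lies.
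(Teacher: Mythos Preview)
Your high-level plan---Yao's principle, plant one hidden ``special'' vector per partition, argue by symmetry that a size-$k^\beta$ core-set misses the specials with good probability, then bound the best determinant among the surviving decoys---is exactly the paper's. You have correctly isolated the crux (designing the decoys) but left it open, and the i.i.d.\ requirement you place on the decoys is what creates the tension between your conditions (i) and (ii).

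The paper resolves this without i.i.d.\ decoys. Working in $\Re^k$, it uses only $k-m$ ``data'' partitions (with $m=k/\log k$) together with $m$ singleton partitions $Y_j=\{Me_j\}$ for an enormous scalar $M$; the $Y_j$'s force any near-optimal solution to include $Me_1,\dots,Me_m$, so the remaining $k-m$ chosen vectors can contribute only through their projection onto $\langle e_{m+1},\dots,e_k\rangle$. Each data partition $X_i$ is a copy of a \emph{fixed} set $G$ of $k^{\beta+2}$ near-orthogonal unit vectors (pairwise inner products $O(\sqrt{\beta}\log k/\sqrt{k})$), embedded into $\langle e_1,\dots,e_m,e_{m+i}\rangle$ and rotated inside that subspace so that a uniformly random element of $G$ lands on $e_{m+i}$; a final global Haar rotation hides absolute coordinates. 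By right-invariance of Haar measure the observed set has the same distribution regardless of which element is special, so the core-set captures $e_{m+i}$ with probability at most $k^\beta/|G|=k^{-2}$, and a union bound (independence is neither needed nor available) shows all specials are missed with probability $1-o(1)$. Every surviving decoy then has squared projection at most $O(\beta\log^2 k/k)$ onto $\langle e_{m+1},\dots,e_k\rangle$, and Hadamard's inequality bounds the core-set's determinant by $M^{2m}\cdot O(\beta\log^2 k/k)^{k-m}$ against the optimum $M^{2m}$, yielding the ratio $(k/\beta)^{k(1-o(1))}$. The ideas you are missing are: the shared $m$-dimensional ``confuser'' subspace with one private direction per partition; the forced singleton partitions that burn the confuser directions; and achieving exchangeability via a fixed near-orthogonal set with a randomly relabeled special plus a global rotation, rather than via i.i.d.\ sampling.
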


\iffalse
\begin{itemize}
\item A construction of core-sets of size $k$ which guarantees an approximation factor of $O(k^2\log k^2)^k$
\item A lower bound showing that any core-sets of size at most  $k^{\beta}$  must incur an approximation factor of at least $(\frac{k}{\beta})^{k(1-o(1))}$, for any $\beta \ge 1$.
\end{itemize}
\fi

%Thus, we establish tight (up to a $\log(k)^{{\color{red}O(k)}}$-error)  upper and lower bounds for the approximation factor of core-sets of size polynomial in $k$. 
Note that our lower bound of $(\frac{k}{\beta})^{k(1-o(1))}$  for the approximation factor achievable by composable core-sets is substantially higher than the approximation factor  $e^k$ of the best off-line algorithm, demonstrating a large gap between these two models.

% Our results are obtained via a combination of geometric and spectral techniques. The algorithm for constructing core-sets uses a local search approach to construct a small subset of each input set which approximately preserves  its ``extremal points''. The algorithm is quite simple and easy to implement. On the other hand, the analysis of  approximation ratio of the constructed core-sets uses spectral techniques. Specifically, we show that the aforementioned ``extremal points'' property suffices to approximately simulate the off-line  convex program  in~\cite{nikolov2015randomized}. The argument involves the notion of a {\em spectral  spanner}, a generalization of combinatorial graph spanners, that could be of independent interest. We believe that this constitutes the first application of spectral techniques to this class of problems. 

\subsection{Overview of the Techniques}\label{sec:techniques}
In this part, we give a high level overview of the proof
of \autoref{thm:mainspannerfork}.
Our proof has two steps: First, we solve the ``full dimensional version of the problem, i.e., we construct an $\tilde{O}(d)$-spectral $d$-spanner of size $\tilde{O}(d)$ for a given set of vectors in $\Re^d$ as promised in \autoref{thm:main-spectralspanner}. Then, we reduce the ``low dimensional'' version of the problem, i.e., finding $k$-spanners for $k<d$, to the full dimensional version in a $\tilde{O}(k)$-dimensional space.

\paragraph{Step 1: } Our high-level plan is  to  ``augment'' the classical greedy algorithm for finding combinatorial spanners in graphs to the spectral setting.
First, we rewrite the combinatorial algorithm in spectral language.

Let $G$ be  a graph with vertex set $V(G)$ and edge set $E(G)$. Recall that  for any edge $e=\{u,v\} \in E(G)$  $b_e =e_u-e_v$.  As alluded to in the introduction, if $H$ is an $\alpha$-combinatorial spanner of $G$, then $U=\{b_e\}_{e\in E(H)}$ is an $\alpha^2$-spectral spanner  of $\{b_e\}_{e\in E(G)}$.
%\begin{fact}
%For any $\alpha$-combinatorial spanner $H$ of $G$, an $(\alpha^2,|V(G)|)$-spectral spanner of $\{b_e | e\in E(G) \}$ indicated by $\{b_e | e\in E(H)\}$, where $b_e$ vectors are defined as the above. 
%\end{fact}
%\begin{proof}
%For an edge $ e =uv \in E(G) \setminus E(H)$, let $e_1,\ldots,e_\alpha$ be the edges of a path between $u$ and $v$ where the endpoints of $e_i$ are $u_i$ and $u_{i+1}$ ($u_1=u$ and $u_{\alpha+1}=v$). Then by Cauchy-Schwarz inequality for any $x \in \Re^{V(G)}$,  we have 
%$$x^Tb_eb_e^Tx = (x(u)-x(v))^2 =(\sum_{i=1}^{\alpha} [x(u_i)-x(u_{i+1}])^2\leq \alpha\cdot \sum_{i=1}^\alpha (x(u_i)-x(v_i))^2= \alpha^2 \cdot x^T \left(\EE{e_i \sim \mu}{b_{e_i}b_{e_i}^T}\right)x,$$
%where $\mu$ is the uniform distribution over $e_1\ldots,e_\alpha$.
%\end{proof}
%Considering this connection, our algorithm can be viewed as an extension of  the following simple greedy algorithm for finding spanners in graphs; 
The following algorithm gives an $\alpha$-combinatorial spanner with $n^{1+O(1)/\alpha}$ edges: Start with an empty graph $H$. While there is an edge $f=\{u,v\}$ in $G$ where $\dist_H(u,v)>\alpha$, add it to $H$. One can observe that $\dist_H(u,v)>\alpha$ iff, for any distribution $\mu$ on $E(H)$, $b_fb_f^T \not\preceq\alpha^2 \mE_{e\sim \mu} b_eb_e^T$.
 
% Next, we  
%rephrase this algorithm in the spectral setting. Let $H$ be the spanner up to some step of the algorithm, 
%and let $f=\{u,v\}$ be the next edge that we are going to add to $H$, i.e., the shortest path between $u,v$ is larger than $\alpha$, $d_H(u,v)>\alpha$. %and let $E$ denotes the  set of chosen edges before $\tilde{e}$. 
%\begin{equation}
%It is straight-forward to see that  this implies  for any distribution $\mu$ on $E(H)$, $b_fb_f^T \not\preceq\alpha^2 \mE_{e\sim \mu} b_eb_e^T$.
%\label{eq:combspannertospectral}
%b_{\tilde{e}}b_{\tilde{e}}^T \preceq \sum_{e \in E(H)} s_eb_eb_e^T,
%\end{equation}
%then $\sum_{e \in E(H)} s_e > \alpha^2$. 

%For the sake of contradiction suppose there is a distribution $\mu$ such that $b_fb_f^T \preceq \alpha^2 \mE_{e\sim\mu} b_eb_e^T$. Let  $x \in \Re^{V(G)}$ be the vector where for every vertex $w$, $x(w)=d_H(u,w)$ is the shortest path distance between $u$ and $w$ in $H$. %Combining this with \eqref{eq:combspannertospectral} implies 
%Then,
%$$ \alpha^2< \dist_H(u,v)^2= x^T b_f b_f^Tx  \leq  \alpha^2 \cdot\mE_{e\sim\mu} \langle x,b_e\rangle^2 \leq \alpha^2,$$
%implying $\sum_{e \in E(H)} s_e > \alpha^2$ as $d_H(u,w) > \alpha$.
%which is a contradiction.
%It implies the algorithm adds an edge $f$ to the spanner whenever for any probability distribution $\mu$ on $E(H)$ we have
%$b_f b_f^T \not\preceq \alpha^2 \cdot \EE{e \sim \mu}{b_e b_e^T}$. 

This observation suggests a natural algorithm  in the spectral setting: 
At each step find a vector $v\in V$ such that for all $\mu$ supported on the set of vectors already chosen in the spanner, $vv^T\not\preceq \alpha\cdot \mE_{u\sim \mu} uu^T$, and add it to the spanner. We can implement such an algorithm in polynomial time, but we cannot directly bound the size of the spectral spanner that such an algorithm constructs using our current techniques.

So, we take a detour. First, we solve a seemingly easier problem by changing the 
order of quantifiers in the definition of the spectral spanner. 
%\begin{definition}[Weak Spectral Spanners]
For $V\subseteq \Re^d$, a subset $U \subseteq V$ is a \wspa{$\alpha$}\ of $V$, if for all $v\in V$ and $x\in \Re^d$ there is a distribution $\mu_{v,x}$ on $U$ such that
$$ \langle v,x\rangle^2 \leq \alpha\cdot \mE_{u\sim\mu_{v,x}} \langle u,x\rangle^2 \hspace{0.5cm} \text{equiv} \hspace{0.5cm}
 \langle v,x\rangle^2 \leq \alpha \cdot \max_{u\in U}\langle u,x\rangle^2.$$
%\end{definition}
To find a \wsp, we use the analogue of the greedy algorithm: Let $U$ be the set of vectors already chosen; while there is a vector $v\in V$ and $x\in \Re^d$ such that $\langle x,v\rangle^2 > \alpha\cdot\max_{u\in U} \langle u,x\rangle^2$ we add $\argmax_{v} \langle x,v\rangle^2$ to $U$.

We prove that for $\alpha=\tilde{O}(d)$ the above algorithm stops in $\tilde{O}(d)$ steps. Suppose  that the algorithm finds vectors $u_1,\dots,u_m$ together with corresponding ``bad'' directions $x_1,\dots,x_m$, where $x_i$ being a bad direction for $u_i$ means that \begin{equation}\label{eq:tunealpha}\langle u_i,x_i\rangle^2 >\alpha \langle u_j,x_i\rangle^2, \forall 1\leq i\leq m, \forall 1\leq j<i.	
 \end{equation}
We need to show that $m=\tilde{O}(d)$. We consider the matrix $M\in\Re^{m\times m}$ where $M_{i,j}=\langle u_i,x_j\rangle$. By the above constraints $M$ is diagonally dominant and approximately lower triangular matrix.  But since $M$ has rank at most $d$ as it is the inner product matrix of vectors lying in $\Re^d$, we conclude that $m=\tilde{O}(d)$. 
Note that in the extreme case, where $M$ is truly lower triangular the latter fact obviously holds because then $\rank(M)=m$. 
As a side result, we also show that the rank of lower triangular matrices is robust under small perturbations, (see  \autoref{lem:approxlowtri}). 

The above argument shows that the spectral greedy algorithm gives a \wsp \  for $\alpha=\tilde{O}(d)$ of size $\tilde{O}(d)$. To finish the proof of \autoref{thm:main-spectralspanner} we need to find a (strong) $\alpha$-spectral spanner from our weak spanner. We use a duality argument to show that any \wsp\ is indeed an $\alpha$-spectral spanner. Let $U$ be a \wsp. To verify that $U$ is an $\alpha$-spectral spanner, we need to find a distribution $\mu_v$ for any $v\in V$ supported on $U$ such that $vv^T\preceq \alpha\cdot\mE_{u\sim\mu_v} uu^T$.  We can find the best distribution $\mu_v$ using an SDP with variables $p_u$ for all $u\in U$ denoting $\Pr_{\mu_v}(u)$. Instead of directly bounding the primal, we write down the dual of the SDP and use hyperplane separating theorem to show that indeed such a distribution exists.

It was pointed to us by an anonymous reviewer that one can use approximate John's ellipsoid \cite{ball1997elementary} to find an $O(d)$-weak-spectral $d$-spanner of size $\tilde{O}(d)$. This improves the guarantees of our algorithm by a $\log d$ factor. We discuss the details  at the end of Section \ref{subsection:analdirheight}. Let us briefly mention the advantages of our algorithm over the John's ellipsoid method: First, in finding the weak spanner one can tune the value of $\alpha$ in \eqref{eq:tunealpha} based on the structure of the given data points and the ideal size of the core-set. We also expect that in many real world applications, one can use our algorithm to obtain $\text{polylog}(d)$-spectral $d$-spanners of size $\tilde{O}(d)$. Secondly, to implement our algorithm we only need to solve linear programs with $O(d)$ many variables. This requires polynomially smaller amount of memory compared to the SDP solvers one needs to use to solve the John's ellipsoid. Lastly, our algorithm is easier to parallelize.
% having access to $O(n)$ many processors, our greedy algorithm runs in $\text{poly}(d)$ time in PRAM model of computation.

\paragraph{Step 2: }
To reduce the $k$-spanner problem to the ``full dimensional'' case, we use the greedy algorithm of \cite{cm-smvsm-09}  to 
 find a set of vectors $W \subset V$ of size  $\tilde{O}(k)$ such that for any $v \in V$,
 \begin{equation}
     \label{eq:kspanfirst}
     v_{\langle W \rangle^\perp}v_{\langle W \rangle^\perp}^\intercal \preceq_k O(1)\cdot \EE{w}{ww^\intercal}
 \end{equation}
 where  the expectation is over the uniform distribution on $W$, and  $v_{\langle W \rangle^\perp}$ represents the projection of $v$ onto the space orthogonal to the linear subspace spanned by $W$.
Then, we project all vectors in $V$ onto the space $\langle W \rangle$, and we solve the full dimensional version, i.e.,  we find $U \subseteq V$ of size $\tilde{O}(|W|)$ such that for any $v \in V$, there exists a distribution $\mu_v$ supported on $U$ which satisfies
\begin{equation}
\label{eq:kspansec}
    v_{\langle W \rangle} v_{\langle W \rangle}^\intercal \preceq \tilde{O}(k)\cdot \EE{u \sim \mu_v}{u_{\langle W\rangle} u_{\langle W \rangle}^\intercal}.
\end{equation}
Ideally on the RHS of the above, we need to have $uu^\intercal$ instead of $u_{\langle W\rangle} u_{\langle W \rangle}^\intercal$ which can be achieved  by incurring an extra constant factor by applying \eqref{eq:kspanfirst}. It is not hard to see from the above two equations that $U\cup W$ is an $\tilde{O}(k)$-spectral $k$-spanner for $V$.
%To prove the theorem, we combine these two steps with the following simple facts: $\preceq_k$ is a weaker notion than $\preceq$, meaning  $A \preceq B$ implies $A \preceq_k B$. Moreover, for any vector $v$ and any linear subspace $H$, $vv^\intercal \preceq 2(v_{H}v_{H}^\intercal+v_{H^\perp}v_{H^\perp}^\intercal )$. Putting all together, we can conclude $U \cup W$ is the $k$-spanner asked by \autoref{thm:mainspannerfork}.

It remains to find the set $W\subseteq V$ satisfying \eqref{eq:kspanfirst}.
We use the following algorithm: Let 
$W=\emptyset$. For $i=1,\dots,\tilde{O}(k)$, add $\argmax_{v\in V} 
\norm{v_{\langle U \rangle^\perp}}$ to $W$. Intuitively, we greedily choose a set of vectors of size $\tilde{O}(k)$ to minimize the projection of the remaining vectors in the orthogonal space of $\langle W\rangle$. 

To prove \eqref{eq:kspanfirst}, we need to show 
$$v_{\langle W \rangle^\perp} v_{\langle W\rangle 
^\perp}^\intercal \preceq_k O(1) \cdot 
\EE{w\sim\mu}{ ww^\intercal}.$$
Equivalently, after choosing the worst projection matrix onto a $d-k+1$ linear subspace, it is enough to show 
%Let  $\Pi\in\Pi_{d-k+1}$ be an arbitrary matrix representing  projection onto a $(d-k+1)$-dimensional space. We need to prove the above inequality for the inner product of both sides with $\Pi$. For the LHS, clearly we have  $\langle v_{\langle W \rangle^\perp} v_{\langle W\rangle^\perp}^\intercal,\Pi \rangle \leq \norm{v_{\langle W \rangle^\perp} v_{\langle W\rangle 
%^\perp}^\intercal}^2$. On the other hand, one can use basic linear algebra to show for any matrix $M\in \Re^{d \times d}$, $\langle M,\Pi \rangle \geq \sum_{i=k}^{d} \lambda_i(M)$ where $\lambda_1(M) \geq \dots,\lambda_d(M)$ are eigenvalues of $M$. Therefore, it is sufficient to show  
\begin{equation}
\label{eq:greedyvolgoal}
\norm{v_{\langle W \rangle^\perp}}^2 \leq O(1) \cdot \sum_{i=k}^d \lambda_i(\EE{w \sim \mu}{ww^\intercal}).
\end{equation}
 To prove the above inequality, we use properties of the greedy algorithm to study singular values of the matrix obtained by applying the Gram-Schmidt process on the vectors in $W$. 

\paragraph{Lower bounds.} 
As we discussed in the intro, it is not hard to prove that the guarantee of \autoref{thm:main-spectralspanner} is tight in the worst case. However, one might wonder if it is possible to design better composable core-sets for determinant maximization and related spectral problems. We show that for many such problems we obtain the best possible composable core-set in the worst case. Let us discuss the main ideas of \autoref{thm:main_lower_bound}.

We consider the case $k=d$ for simplicity of exposition. 
For a set $V\subseteq \Re^{d}$ and a linear transformation $Q\in\Re^{d\times d}$, define $QV=\{Qv\}_{v\in V}$.
%First, observe that it is easy to construct a $d$-dimensional point set $V$ such that for any proper subset $C \subset P$,  the approximation factor $\alpha$ in Equation \ref{e:height} is at least $d^{1-o(1)}$. 
Choose  a set $V\subseteq \Re^d$ of unit vectors such that for any distinct $u,v \in V$, $ \langle u,v \rangle^2  \le 1/d^{1-o(1)}$. This can be achieved with high probability by selecting points in $V$ independently and uniformly at random from the unit sphere. Recall that the set $V$ can have exponentially (in $d$) large number of vectors.
%The lower bound for $\alpha$ is realized by any $x \in P-C$.
%The lower bound for determinant maximization can be now shown by essentially replicating the above instance $P$ a number of times. Specifically, 
Consider sets $A_1, \dots, A_{d}$ and $B_1, \dots, B_{d-1}$ in a $(2d-1)$-dimensional space such that: 
\begin{itemize}
\item For each $1\leq i\leq d$, let $A_i=R_iV$ where $R_i$ is a rotation matrix which maps
%$A_i$ contains a copy of $V$, but with 
$\Re^d$ to $\langle e_1,e_2, \dots e_{d-1}, e_{(d-1)+i}\rangle $ and it maps a uniformly randomly chosen vector of $V$ to $e_{(d-1)+i}$.
\item For each $1\leq i\leq d-1$, $B_i = \{M e_i\}$, where $M$ is a ``large'' number.
\end{itemize}
Our instance of determinant maximization is  simply $QA_1,\dots,QA_d, QB_1,\dots,QB_{d-1}$ for a random rotation matrix $Q\in\Re^{(2d-1)\times(2d-1)}$. 

Observe that the optimal set of $2d-1$ vectors contains $Qe_{(d-1)+i}$'s from $QA_i$'s and $QMe_{i}$'s from $B_i$'s, and has value equal to $(M^{d-1})^2$. However, since $Q$ is a random rotation, the core-set function cannot determine which vector in $QA_i$  was aligned with $Qe_{(d-1)+i}$. Recall that the core-set algorithm must find a core-set of $A_i$ by only observing the vectors in $A_i$.  Thus unless core-sets are exponentially large in $d$, there is a good probability  that, for all $i$,  the core-set for  $QA_i$ does not contain $Qe_{(d-1)+i}$. For a sufficiently large $M$, all vectors $QMe_i$ from $QB_i$ must be included in any solution with a non-trivial approximation factor. 
It follows that, with a constant probability,  any core-set-induced solution is sub-optimal by at least a factor of $d^{\Theta(d)}$.

\paragraph{Paper organization.} 
In \autoref{sec:spectralspannerdef}, we formally define spectral $(d)$-spanners and their generalization ``spectral $k$-spanners''. 
In \autoref{sec:spectralspannerford}, we introduce our  
algorithm for finding spectral spanners and prove 
\autoref{thm:main-spectralspanner}. Then, in 
\autoref{sec:spannerk}, we generalize the result of \autoref{thm:main-spectralspanner} for spectral $k$-spanners  and show the reduction from $k<d$ to the full dimensional case of $k=d$, proving 
\autoref{thm:mainspannerfork}. We mention applications of spectral spanners for designing composable core-sets for several optimization problems including $k$-determinant maximization in \autoref{sec:app}. In particular we prove \autoref{thm:main_upper_bound}. 
Finally in \autoref{sec:lower_bound}, we present our lower bound results and prove \autoref{thm:main_lower_bound}.

\subsection{Related work}
\label{ss:related}
As mentioned earlier, multiple papers developed composable core-sets (or similar constructions) when the objective function is equal to the {\em logarithm} of the volume. In particular, \cite{mirzasoleiman2013distributed} showed that core-sets obtained via a greedy algorithm guarantee an  approximation factor of $\min(k,n)$. The approximation ratio can be further improved to a constant if the input points are assigned to set $V_i$ uniformly at random \cite{mirrokni2015randomized,barbosa2015power}. However, these guarantees do not translate into a multiplicative approximation factor for the volume objective function.\footnote{It is possible to show that the greedy method achieves composable core-sets with multiplicative approximation factor of $2^{O(k^2)}$. Since this bound is substantially larger than our bound  obtained by spectral spanners, we do not include the proof in this paper.}

Core-sets constructions are known for a wide variety of geometric and metric problems, and several algorithms have found practical applications.  Some of those constructions are relevant in our context. In particular, core-sets for approximating the directional width \cite{AHV-aemp-04} have functionality that is similar to weak spanners. However, the aforementioned paper considered this problem for low-dimensional points, and as a result, the core-sets size was exponential in the dimension. Another line of research~\cite{abbar2013diverse,IMMM-ccdcm-14,ceccarello2017mapreduce} considered core-sets for maximizing  {\em metric} diversity measures, such as the minimum inter-point distance. Those measures rely only on relationships between pairs of points, and thus have quite different properties from the volume-induced measure considered in this paper. 

We also remark that one can consider generalizations of our problem to settings were we want to maximize the volume under additional constraints. Over the last few years several extensions were studied extensively and many new algorithmic ideas were developed \cite{NS16,AO17,SV17,ESV17}. In this paper, we study composable core-sets for the basic version of the determinant maximization problem where no additional constraints are present.

\section{Preliminaries}
\subsection{Linear Algebra}
\label{subsection:linalg}
Throughout the paper, all vectors that we consider are column based and sitting in $\Re^d$, unless otherwise 
specified. For a vector $v$, we use notation $v(i)$ to denote its $i_{\text{th}}$ coordinate and use $\norm{v}$ to denote its $\ell_2$ norm. Vectors $v_1,\ldots,v_k$ are called orthonormal if for any $i$, $\norm{v_i}=1$,  
and for any $i\neq j$, $\langle v_i,v_j\rangle=0$. For a set of vectors $V$, we let $\langle V\rangle$ denote the linear subspace spanned by vectors of $V$. We also use $S^\perp$ to denote the linear subspace orthogonal to $S$, for a linear subspace $S$.

Notation $\langle,\rangle$ is used to denote Frobenius inner 
product of matrices, for matrices $A,B\in \Re^{d\times d}$ 
 $$\langle A,B\rangle = \sum_{i=1}^d\sum_{j=1}^d A_{i,j}B_{i,j}=\Tr(AB^\intercal)$$ 
where $A_{i,j}$ denotes the entry of matrix $A$ in row $i$ and column $j$. 
Matrix $A$ is a Positive Semi-definite (PSD) matrix denoted by $A \succeq 0$ if 
it is symmetric and  for any vector $v$, we have $v^\intercal Av = \langle A,vv^\intercal 
\rangle \geq 0$. For PSD matrices $A,B$ we write
$A\preceq B$ if $B-A \succeq 0$. We also denote the set of $d\times d$ PSD matrices by $\mathcal{S}_d^+$.
\paragraph{Projection Matrices.} A matrix $\Pi \in \Re^{d\times d}$ is a projection matrix if  $\Pi^2=\Pi$.  It is also easy to see that for any $v\in \Re^d$, 
$$\langle vv^\intercal, \Pi \rangle = v^\intercal \Pi v= \langle \Pi v, \Pi v \rangle =\|\Pi v\|^2.$$
For a linear subspace $S$, we let $\Pi_S$ denote the matrix  projecting vectors from $\Re^d$ onto $S$ which means for any vector $v$, $(\Pi_S)v$ is the projection of $v$ onto $S$. If $S$ is $k$-dimensional and $v_1,\ldots,v_k$ form
an arbitrary orthonormal basis of $S$, then one can see that  $\Pi_S=\left( \sum_{i=1}^k 
v_iv_i^\intercal \right)$. We also represent the set of all projection matrices onto  $k$-dimensional subspaces  by $\mathcal{P}_k$.

\begin{fact}\label{fact:PiCauchy}
For any vectors $u,v\in\Re^d$ and any projection matrix $\Pi\in\Re^{d\times d}$
$$ \langle (u+v)(u+v)^\intercal,\Pi\rangle \leq 2 \langle uu^\intercal +vv^\intercal, \Pi\rangle.$$
\end{fact}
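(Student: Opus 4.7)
The plan is to rewrite both sides in terms of the projected vectors $\Pi u$ and $\Pi v$, and then reduce the claim to a standard parallelogram-type inequality in $\Re^d$. Using the identity $\langle ww^\intercal,\Pi\rangle = \|\Pi w\|^2$ noted just above the fact, together with linearity of $\Pi$, the left-hand side equals $\|\Pi u + \Pi v\|^2$ while the right-hand side equals $2(\|\Pi u\|^2 + \|\Pi v\|^2)$. So the statement collapses to the assertion that for any two vectors $a,b\in\Re^d$,
$$\|a+b\|^2 \leq 2(\|a\|^2+\|b\|^2),$$
applied to $a=\Pi u$, $b=\Pi v$.

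For this last step I would simply expand and use either Cauchy--Schwarz or the nonnegativity of $\|a-b\|^2$. Expanding gives $\|a+b\|^2 = \|a\|^2+2\langle a,b\rangle+\|b\|^2$, and from $0\le\|a-b\|^2$ we get $2\langle a,b\rangle\le \|a\|^2+\|b\|^2$, which plugged back in yields the desired bound. No step here is an obstacle; the only ``trick'' is recognizing that one may pull the projection inside, because $\Pi$ acts linearly and the Frobenius pairing with a rank-one matrix is exactly the squared norm of the projected vector. I would present the proof as a two-line chain of equalities followed by the one-line parallelogram inequality applied to $a=\Pi u,\,b=\Pi v$.
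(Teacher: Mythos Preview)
Your proposal is correct and matches the paper's proof essentially line for line: the paper also sets $a=\Pi u$, $b=\Pi v$, rewrites both sides as $\|a+b\|^2$ and $2(\|a\|^2+\|b\|^2)$, and finishes with the same Cauchy--Schwarz/AM--GM step $\langle a,b\rangle \le (\|a\|^2+\|b\|^2)/2$.
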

\begin{proof}
Let $a=\Pi u$ and $b=\Pi v$. Then since $\Pi^2=\Pi$, we have $\langle uu^\intercal,\Pi \rangle = \norm{a}^2$, $\langle vv^\intercal,\Pi \rangle = \norm{b}^2$, and $\langle (u+v)(u+v)^\intercal,\Pi \rangle =\norm{a+b}^2$. Now, the assertion is equivalent to
$$ \norm{a+b}^2 \leq 2(\norm{a}^2+\norm{b}^2)$$
which follows by Cauchy-Schwarz inequality,
$\langle a,b\rangle \leq \norm{a}\norm{b} \leq (\norm{a}^2+\norm{b}^2)/2$.
\end{proof}

\paragraph{Eigenvalues and Singular Values.} For a symmetric Matrix $A$, $\lambda_1(A)\geq \ldots \geq \lambda_d(A)$  denotes the eigenvalues of $A$. The following characterization of eigenvalues of symmetric matrices known as min-max or variational theorem is useful for us. 
\begin{theorem}[Min-max Characterization of Eigenvalues]
Let $A\in \Re^{d\times d}$ be a symmetric matrix with eigenvalues $\lambda_1 \geq \lambda_2 \geq \ldots\geq \lambda_d$. Then 
$$\lambda_k = \max \{ \min_{x \in U} \frac{x^\intercal Ax}{\norm{x}^2} \hspace{2mm} | \hspace{2mm} \,  U \text{ is a $k$-dimensional linear subspace}\},$$
or
$$\lambda_k = \min \{ \max_{x \in U} \frac{x^\intercal Ax}{\norm{x}^2} \hspace{2mm} | \hspace{2mm} \,  U \text{ is a $(d-k+1)$-dimensional linear subspace}\},$$
\end{theorem}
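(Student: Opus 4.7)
The plan is to reduce both characterizations to a direct analysis of the Rayleigh quotient $x^\intercal A x/\|x\|^2$ in a basis of eigenvectors. By the spectral theorem, since $A$ is symmetric, there is an orthonormal basis $v_1,\ldots,v_d$ of $\Re^d$ with $A v_i = \lambda_i v_i$. Writing any $x\in\Re^d$ as $x=\sum_i c_i v_i$ gives $\|x\|^2=\sum_i c_i^2$ and $x^\intercal A x=\sum_i \lambda_i c_i^2$, so the Rayleigh quotient is simply a convex combination of the eigenvalues with weights $c_i^2/\sum_j c_j^2$. This reduces every inequality to a statement about weighted averages.

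For the first (max-min) characterization I would prove matching bounds. For the lower bound, I would exhibit a witness subspace $U^\star=\langle v_1,\ldots,v_k\rangle$: any unit $x\in U^\star$ has $c_i=0$ for $i>k$, so $x^\intercal A x=\sum_{i\le k}\lambda_i c_i^2\ge \lambda_k$, giving $\min_{x\in U^\star} x^\intercal A x/\|x\|^2\ge\lambda_k$. For the upper bound, fix any $k$-dimensional subspace $U$ and set $W=\langle v_k,v_{k+1},\ldots,v_d\rangle$, which has dimension $d-k+1$. Since $\dim U+\dim W=d+1>d$, the intersection $U\cap W$ contains a nonzero vector $x$, whose expansion only uses $v_i$ with $i\ge k$, hence $x^\intercal A x/\|x\|^2\le\lambda_k$. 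This forces $\min_{x\in U} x^\intercal A x/\|x\|^2\le\lambda_k$ for every $U$, matching the witness and establishing equality.

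For the second (min-max) characterization I would argue symmetrically: the witness is $U^\star=\langle v_k,v_{k+1},\ldots,v_d\rangle$ (dimension $d-k+1$), on which the Rayleigh quotient is $\sum_{i\ge k}\lambda_i c_i^2/\sum_{i\ge k}c_i^2\le\lambda_k$; conversely, any subspace $U$ of dimension $d-k+1$ meets $\langle v_1,\ldots,v_k\rangle$ in a nonzero vector by the same dimension count, on which the Rayleigh quotient is $\ge\lambda_k$, forcing $\max_{x\in U}x^\intercal A x/\|x\|^2\ge\lambda_k$. Alternatively, one can derive the second formula from the first by applying it to $-A$, whose sorted eigenvalues are $-\lambda_d\ge\cdots\ge-\lambda_1$, with the index shift $k\mapsto d-k+1$.

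There is no real obstacle here; the only step that requires a moment of care is the non-triviality of $U\cap W$, which is the standard $\dim U+\dim W-d\ge 1$ dimension count. Everything else is a direct manipulation of weighted averages of eigenvalues in the spectral basis.
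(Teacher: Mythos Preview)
Your proof is correct and is the standard argument for the Courant--Fischer theorem. Note, however, that the paper does not actually prove this statement: it appears in the preliminaries as a known background result, stated without proof. So there is no ``paper's own proof'' to compare against; your argument is exactly the textbook one and nothing more is needed.
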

The following theorem known as Cauchy interlacing theorem shows the relation between eigenvalues of a symmetric matrix and eigenvalues of its submatrices.
\begin{theorem}[Cauchy Interlacing Theorem]
Let $A$ be a symmetric $d\times d$ matrix, and $B$ be an $m \times m$ principal submatrix of $A$. Then for any $1 \leq i \leq m$, we have 
$$ \lambda_{d-m+i}(A) \leq \lambda_i(B) \leq \lambda_i(A)$$
\end{theorem}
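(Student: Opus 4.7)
The plan is to derive the Cauchy Interlacing Theorem directly from the min-max variational characterization stated just above it, exploiting the fact that a principal $m\times m$ submatrix $B$ of $A$ corresponds to restricting quadratic forms $x^\intercal A x$ to a fixed $m$-dimensional coordinate subspace of $\Re^d$.

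First, I would set notation. Let $I\subseteq\{1,\dots,d\}$ be the index set of size $m$ picking out the principal submatrix $B$, and let $W=\langle e_j : j\in I\rangle\subseteq\Re^d$, an $m$-dimensional coordinate subspace. For any vector $y\in\Re^m$, let $\iota(y)\in W$ be the vector obtained by placing the entries of $y$ at coordinates $I$ and zeros elsewhere. Then the key identity is $y^\intercal B y = \iota(y)^\intercal A \iota(y)$ and $\|y\|=\|\iota(y)\|$. Consequently, the min-max formulas for the eigenvalues of $B$ are exactly the min-max formulas for $A$ but with the underlying subspaces constrained to lie inside $W$.

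Next I would prove the two inequalities. For the upper bound $\lambda_i(B)\le\lambda_i(A)$, I would use the ``max of min'' form:
$$\lambda_i(B)=\max_{\substack{U\subseteq W\\\dim U=i}}\min_{0\neq x\in U}\frac{x^\intercal A x}{\|x\|^2}\;\le\;\max_{\substack{U\subseteq \Re^d\\\dim U=i}}\min_{0\neq x\in U}\frac{x^\intercal A x}{\|x\|^2}=\lambda_i(A),$$
since the right-hand side maximizes over a strictly larger family of $i$-dimensional subspaces. For the lower bound $\lambda_{d-m+i}(A)\le\lambda_i(B)$, I would use the ``min of max'' form with index $k=d-m+i$, so the relevant subspaces have dimension $d-k+1=m-i+1$:
$$\lambda_{d-m+i}(A)=\min_{\substack{U\subseteq \Re^d\\\dim U=m-i+1}}\max_{0\neq x\in U}\frac{x^\intercal A x}{\|x\|^2}\;\le\;\min_{\substack{U\subseteq W\\\dim U=m-i+1}}\max_{0\neq x\in U}\frac{x^\intercal A x}{\|x\|^2}=\lambda_i(B),$$
because restricting to subspaces of $W$ can only increase the minimum over an enlarged collection. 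Combining these two estimates gives the interlacing claim.

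There is no real obstacle here: the result is classical, and both directions follow from a single trick, namely that min-max over a restricted family is at most (resp.\ at least) min-max over the full family depending on whether the outer quantifier is $\max$ or $\min$. The only points of care are (i) picking the correct index in the second form of the variational principle so that the subspace dimensions match ($d-k+1=m-i+1$ forces $k=d-m+i$), and (ii) justifying the identification of $m$-dimensional subspaces of $W$ with arbitrary $m$-dimensional subspaces of $\Re^m$ via the isometry $\iota$, which is immediate.
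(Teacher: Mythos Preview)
Your proof is correct and is the standard derivation of the Cauchy Interlacing Theorem from the Courant--Fischer min-max characterization. The paper, however, does not prove this theorem at all: it is stated in the preliminaries as a classical fact and used as a black box (for instance, inside the proof of \eqref{eq:sigmaiMupper} in the analysis of the greedy volume-maximization step). So there is no ``paper's own proof'' to compare against; you have supplied a clean justification where the authors simply cite the result.

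One small wording note: in your lower-bound display you write ``restricting to subspaces of $W$ can only increase the minimum over an enlarged collection,'' but the collection is \emph{restricted}, not enlarged; minimizing over a smaller family can only raise the value. The inequality itself is in the right direction, so this is purely a phrasing slip.
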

We also use the following lemma which is an easy implication of min-max characterization  to bound eigenvalues of summation of two matrices in terms of the summation their eigenvalues. 
\begin{lemma}
\label{lem:boundingeigvals}
Let $A,B \in \Re^{d\times d}$ be two symmetric matrices. Then $\lambda_{i+j-d}(A+B) \geq \lambda_i(A)+\lambda_j(B)$ for any $i,j$ with $i+j-d >0$.
\end{lemma}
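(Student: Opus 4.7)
The plan is to apply the min-max (variational) characterization of eigenvalues stated just above the lemma, combined with a standard dimension-counting argument on the intersection of two subspaces. Let $k = i + j - d$, which is positive by hypothesis, and our goal is to exhibit a $k$-dimensional subspace on which the Rayleigh quotient of $A+B$ is at least $\lambda_i(A) + \lambda_j(B)$; then the max-min formula for $\lambda_k(A+B)$ immediately yields the lemma.

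First, I would invoke the min-max theorem twice: choose a subspace $U_A \subseteq \Re^d$ of dimension $i$ such that $\min_{x \in U_A \setminus \{0\}} x^\intercal A x / \|x\|^2 = \lambda_i(A)$, and a subspace $U_B \subseteq \Re^d$ of dimension $j$ such that $\min_{x \in U_B \setminus \{0\}} x^\intercal B x / \|x\|^2 = \lambda_j(B)$. The existence of these ``optimizer'' subspaces is exactly the content of the first form of the variational theorem (they may be taken to be spanned by the top $i$ and top $j$ eigenvectors of $A$ and $B$ respectively).

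Next, I would consider $W := U_A \cap U_B$. By the standard inclusion-exclusion formula for subspace dimensions, $\dim(W) \geq \dim(U_A) + \dim(U_B) - d = i+j-d = k$, which is strictly positive and hence $W$ is nontrivial. For any nonzero $x \in W$, since $x \in U_A$ we have $x^\intercal A x / \|x\|^2 \geq \lambda_i(A)$, and since $x \in U_B$ we have $x^\intercal B x / \|x\|^2 \geq \lambda_j(B)$. Adding these two inequalities gives
\[
\frac{x^\intercal (A+B) x}{\|x\|^2} \;\geq\; \lambda_i(A) + \lambda_j(B) \qquad \text{for all } x \in W \setminus \{0\}.
\]

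Finally, I would apply the max form of the variational characterization to $A+B$ at index $k = i+j-d$: since $W$ is a subspace of dimension at least $k$ (and we can shrink it to exactly dimension $k$ if needed without decreasing the infimum of the Rayleigh quotient over a subspace), we obtain
\[
\lambda_{i+j-d}(A+B) \;=\; \max_{\dim U = k}\, \min_{x \in U \setminus \{0\}} \frac{x^\intercal (A+B) x}{\|x\|^2} \;\geq\; \min_{x \in W \setminus \{0\}} \frac{x^\intercal (A+B) x}{\|x\|^2} \;\geq\; \lambda_i(A) + \lambda_j(B).
\]
There is no real obstacle here; the only mildly delicate point is making sure we pick the correct form (max over $k$-dimensional subspaces, not min over $(d-k+1)$-dimensional ones) so that the intersection $W$ certifies a lower bound on $\lambda_{i+j-d}(A+B)$, and that $i+j-d>0$ is used precisely to guarantee $W$ has positive dimension.
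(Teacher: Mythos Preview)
Your proof is correct and follows essentially the same approach as the paper: pick optimizer subspaces of dimensions $i$ and $j$ for $A$ and $B$ via the variational characterization, intersect them to get a subspace of dimension at least $i+j-d$, and plug into the max--min formula for $\lambda_{i+j-d}(A+B)$. The paper's proof is slightly terser but identical in structure.
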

\begin{proof}
Following the min-max characterization of eigenvalues, let $S_A$ and $S_B$ be two $i$-dimensional and $j$-dimensional linear subspaces for which we have 
\begin{equation*}
\begin{array}{ccc}
\lambda_i(A) = \min_{x\in S_A} \frac{x^\intercal Ax}{\norm{x}^2}& \text{and} & \lambda_j(B) = \min_{x\in S_B} \frac{x^\intercal Bx}{\norm{x}^2}  
\end{array}
\end{equation*}
Then let $S=S_A \cap S_B$. The dimension of $S$ is at least $i+j-d$, and by min-max characterization of eigenvalues we have 
$$ \lambda_{i+j-d}(A+B) \geq \min_{x\in S} \frac{x^\intercal(A+B)x}{\norm{x}^2}\geq \min_{x\in S_A} \frac{x^\intercal Ax}{\norm{x}^2}+ \min_{x\in S_B} \frac{x^\intercal Bx}{\norm{x}^2}= \lambda_i(A)+\lambda_j(B),$$
hence the proof is complete.
\end{proof}
Moreover, we take advantage of the following simple lemma which is also known as extremal partial trace. A proof of it can be found in \cite{Tao10}. 
\begin{lemma}
\label{lem:minquadform}
Let $L\in \Re^{d\times d}$ be a symmetric matrix.  Then for any integer $n \leq d$, 
$$ \min_{\Pi \in \mathcal{P}_n } \langle \Pi,L \rangle =\sum_{d-n+1}^d \lambda_i(L).$$ 
\end{lemma}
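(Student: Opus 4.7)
The plan is to reduce the problem to an elementary linear program by diagonalizing $L$ and then observing that the constraints on $\Pi \in \mathcal{P}_n$ translate into simple box and sum constraints on the coordinates of a vector in $\Re^d$.

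First I would take a spectral decomposition $L = \sum_{i=1}^d \lambda_i(L) v_i v_i^\intercal$, where $v_1,\dots,v_d$ is an orthonormal eigenbasis of $L$. Using the fact that $\langle vv^\intercal, \Pi\rangle = \|\Pi v\|^2$, I can write
$$\langle \Pi, L\rangle = \sum_{i=1}^d \lambda_i(L)\,\|\Pi v_i\|^2.$$
Setting $p_i := \|\Pi v_i\|^2$, two constraints are immediate: since $\Pi$ is an orthogonal projection, $\|\Pi v_i\|^2 \le \|v_i\|^2 = 1$, so $p_i \in [0,1]$; and since $\{v_i\}$ is an orthonormal basis of $\Re^d$, we have
$$\sum_{i=1}^d p_i \;=\; \sum_{i=1}^d v_i^\intercal \Pi^2 v_i \;=\; \Tr(\Pi) \;=\; n,$$
using $\Pi^2 = \Pi$ and the fact that $\Pi \in \mathcal{P}_n$ is a rank-$n$ projection.

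Now the problem reduces to minimizing the linear function $\sum_i \lambda_i(L) p_i$ over the polytope $\{p \in [0,1]^d : \sum_i p_i = n\}$. Since $\lambda_1(L)\ge \lambda_2(L) \ge \dots \ge \lambda_d(L)$, a standard exchange argument (or simply the fact that the minimum of a linear function on this polytope is attained at a vertex, which has $n$ coordinates equal to $1$ and the rest equal to $0$) shows that the minimum equals $\sum_{i=d-n+1}^d \lambda_i(L)$, obtained by concentrating the mass on the $n$ smallest eigenvalues. This proves the inequality $\min_{\Pi \in \mathcal{P}_n}\langle \Pi, L\rangle \ge \sum_{i=d-n+1}^d \lambda_i(L)$.

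For the matching upper bound I would exhibit a minimizer: take $\Pi^\star$ to be the orthogonal projection onto $\langle v_{d-n+1},\dots,v_d\rangle$. Then $\Pi^\star v_i = v_i$ for $i \ge d-n+1$ and $\Pi^\star v_i = 0$ otherwise, so the formula above gives $\langle \Pi^\star, L\rangle = \sum_{i=d-n+1}^d \lambda_i(L)$, matching the lower bound. The only nontrivial step is justifying that the induced linear program achieves its optimum at the claimed vertex, which follows from the standard fact that a linear objective on a bounded polytope attains its extrema at vertices together with the monotonicity $\lambda_1 \ge \dots \ge \lambda_d$; there is no real obstacle here.
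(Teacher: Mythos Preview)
Your argument is correct. The paper does not actually supply a proof of this lemma; it simply states the result and refers the reader to \cite{Tao10} (``A proof of it can be found in \cite{Tao10}''). Your proposal therefore fills in what the paper omits, and does so with the standard elementary approach: diagonalize $L$, observe that the map $\Pi \mapsto (\|\Pi v_1\|^2,\dots,\|\Pi v_d\|^2)$ lands in the polytope $\{p\in[0,1]^d:\sum_i p_i=n\}$ (using $\Tr(\Pi)=n$ and $0\preceq\Pi\preceq I$ for $\Pi\in\mathcal{P}_n$), minimize the linear functional $\sum_i \lambda_i p_i$ over that polytope, and then exhibit the projection onto the bottom $n$ eigenvectors to show the bound is attained. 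All steps are valid as written; the only implicit assumption worth making explicit is that elements of $\mathcal{P}_n$ are \emph{orthogonal} projections (so $\Pi^\intercal=\Pi$), which is indeed how the paper defines $\mathcal{P}_n$ via $\Pi_S=\sum_i v_iv_i^\intercal$ for an orthonormal basis of $S$.
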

In particular, we use it to conclude that if $x_1,\ldots,x_n \in \Re^d$ are orthonormal vectors, then
$$ \sum_{i=1}^n x_i^\intercal Lx_i \geq \sum_{d-n+1}^d \lambda_i(L).$$ 
For a matrix $A$, we use $\sigma_1(A)\geq \ldots \geq \sigma_d(A)\geq 0$ to denote singular values of $A$ (for symmetric matrices they are the same as eigenvalues). Given a matrix, we use the following simple lemma to construct a symmetric matrix whose eigenvalues are the singular values of the input matrix and their negations.

Many of the matrices that we work with in this paper are not symmetric.
Define a symmetrization operator $\cS_d: \Re^{d \times d} \to\Re^{2d\times 2d}$ where for any matrix
 $A \in \mathbb{R}^{d\times d}$, 
$$\cS_d(A)= \left[ {\begin{array}{cc}
   0 & A \\
   A^\intercal & 0 \\
  \end{array} } \right].
$$
When the dimension is clear in the context, we may drop the subscript $d$. The following fact is immediate.
\begin{fact}
\label{fact:symmeigs}
For any matrix $A\in\Re^{d\times d}$, $\cS(A)$  has eigenvalues $\sigma_1(A)\geq \ldots\geq\sigma_d(A) \geq -\sigma_d(A) \ldots \geq -\sigma_1(A).$
\end{fact}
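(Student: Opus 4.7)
The plan is to reduce $\cS(A)$ to a canonical form via the singular value decomposition and then read off its eigenvalues. Write $A = U\Sigma V^\intercal$ where $U,V\in\Re^{d\times d}$ are orthogonal and $\Sigma=\mathrm{diag}(\sigma_1(A),\ldots,\sigma_d(A))$. Form the block-diagonal orthogonal matrix
$$ Q = \begin{pmatrix} U & 0 \\ 0 & V \end{pmatrix} \in \Re^{2d\times 2d}. $$
Since $Q$ is orthogonal, $Q^\intercal\cS(A)Q$ has the same spectrum as $\cS(A)$, and a direct block-multiplication shows
$$ Q^\intercal \cS(A) Q = \begin{pmatrix} 0 & U^\intercal A V \\ V^\intercal A^\intercal U & 0 \end{pmatrix} = \begin{pmatrix} 0 & \Sigma \\ \Sigma & 0 \end{pmatrix}. $$

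Next I would observe that the matrix on the right decomposes, up to a permutation of rows and columns, into $d$ independent $2\times 2$ blocks, the $i$-th one being $\begin{pmatrix} 0 & \sigma_i(A) \\ \sigma_i(A) & 0 \end{pmatrix}$. Concretely, apply the permutation that interleaves the two halves, sending basis vector $e_i$ to position $2i-1$ and $e_{d+i}$ to position $2i$. Each $2\times 2$ block has characteristic polynomial $\lambda^2-\sigma_i(A)^2$, and hence eigenvalues $\pm\sigma_i(A)$.

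Collecting the eigenvalues over all $d$ blocks yields the multiset $\{\pm\sigma_i(A) : 1\le i\le d\}$. Since $\sigma_1(A)\ge\cdots\ge\sigma_d(A)\ge 0$, sorting this multiset in decreasing order produces exactly $\sigma_1(A)\ge\cdots\ge\sigma_d(A)\ge -\sigma_d(A)\ge\cdots\ge -\sigma_1(A)$, which is the claimed list. There is no real obstacle here; the only care needed is in the bookkeeping of the orthogonal conjugation and the block permutation, both of which preserve eigenvalues.
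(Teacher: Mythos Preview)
Your proof is correct and is essentially the same argument as the paper's, just packaged differently: the paper directly exhibits the eigenvectors $[v_i\;u_i]^\intercal$ and $[-v_i\;u_i]^\intercal$ of $\cS(A)$ built from the left and right singular vectors, while you perform the equivalent orthogonal change of basis via $Q=\mathrm{diag}(U,V)$ and then read off the same $\pm\sigma_i(A)$ from the resulting $2\times 2$ blocks.
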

\begin{proof}
Let $u_1,\ldots,u_d$ and $v_1,\ldots,v_d$ be right and left singular vectors of $A$, respectively. Then we have $Au_i=\sigma_iv_i$ and $A^\intercal v_i=\sigma_i u_i$ for any $1\leq i\leq m$. Now, it is easy to see $[v_i\, \, u_i ]$ and $[-v_i \, \, u_i]$ are eigenvectors for $\cS(A)$ with eigenvalues $\sigma_i$ and $-\sigma_i$ for any $1\leq i \leq m$.
\end{proof}
Throughout the paper, we work with different norms for matrices.  The $\ell_2$ norm of matrix $A$, denoted by $\norm{A}_2$ or just $\norm{A}$ denotes $\max_{\norm{x}_2=1}\norm{Ax}_2$. Also, $\norm{A}_\infty=\max_{i,j}\abs{A_{i,j}}$ denotes the $\ell_\infty$ norm. The Frobenius norm of $A$ denoted by $\norm{A}_F$ is defined by $\norm{A}_F = \sqrt{\langle A,A \rangle}$. 
We use the following identity which relates Frobenius norm to singular values.
\begin{fact}
\label{fact:singvaldet}
For any matrix $A \in \Re^{d\times d}$,
$$ \norm{A}_F^2=\sum_{i=1}^d \sigma_i(A)^2.$$
\end{fact}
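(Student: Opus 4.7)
The plan is to reduce $\norm{A}_F^2$ to the trace of the positive semidefinite matrix $AA^\intercal$, whose eigenvalues are exactly the squared singular values of $A$. First, I would use the identity already recorded in \autoref{subsection:linalg}, namely $\norm{A}_F^2 = \langle A, A \rangle = \Tr(A A^\intercal)$. Second, I would invoke the singular value decomposition, writing $A = U \Sigma V^\intercal$ with $U, V \in \Re^{d \times d}$ orthogonal and $\Sigma$ the diagonal matrix whose entries are $\sigma_1(A), \ldots, \sigma_d(A)$. Then $A A^\intercal = U \Sigma V^\intercal V \Sigma^\intercal U^\intercal = U \Sigma \Sigma^\intercal U^\intercal$, using $V^\intercal V = I$.

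Third, by the cyclic invariance of the trace, $\Tr(U \Sigma \Sigma^\intercal U^\intercal) = \Tr(\Sigma \Sigma^\intercal U^\intercal U) = \Tr(\Sigma \Sigma^\intercal) = \sum_{i=1}^d \sigma_i(A)^2$, which gives the claimed equality. There is essentially no obstacle: the whole argument rests on two elementary ingredients (the Frobenius-norm-as-trace identity and the cyclicity of trace) together with the existence of the SVD. An SVD-free alternative would be to note directly that $A A^\intercal$ is symmetric PSD whose eigenvalues are the squared singular values (by the defining relation $A^\intercal A u_i = \sigma_i^2 u_i$, which is already used implicitly in the proof of \autoref{fact:symmeigs}), and then to combine this with the standard fact that the trace of a symmetric matrix equals the sum of its eigenvalues.
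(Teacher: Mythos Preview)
Your argument is correct and entirely standard. The paper itself states this as a Fact without proof (treating it as well known), so there is nothing further to compare; your SVD-plus-trace-cyclicity derivation is exactly the kind of routine justification one would supply if pressed.
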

\paragraph{Determinant Maximization Problem.} We use the notion of \emph{determinant} of a subset of vectors as a measure of their 
diversity. From a geometric point of view, for a subset of vectors $V=\{v_1,\ldots,v_d\}\subset \Re^d$, $\det(\sum_{i=1}^d v_iv_i^\intercal)$ is equal to the square of the volume of the parallelepiped spanned by $V$. For $S,T \subseteq [d]$, Let $A_{S,T}$ denote the $\abs{S}\times\abs{T}$ submatrix formed by intersecting the rows and columns corresponding to $S$, $T$ respectively. 
The notation $\det_k$  is a generalization of determinant and is defined by 
$$\det_k(A)=\sum_{S \in \binom{[d]}{k}} \det{A_{S,S}}.$$
In particular, for vectors $v_1,\dots,v_k \in \Re^d$, $\det_k(\sum_{i=1}^k vv_i^\intercal)$ is equal to the square of the $k$-dimensional volume of the parallelepiped spanned by $v_1,\dots,v_k$. 
The problem of $k$-determinant maximization is defined as follows.
\begin{definition}[$k$-Determinant Maximization]
\label{def:detmaximization}
Let $V=\{v_1,\ldots,v_n\}\subset \Re^d$ be a set of vectors, and let $M \in \Re^{n\times n}$ be the Gram matrix obtained from $A$, i.e., $M_{i,j} = \langle v_i,v_j\rangle$. For an integer $k\leq d$, the goal of the $k$-determinant maximization problem is to choose a subset $S\subseteq V$ such that $\abs{S}= k$ and the determinant of $M_{S,S}$ is maximized. 
\end{definition}
Throughout the paper, we extensively use the Cauchy-Binet identity which states that for any integer $k \leq d$,  $B \in \Re^{k\times d}$, and $C \in \Re^{d \times k}$ we have 
$$ \det(BC) = \sum_{S \in \binom{d}{k}} \det(B_{[k],S})\det(C_{S,[k]}),$$
For any $S \subset [n] (|S|=k)$, if we let $V_S \in \Re^{k\times d}$ be the matrix with $\{v_i\}_{i \in S}$ as its rows, then we have
\begin{equation}
\label{eq:geometricaldetk}
\det(M_{S,S}) = \det(V_SV_S^\intercal) =\det(V_S^\intercal V_S)= \det_k (\sum_{v \in S} vv^\intercal),
\end{equation}
where the last equality uses the Cauchy-Binet identity.
The $k$-determinant maximization is also known as \emph{maximum volume} $k$-simplex since $\det_k (\sum_{v \in S} vv^\intercal)$
 is equal to the square of the volume spanned by $\{v_i\}_{i \in S}$.
Throughout the paper, we also use  the following identity which can be derived from the  Cauchy-Binet formula when the columns of $B \in \Re^{d\times n}$ are $v_i$s and $C=B^\intercal$.
\begin{equation}
\label{eq:appofcauchybinet}
\det(\sum_{i=1}^n v_iv_i^\intercal) = \sum_{S\in \binom{[n]}{d}} \det(\sum_{i\in S} v_iv_i^\intercal).
\end{equation}
We use it to deduce the following simple lemma %following simple fact can be deduced by summing up 
%the above for all restrictions of $v_i$ vectors to $k$ coordinates.
\begin{lemma}
\label{lem:CauchyBinet}
For any set of vectors $v_1,\dots,v_n\in \Re^d$ and any integer $1\leq k\leq d$,
$$\det_k \left(\sum_{i=1}^n v_iv_i^\intercal\right) = \sum_{S \in {[n]\choose k}} \det_k\left(\sum_{i \in S} v_iv_i^\intercal\right)$$
\end{lemma}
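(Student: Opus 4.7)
The plan is to reduce the claim to its already-stated special case $k=d$, which is exactly identity \eqref{eq:appofcauchybinet}. The argument is a double-sum swap: expand $\det_k$ on the left into principal $k$-minors, apply \eqref{eq:appofcauchybinet} to each minor, and then regroup.

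Concretely, the first step would be to unfold $\det_k$ using its definition as the sum of principal $k$-minors. For any $T\in\binom{[d]}{k}$, writing $(v_i)_T\in\Re^k$ for the restriction of $v_i$ to the coordinates indexed by $T$, the $(T,T)$-submatrix of $\sum_{i=1}^n v_iv_i^\intercal$ equals $\sum_{i=1}^n (v_i)_T(v_i)_T^\intercal$, so
$$\det_k\Big(\sum_{i=1}^n v_iv_i^\intercal\Big) \;=\; \sum_{T\in\binom{[d]}{k}} \det\Big(\sum_{i=1}^n (v_i)_T(v_i)_T^\intercal\Big).$$

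The second step is to apply, for each fixed $T$, the identity \eqref{eq:appofcauchybinet} in ambient dimension $k$ (rather than $d$) to the $n$ vectors $\{(v_i)_T\}_{i=1}^n\subset\Re^k$. This yields
$$\det\Big(\sum_{i=1}^n (v_i)_T(v_i)_T^\intercal\Big) \;=\; \sum_{S\in\binom{[n]}{k}} \det\Big(\sum_{i\in S} (v_i)_T(v_i)_T^\intercal\Big).$$

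Finally, I would swap the order of summation over $T$ and $S$, and recognize that for every fixed $S$,
$$\sum_{T\in\binom{[d]}{k}} \det\Big(\Big(\sum_{i\in S} v_iv_i^\intercal\Big)_{T,T}\Big) \;=\; \det_k\Big(\sum_{i\in S} v_iv_i^\intercal\Big)$$
again by the definition of $\det_k$. Combining the three displays gives the claim. The argument is essentially bookkeeping once \eqref{eq:appofcauchybinet} is in hand, so I do not anticipate any real obstacle beyond keeping the index sets straight in the sum swap.
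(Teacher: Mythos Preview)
Your proof is correct and is essentially identical to the paper's own argument: expand $\det_k$ into principal $k$-minors indexed by $T$, apply \eqref{eq:appofcauchybinet} to each $T$-block, swap the sums over $T$ and $S$, and collapse back via the definition of $\det_k$. Even the notation $(v_i)_T$ matches the paper's $v_{i,T}$.
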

\begin{proof}
For a set $T \subset [d]$ and any $1 \leq i \leq n$, let $v_{i,T}\in \Re^k$ denote the restriction of $v_i$ to its coordinates in $T$. The proof can be derived as follows
\begin{equation*}
\begin{aligned}
\det_k \left(\sum_{i=1}^n v_iv_i^\intercal\right) &= \sum_{T \in \binom{[d]}{k}} \det\left(\sum_{i=1}^n v_{i,T}v_{i,T}^\intercal \right)& \text{By definition of $\det_k$}\\
&=\sum_{T \in \binom{[d]}{k}} \sum_{S \in \binom{[n]}{k}} \det\left(\sum_{i \in S}v_{i,T}v_{i,T}^\intercal\right)& \text{By \eqref{eq:appofcauchybinet}}\\
&=\sum_{S \in \binom{[n]}{k}}\left(\sum_{T \in \binom{[d]}{k}} \det\left(\sum_{i \in S}v_{i,T}v_{i,T}^\intercal\right)\right)= \sum_{S \in \binom{[n]}{k}} \det_k\left(\sum_{i \in S}v_iv_i^\intercal \right)&
\text{By definition of $\det_k$}
\end{aligned}
\end{equation*}
\end{proof}
We also use the following identities about the determinant of matrices.
For a $d\times d$ matrix $A$, we have  $$\det(A)=\prod_{i=1}^d 
\sigma_i(A).$$ If $A$ is lower(upper) triangular, i.e. $A_{i,j}=0$ for $j>i(j<i)$, 
 we have $\det(A) = \Pi_{i=1}^d A_{i,i}$.

\subsection{Core-sets}
The notion of core-sets has been introduced in \cite{AHV-aemp-04}. Informally, a core-set for an optimization problem is a subset of the data with the property that solving the underlying problem on the core-set gives an approximate solution for the original data. This notion is somewhat generic, and many variations of core-sets exist.

The specific notion of \emph{composable core-sets} was explicitly formulated in \cite{IMMM-ccdcm-14}.

\iffalse
\begin{definition}[$\alpha$-Composable Core-sets]
Given  point sets $P_1,\cdots, P_n\subset \Re^d$, and a maximization function  $f\colon 2^{\Re^d}\to \Re$, the subsets $C_1\subseteq P_1,\cdots, C_n\subseteq P_n$ are called $\alpha$-composable core-sets with respect to the function $f$, if for $C=\bigcup_{i\leq n} C_i$ and $P=\bigcup_{i\leq n} P_i$, we have $f(C)\geq f(P)/\alpha$.
\end{definition}
\fi

\begin{definition}[$\alpha$-Composable Core-sets]
 A function $c(V)$ that maps the input set $V \subset \Re^d$ into one of its subsets is called an $\alpha$-composable core-set for a maximization problem with respect to a function $f \colon 2^{\Re^d}\to \Re$ if, for any collection of sets $V_1,\cdots, V_m\subset \Re^d$, we have
 \[ f(c(V_1)\cup\dots\cup c(V_m))\geq \frac1{\alpha} f(V_1\cup\dots\cup V_m) \]
% where $P=\bigcup_{i\leq n} P_i$ and $C=\bigcup_{i\leq n} c(P_i)$
\end{definition}

For simplicity, we will often refer to the set $c(P)$ as the core-set for $P$ and use the term ``core-set function'' with respect to $c(\cdot)$. The {\em size} of $c(\cdot)$ is defined as the smallest number $t$ such that $c(P) \le t$ for all sets $P$ (assuming it exists). Unless otherwise stated, whenever we use the term ``core-set'', we mean a composable core-set.
%Our goal is to find composable core-sets with respect to the \emph{diversity maximization} problem which is defined next.

%\input{spanner_to_coreset}
%\input{local_search}
%\input{tightness_spanner}
%\input{finding_spanner}
\section{Spectral Spanners}
\label{sec:spectralspannerdef}
In this section we introduce the notion of \textit{spectral spanners} and review their properties. In the following, we  define the special case of spectral spanners. Later in Definition \ref{def:kspectralspanner}, we introduce its generalization, \emph{spectral $k$-spanners}. 
\begin{definition}[Spectral Spanner] 
Let $V \subset \Re^d$ be a set of vectors. We say \mbox{$U\subseteq V$} is an  $\alpha$-spectral $d$-spanner  for $V$  if for any $v \in V$, there exists a probability distribution $\mu_v$ on the vectors in $U$  so that
\begin{equation}
\label{eq:defofspannerd}
vv^\intercal \preceq   \alpha\cdot  \EE{u \sim \mu_v}{uu^\intercal}.
\end{equation}
\end{definition}
We study spectral spanners in Section \ref{sec:spectralspannerford}, and propose polynomial time algorithms for finding $\tilde{O}(d)$-spectral spanners of size $d$.
Considering \eqref{eq:defofspannerd} for all $v \in V$ implies that if $U\subseteq V$ is an $\alpha$-spectral spanner of $V$, then for any probability distribution $\mu:V\to \Re^+$, there exists a distribution $\tilde{\mu}:U \to \Re^+$ such that 
\begin{equation}
\label{eq:dspannerapp}
\EE{v \sim \mu}{vv^\intercal} \preceq \alpha \cdot \EE{u \sim \tilde{\mu}}{uu^\intercal}.
\end{equation}
We crucially take advantage of this property in Section \ref{sec:app} to 
develop core-sets for the \emph{experimental design} problem. Let $f:\mathcal{S}_d^+\to \Re^+$ be a monotone function such that $f(A) \leq f(B)$ if 
$A \preceq B$.  Roughly speaking,
we use monotonicity of $f$ along \eqref{eq:dspannerapp} to reduce optimizing  $f$ on the set of all 
matrices of the form $\EE{v \sim \mu}{vv^\intercal}$ for some distribution $\mu$,  to 
optimizing it on distributions which are only  supported on the small set $U$.
A wide range of matrix functions used in practice lie in the category of 
monotone functions, e.g. determinant, trace. 
More generally one can see $\lambda_i(.)$ for any $i$ is a monotone function, and consequently the same 
holds for any elementary symmetric polynomial of the eigenvalues.
For polynomial functions of the lower-degree, e.g. trace,  $\det_k$,  the 
monotonicity can be guaranteed by weaker constraints. Therefore, 
one should expect to find smaller core-sets with better guarantees for those 
functions. Motivated by this, we  introduce the notion of spectral $k$-spanners. Let us first define the notation $\preceq_k$ to generalize $\preceq$.
\begin{definition}[$\preceq_k$ notation]
For  two matrices $A,B \in \Re^{d\times d}$ , we say $A  \preceq_k B$ if for any $\Pi \in \mathcal{P}_{d-k+1}$, 
 we have $ \langle A,\Pi\rangle \leq \langle B,\Pi\rangle$. 
\end{definition}
\noindent In particular 
note that $A\preceq_d B$ is equivalent to $A\preceq B$  and $A \preceq_1 B$ is the same as 
$\Tr(A) \leq \Tr(B)$, since $\mathcal{P}_1=\Re^d$ and $\mathcal{P}_d=I$. More generally, the following lemma can be used to check if $A \preceq_k B$. 
\begin{lemma}
Let $A,B \in \Re^{d\times d}$ be two symmetric matrices. Then $A \preceq_k B$ if and only if $\sum_{i=k}^{d} \lambda_i(B-A) \geq 0$.
\end{lemma}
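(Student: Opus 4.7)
The plan is to reduce this directly to the extremal partial trace identity already stated as \autoref{lem:minquadform}. By definition, $A \preceq_k B$ is equivalent to $\langle B-A, \Pi\rangle \geq 0$ for every $\Pi \in \mathcal{P}_{d-k+1}$, which in turn is equivalent to
$$\min_{\Pi \in \mathcal{P}_{d-k+1}} \langle B-A, \Pi\rangle \geq 0.$$
So the only thing to do is to evaluate this minimum and check that it equals $\sum_{i=k}^{d} \lambda_i(B-A)$.

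Set $L := B - A$. Since $A$ and $B$ are symmetric, $L$ is symmetric, so \autoref{lem:minquadform} applies to $L$. Taking $n = d-k+1$ there yields
$$\min_{\Pi \in \mathcal{P}_{d-k+1}} \langle \Pi, L\rangle = \sum_{i=d-(d-k+1)+1}^{d} \lambda_i(L) = \sum_{i=k}^{d} \lambda_i(L).$$
Combining the two displays gives the equivalence claimed in the lemma. Since \autoref{lem:minquadform} is already established, there is essentially no obstacle here: the lemma is a direct translation of the defining quantifier over projections into a spectral sum via the extremal partial trace formula, and both directions of the ``if and only if'' follow simultaneously from the equality of the minimum with $\sum_{i=k}^{d} \lambda_i(B-A)$.
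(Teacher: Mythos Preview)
Your proof is correct and follows essentially the same approach as the paper: both reduce the definition of $\preceq_k$ to the nonnegativity of $\min_{\Pi \in \mathcal{P}_{d-k+1}} \langle B-A, \Pi\rangle$ and then invoke \autoref{lem:minquadform} with $n=d-k+1$ to identify this minimum with $\sum_{i=k}^d \lambda_i(B-A)$.
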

\begin{proof}
Suppose that $A \preceq_k B$. Then by definition for any $\Pi \in P_{d-k+1}$, $\langle B-A,\Pi \rangle \geq 0$, so combining with  \autoref{lem:minquadform}, we get 
$$0 \leq \min_{\Pi \in \mathcal{P}_{d-k+1}} \langle  B-A, \Pi \rangle = \sum_{i=k}^d \lambda_i(B-A).$$
The other side can also be verified in the exactly reverse order.
\end{proof}
Now, we are ready to define spectral $k$-spanners.
\begin{definition}[Spectral $k$-Spanner]
\label{def:kspectralspanner}
Let $V \subset \Re^d$ be a set of vectors. We say \mbox{$U\subseteq V$} is an  $\alpha$-spectral $k$-spanner  for $V$  if for any $v \in V$, there exists a probability distribution $\mu_v$ on the vectors in $U$  so that
\begin{equation}
\label{eq:defofspanner}
vv^\intercal \preceq_k   \alpha\cdot  \EE{u \sim \mu_v}{uu^\intercal}.
\end{equation}
\end{definition}
We may drop $k$, whenever it is clear from the context.
Finally, we  remark that  spectral $k$-spanners have the composability property: If $U_1,U_2$ are $\alpha$-spectral spanners of $V_1,V_2$ respectively, then $U_1\cup U_2$ is an $\alpha$-spectral spanner of $V_1\cup V_2$. %union of spectral spanners for multiple  sets of vectors is a spectral spanner for the union of them; with the same approximation factor. 
This property will be useful  to construct composable core-sets.

%\begin{theorem}[Main]
%\label{thm:mainspannerfork}
%There is an algorithm that for any set of vectors $V=\{v_1,\ldots,v_n\} \in \Re^d$, and an integer $1 \leq k \leq d$, returns $O(k\log^3 k)$-spectral $k$-spanner of $V$ of size $O(k\log^2 k)$ in time polynomial in $|V|$ and $d$. 

%Furthermore, for any $\eps>0$ and $k\leq d$, there exists a set $V\subseteq \Re^d$ of size $e^{\Omega(k^\eps)}$ such that any $k^{1-\eps}$-spectral $k$-spanner of $V$ must have all vectors of $V$.
%That is to say, it returns a subset $\{w_1,\ldots,w_m\}\subseteq V$ for $m=O(k\log k)$ such that for every $v \in V$, there exists coefficients $s_1,\ldots,s_m$ satisfying the following two conditions:
%\begin{align}
%\label{eq:firstconofspanner}\sum_{i=1}^m s_i &= O(k \log^2k)\\
%\label{eq:secconofspanner} vv_T &\preceq_{k} \sum_{i=1}^m s_iw_iw_i^\intercal
%\end{align}
%\end{theorem}
We will prove the first part of \autoref{thm:mainspannerfork} in Section \ref{sec:spannerk}.
The second part of the theorem shows almost optimality of our results: We cannot get better than an $\Omega(d)$-spectral spanner in the worst case unless the spectral spanner has size {\em exponential} in $d$.
Next, here we prove the second part of the theorem. 

First, let us prove the claim for $k=d$.  Let $V$ be a set of $\frac12 e^{d^{\epsilon}/8}$ independently chosen 
 random  $\pm 1$  vectors in $\Re^d$. By Azuma-Hoeffding inequality and the union bound, we get that 
$$\Pr\left[\forall u,v\in V: |\langle u,v\rangle| \leq \sqrt{\frac12d^{1+\eps}}\right] \geq 1-|V|^2 e^{-d^\eps/4} \geq 1/2.$$
% with probability at least $1-d^{-\epsilon/4}>0$ for any pair of vectors $u,v \in V$, we have $\abs{\langle v,u \rangle} \geq \sqrt{d^{1+\epsilon}}$. 
So, let $V$ be a set where for all $u,v\in V$, $\langle u,v\rangle^2 \leq \frac12 d^{1+\eps}$. We claim that any $d^{1-\eps}$-spectral spanner of $V$ must have all $V$. Let $U$ be such a spanner and suppose $v\in V$ is not in $u$. We observe that $vv^\intercal \not\preceq d^{1-\eps} \mE_{u\sim\mu} uu^\intercal$ for any $\mu$ supported on $U$. This is because for any $\mu$ supported on $U$,
$$ \mE_{u\sim\mu} \langle v,u\rangle^2 \leq \mE_{u\sim\mu} \frac12 d^{1+\eps} \leq \frac12 d^{1+\eps} = \frac1{2d^{1-\eps}}d^2=\frac1{2d^{1-\eps}}\langle v,v\rangle^2$$
as desired.
% Furthermore, if this happens, it is easy to see any vector $V$ should be included in a $d^{1-\epsilon}$-spectral spanner. Therefore, the proof is complete. 

Now, let us extend the above proof to $k<d$. Firstly, we construct a set $V\subseteq \Re^k$ of $\frac12 e^{k^\eps/8}$ independently chosen random $\pm 1$ vectors in $\Re^k$. By above argument $V$ has no $k^{1-\eps}$-spectral $k$-spanner. Now define $V'\subseteq \Re^d$ by  appending  $d-k$ zeros to each vector in $V$. It is not hard to see that any $\alpha$-spectral $k$-spanner of $V$ is also an $\alpha$-spectral $k$-spanner of $V'$. Therefore, any $k^{1-\eps}$-spectral $k$-spanner of $V'$ has all vectors of $V'$.

\section{Spectral Spanners in Full Dimensional Case}
\label{sec:spectralspannerford}
In this section we prove \autoref{thm:mainspannerfork} for the case $k=d$. In this case we have a slightly better bound. So, indeed we will prove \autoref{thm:main-spectralspanner}.
 %In particular, we propose a polynomial time algorithm for finding an $(O(d\log^2 d),d)-$spanner of size $O(d \log d)$. Recall from section \ref{sec:spectralspannerdef} that a set $U=\{u_1,\ldots,u_m\} \subseteq V \subset \Re^d$ is an $(\alpha,d)-$spectral spanner for $V$, if there exists $s_1,\ldots,s_m \in \Re$ such that 
%\begin{equation}
%\begin{aligned}
%\label{eq:defofspanner}
%\sum_{i=1}^m s_i &\leq \alpha \\
%vv^T &\preceq \sum_{i=1}^m s_iu_iu_i^T, \, \, \forall v \in V
% \end{aligned}
% \end{equation}
%The following theorem states the main contribution of this section formally.
%\begin{theorem}
%\label{thm:mainspectralspannerford}
%There is a polynomial time algorithm that given a set of vectors  $V \subset \Re^d$, returns an $(O(d\log^2d),d)$-spectral spanner for $P$ of size $O(d\log d)$. 
%\end{theorem}
As alluded to in the introduction we design a greedy algorithm that can be seen as a spectral analogue of the classical greedy algorithms for finding combinatorial 
spanners in graphs. The details of our algorithm is in Algorithm~\ref{alg:d-spanner}.

\begin{algorithm}[H]
\SetKwProg{Fn}{Function}{}{}
\DontPrintSemicolon
%\begin{algorithmic}
%{\bf Initialize:} Set $P(v)\leftarrow v$ and $L(v) \leftarrow 0$	 for all vertices $v$\;
\Fn {Spectral d-Spanner($V$,$\alpha$)}{
	Let $U=\emptyset$\;
	While there is a vector $v\in V$ such that
the polytope $$ P_v=\{x| \hspace{2mm} \forall u\in U, \langle x,v\rangle > \sqrt{\alpha} |\langle x,u \rangle |\}$$
is nonempty. Find such vector $v$ and  
let $x$ be any point in $P_v$. Add $\text{argmax}_{u \in V} \langle u,x \rangle^2$ to $U$\;
%\item 
  If there are no such vectors, terminate the algorithm and output $U$. 
}
\caption{Finds an $\alpha$-spectral $d$-spanner}
\label{alg:d-spanner}
\end{algorithm}\vspace{0.2cm}

Note that for any vector $v$ we can test whether $P_v$ is empty  using linear program. Therefore, the above algorithm runs in time polynomial in $|V|$ and $d$.

As alluded to in Section \ref{sec:techniques}, we first prove that our algorithm constructs a \wspa{$\tilde{O}(d)$}. 
Let us recall the definition of \wsp. 
\begin{definition}[\WSP]
A subset $U \subseteq V \subset \Re^d$ is a \wspa{$\alpha$} of $V$, if for all $v\in V$ and $x\in \Re^d$ there is a probability distribution $\mu_{v,x}$ on $U$ such that 
$$ \langle v,x\rangle^2 \leq \alpha\cdot\mE_{u\sim\mu_{v,x}}\langle u,x\rangle^2 \hspace{0.5cm} \text{equiv} \hspace{0.5cm} \langle v,x\rangle^2 \leq \alpha \cdot \max_{u\in U}\langle u,x\rangle^2$$
\end{definition}
In the rest of this section, we may call spectral spanners {\em strong} to emphasize its difference from weak spectral spanners defined above.
The rest of this section is organized as follows: In Section \ref{subsection:analdirheight} we prove that the output of the algorithm is a \wspa{$\alpha$}  of size $O(d \log d)$ for $\alpha=\Omega(d \log^2d)$. Then, in Section \ref{subsection:dirheighttospanner} we prove that for any $\alpha$, any \wspa{$\alpha$} is a strong $\alpha$-spectral spanner.

\subsection{Construction of a \WSP}\label{subsection:analdirheight}
In this section we show that Algorithm \ref{alg:d-spanner} returns an $\alpha$-spectral $d$-spanner when $\alpha$ is sufficiently larger than $d$.  At the end of this section we discuss an alternative algorithm for finding a weak spanner that achieves slightly better approximation guarantee. However, we believe that this algorithm is simpler to implement, easier to parallelize and can be tuned for practical applications. 
\begin{proposition}\label{prop:weakspanner}
There is a universal constant $C>0$ such that for $\alpha\geq C\cdot d\log^2d$, Algorithm \ref{alg:d-spanner} returns a \wspa{$\alpha$} of size $O(d\log d)$. 	
\end{proposition}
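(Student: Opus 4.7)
The plan is to split the argument into two pieces: correctness of the output (that it is a weak $\alpha$-spectral spanner) and a bound on its size. Correctness will be immediate from the termination condition, while the size bound will be reduced to a rank-robustness statement about approximately lower-triangular matrices.

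First, when Algorithm \ref{alg:d-spanner} terminates with output $U$, the polytope $P_v$ is empty for every $v \in V$. Unwinding the definition gives: for every $v \in V$ and every $x \in \Re^d$, there exists $u \in U$ with $\langle x, v \rangle^2 \leq \alpha\, \langle x, u \rangle^2$, i.e.\ $\langle x, v \rangle^2 \leq \alpha \cdot \max_{u \in U} \langle x, u \rangle^2$. This is exactly the definition of a weak $\alpha$-spectral spanner.

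For the size bound, let $u_1, \ldots, u_m$ be the vectors added in order and let $x_1, \ldots, x_m$ be the witnesses used at each step. At step $i$, $x_i$ lies in $P_{v_i}$ for the triggering vector $v_i \in V$, and $u_i = \argmax_{u \in V} \langle u, x_i \rangle^2$, so $\langle u_i, x_i \rangle^2 \geq \langle v_i, x_i \rangle^2 > \alpha \langle u_j, x_i \rangle^2$ for every $j < i$. After rescaling each $x_i$ (and flipping its sign if necessary) so that $\langle u_i, x_i \rangle = \max_{u \in V} |\langle u, x_i \rangle| = 1$, I form the $m \times m$ matrix $M$ with $M_{i,j} = \langle u_i, x_j \rangle$. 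Then $M_{i,i} = 1$ for all $i$, entries strictly above the diagonal satisfy $|M_{j,i}| < 1/\sqrt{\alpha}$ for $j < i$, and entries strictly below the diagonal satisfy $|M_{i,j}| \leq 1$ for $i > j$ (since $u_i \in V$ and $u_j$ is the $\argmax$ for $x_j$). Because $M$ factors as the product of an $m \times d$ matrix of $u_i$'s and a $d \times m$ matrix of $x_j$'s, we have $\rank(M) \leq d$.

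The main step is to invoke the rank-robustness lemma for approximately lower-triangular matrices (Lemma \ref{lem:approxlowtri}) mentioned in the technical overview: an $m \times m$ matrix with unit diagonal, strictly upper-triangular entries of magnitude at most $1/\sqrt{\alpha}$, and other entries of magnitude at most $1$ must have rank $\Omega(m / \log m)$ once $\alpha \geq C \cdot m \log m$ for a suitable constant. Combined with $\rank(M) \leq d$, this forces $m = O(d \log d)$ provided $\alpha \geq C \cdot d \log^2 d$, giving the claimed bound. The principal obstacle is establishing this rank-robustness lemma itself: a truly lower-triangular matrix with nonzero diagonal has full rank, but here we need a quantitative tolerance to small upper-triangular perturbations while the lower triangle can be of order one. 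The natural route is to write $M = L + N$ with $L$ unit-diagonal lower triangular (hence invertible with $\det(L) = 1$) and $N$ the small strictly upper-triangular perturbation, and to argue via a singular-value/determinant calculation that a perturbation whose entries are all bounded by $1/\sqrt{\alpha}$ cannot drop the rank of $L$ by more than the advertised factor, where the $\log$ factors arise from controlling the operator norm of $N$ against the product of singular values of $L$.
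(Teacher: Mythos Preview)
Your proposal is correct and follows essentially the same approach as the paper: correctness from the termination condition, construction of the inner-product matrix $M$ with the approximately lower-triangular structure, the observation $\rank(M)\le d$, and reduction to the rank-robustness lemma (Lemma~\ref{lem:approxlowtri}). Your only imprecision is in how you phrase the lemma's conclusion---the paper states it as $\rank(M)\ge C\cdot\min\{(1/(\eps\log\tfrac1\eps))^2,\, m/\log m\}$, and the deduction $m=O(d\log d)$ from $\rank(M)\le d$ uses the first term of the $\min$ directly (with $\alpha=\Theta(d\log^2 d)$ making it $\Theta(d)$), rather than the conditional ``once $\alpha\ge C\,m\log m$'' formulation, which as stated would be circular.
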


First, we observe that for {\em any} $\alpha$, the output of the algorithm is a \wspa{$\alpha$}.
For the sake of contradiction, suppose the output set $U$ is not a \wspa{$\alpha$}. So, there is a vector $v\in V$ and $x\in\Re^d$ such that
\begin{equation}
\label{eq:greedyalganal}
\langle x,v \rangle^2 > \alpha\cdot \max_{u \in U} \langle x,u \rangle^2
\end{equation}
We show that $P_v$ is non-empty, which implies $U$ cannot be the output.
We can assume $\langle x,v \rangle > 0$, perhaps by multiplying $x$ by a $-1$. So the above equation is equivalent to  
$\langle x,v \rangle > \sqrt{\alpha} \cdot \max_{u \in U}\abs{\langle u,x \rangle},$
which implies $x \in P_v$.

%We showed Algorithm \ref{Alg:dirheight}  outputs a \wsp. Now, we analyze the size of the  output, and in the following lemma show the resulting weak spanner  has $O(d\log d)$ vectors for $\alpha=O(d \log^2 d)$. 
%\begin{lemma}
%\label{lem:analyzesize}
%Let $U$ be the set output by Algorithm \ref{Alg:dirheight} on a set $V \subset \Re^d$ and $\alpha=400d\log^2 d$, then
%$$ |U| =O(d\log d).$$ \shayan{don't you need $\alpha\geq 400 d\log^2d$?}
%\end{lemma}
It remains to bound the size of the output set $U$.
As alluded to in the introduction, the main technical part of the proof is to show that the rank of lower triangular matrices is robust under small perturbations. To bound the size of $U$ we will construct such a matrix and we will use \autoref{lem:approxlowtri} (see below) to bound its rank.
Let $u_1,u_2,\dots,u_m$ be the {\em sequence} of vectors added to our spectral spanner in the algorithm, i.e., $u_i$ is the $i$-th vector added to the set $U$.  By Step 2 of the algorithm for any $u_i$ there exists a ``bad'' vector $x_i \in \Re^d$ such that 
$$ \langle u_i,x_i \rangle^2 > \alpha\cdot \max_{1 \leq j <i} \langle u_j,x_i \rangle^2,$$ 
Furthermore, by construction, $u_i$ is the vector with largest projection onto $x_i$, i.e., $u_i=\argmax_{u \in V}\langle x_i,u 
\rangle^2$. Define inner product matrix $M \in \Re^{m \times m}$ 
$$M_{ij}=\langle u_i,x_j\rangle. $$
%For the sake of contradiction, suppose $ m \geq 6d \log
%d$. Let $M'$ be the top left $6d\log d\times 6d\log d$-principal submatrix of $M$ %we assume that $m=10d\log d$.
By the above conditions on the vectors $u_i, x_j$, $M$ is diagonally dominant and 
for all $1\leq i\leq m$ and $1\leq j<i$ we have $M_{j,i} \leq \frac{M_{i,i}}{\sqrt{\alpha}}$. So the assertion  of the \autoref{lem:approxlowtri} holds for $M$ and $\eps= \frac{1}{\sqrt{\alpha}}$. By the lemma,
$$ \rank(M) \geq C \cdot \min \left\{ \frac{4\alpha}{\log^2 \alpha}, \frac{m}{\log m} \right\},$$
where $C$ for some constant $C>0$.
But, it turns out that $\rank(M)\leq d$  as it can be written as the product of an $m\times d$ matrix and a $d\times m$ matrix. 
Setting $\alpha = \frac{d\log^2d}{C}$, implies $|U|=m \leq \frac{2d\log d}{C}$ for large enough $d$, as desired. It remains to prove the following lemma.

\begin{lemma}
\label{lem:approxlowtri}
Let $M \in \Re^{m\times m}$ be a diagonally dominant and approximately lower triangular matrix in the following sense
\begin{equation}
\label{eq:Mproperties_1}
 M_{j,i} \leq \eps\cdot M_{i,i}, \forall\, 1 \leq j<i \leq m,
\end{equation}
Then, there is a universal constant $C>0$ such that we  have $\rank(M) \geq C\cdot\min\left \{ \left(\frac{1}{\eps \log \frac1{\eps}}\right)^2,\frac{m}{\log m}\right\}$.
\end{lemma}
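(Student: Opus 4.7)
My plan is to lower bound $\rank(M)$ by exhibiting a large principal submatrix $M_S$ whose determinant is nonzero, so that $\rank(M)\ge |S|$ matches the claimed bound.

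First I would normalize: rescale each column $i$ of $M$ by $1/M_{i,i}$, which preserves rank. Interpreting the diagonal-dominance hypothesis as ``$|M_{i,j}|\le M_{j,j}$ for every entry in column $j$'' (this is exactly what is obtained in the application, where $u_j=\argmax_{u\in V}\langle u,x_j\rangle^2$ forces $|\langle u_i,x_j\rangle|\le|\langle u_j,x_j\rangle|$), the rescaling yields a matrix $\widetilde M$ with $\widetilde M_{i,i}=1$, $|\widetilde M_{j,i}|\le\eps$ for $j<i$, and $|\widetilde M_{i,j}|\le 1$ everywhere else. From here on I work with $\widetilde M$.

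Next, for any $S\subseteq[m]$ of size $s$ I would apply the Leibniz expansion
\[
\det(\widetilde M_S) \;=\; 1 \;+\; \sum_{\pi\in\mathrm{Sym}(S),\,\pi\ne\mathrm{id}} \sgn(\pi)\prod_{i\in S}(\widetilde M_S)_{i,\pi(i)}.
\]
Every non-identity permutation $\pi$ must have at least one excedance (some $i$ with $\pi(i)>i$); writing $k(\pi)=|\{i\in S:\pi(i)>i\}|\ge 1$, each excedant factor is at most $\eps$ in magnitude while all other factors are at most $1$, so the contribution of $\pi$ is bounded by $\eps^{k(\pi)}$. Grouping by excedance count,
\[
|\det(\widetilde M_S)-1| \;\le\; \sum_{k=1}^{s-1} N_{s,k}\,\eps^k,
\]
where $N_{s,k}$ is the Eulerian number counting permutations of $[s]$ with $k$ excedances. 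The plan is to choose $s$ maximal so that the right-hand side stays below $1$; this guarantees $\det(\widetilde M_S)>0$ and hence $\rank(M)\ge s$. The two regimes in the conclusion should arise naturally: when $m$ is relatively small, the series is truncated and $s\asymp m/\log m$ is the binding constraint, while for large $m$ the convergence of the series forces $s\asymp 1/(\eps\log(1/\eps))^2$.

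The main obstacle is the sharpness of the permutation counting in the last step. The crude bound $N_{s,k}\le\binom{s}{k}^2 k!$ yields $s\lesssim 1/\sqrt{\eps}$, which is a factor of $\sqrt{1/\eps}$ short of what is claimed, so a more careful combinatorial estimate is essential. I would attempt to recover the correct threshold by exploiting that excedances must be paired with matching ``deficiencies'' (indices with $\pi(i)<i$) inside each cycle of $\pi$, so every excedance can be charged against a specific lower-triangular factor rather than counting them as independent. A complementary idea that seems likely to be needed is a dyadic grouping of the below-diagonal entries of $\widetilde M$ by magnitude before choosing $S$, since selecting $S$ uniformly within a dominant magnitude bucket sharpens the per-term estimate and produces precisely the $\log(1/\eps)$ and $\log m$ factors that distinguish $1/\sqrt{\eps}$ from $1/(\eps\log(1/\eps))^2$. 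Combining the Leibniz bookkeeping with such a dyadic choice of $S$ is, in my estimation, the technical heart of the lemma.
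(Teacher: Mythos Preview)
Your approach has a genuine gap at the combinatorial step, and it is not repairable along the lines you suggest.

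First, the ``crude bound'' $N_{s,k}\le\binom{s}{k}^2 k!$ is simply false for Eulerian numbers. Already for $k=1$ one has $N_{s,1}=2^s-s-1$, which exceeds $\binom{s}{1}^2\cdot 1!=s^2$ for all $s\ge 5$. With the correct value, the $k=1$ term in your bound is $(2^s-s-1)\eps$, which forces $s\lesssim\log(1/\eps)$ --- exponentially worse than the target $(\eps\log(1/\eps))^{-2}$, and also worse than the $1/\sqrt{\eps}$ you report.

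Second, this is not a slackness you can recover by cleverer permutation bookkeeping or by choosing $S$ adaptively. Consider the worst case where every strictly lower-triangular entry of $\widetilde M$ equals $1$ and every strictly upper-triangular entry equals $\eps$. Then for \emph{every} $S$ and every non-identity $\pi$, the term $\prod_i(\widetilde M_S)_{i,\pi(i)}$ has absolute value exactly $\eps^{k(\pi)}$, so your upper bound $\sum_{k\ge 1}N_{s,k}\eps^k$ is attained and exceeds $1$ once $s\gtrsim\log(1/\eps)$. Yet in that very example $\widetilde M_S$ is full rank (it is $(1-\eps)$ times a unit lower-triangular matrix plus a rank-one perturbation). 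The absolute-value Leibniz bound throws away exactly the cancellation that makes the determinant nonzero, so the approach cannot certify rank beyond $O(\log(1/\eps))$ in general. The ``excedance/deficiency pairing'' and ``dyadic bucketing'' ideas do not help: in the all-$1$/all-$\eps$ example there are no magnitude buckets to exploit and every deficiency factor is already maximal.

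The paper's proof is completely different and is spectral rather than combinatorial. After normalizing, it takes the top $s\times s$ principal submatrix and writes $M_s=L+E$ with $L$ lower triangular, $L_{i,i}=1$, $\|L\|_\infty\le 1$, and $\|E\|_\infty\le\eps$. Two estimates do all the work: from $\prod_i\sigma_i(L)=\det L=1$ and $\|L\|_F\le s$ one gets $\sigma_k(L)\ge\big((k-1)/s^2\big)^{(k-1)/s}$, and from $\|E\|_F\le\eps s$ one gets $\sigma_\ell(E)\le \eps s/\sqrt{\ell}$. A Weyl-type inequality applied to the symmetrization $\cS(M_s)=\cS(L)+\cS(E)$ then shows $\sigma_{k-\ell}(M_s)>0$ whenever $\sigma_k(L)>\sigma_\ell(E)$; optimizing $s,k,\ell$ yields the claimed $\min\{(\eps\log(1/\eps))^{-2},\,m/\log m\}$. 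The crucial point your approach misses is that $\det L=1$ is used as a \emph{singular-value constraint} (forcing many $\sigma_i(L)$ to be bounded away from $0$) rather than as one term in a Leibniz expansion.
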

\begin{proof}
%For brevity of notation we let $t=C \cdot \min \left\{ \left(\frac{1}{\eps \log \frac1{\eps}}\right)^2,\frac{m}{\log m}\right\}$. %We will obtain the constants suppressed in $t$ later. %
%So our goal is to show $\rank(M) \geq t$. 
Without loss of generality, perhaps after scaling each column of $M$ by its diagonal entry, we assume $M_{i,i}=1$ for all $i$.
 Note that rank and \eqref{eq:Mproperties_1} is invariant under scaling, so it is enough to prove the statement for such a matrix. Let $M_s$ denote the top left $s\times s$ principal submatrix of $M$ for some integer $s \leq m$ that we specify later. Note that rank is monotonically decreasing under taking principal sub-matrices, so this operations does not increase the rank and showing the assertion of the lemma  on $\rank(M_s)$ proves the lemma. Furthermore, \eqref{eq:Mproperties_1} is closed under taking principal sub-matrices. We can write  $M_s=L+E$ such that 
\begin{itemize}
\item $L \in \Re^{s\times s}$ is a lower triangular matrix where $L_{i,i}=1$ and $\abs{L_{i,j}}\leq 1$, for any $1\leq j \leq i \leq s$. In particular, $\norm{L}_{\infty}=1$.
\item $\norm{E}_{\infty}\leq \eps$ (note that we may further assume $E$ is upper triangular, but we do not use it in our proof).
\end{itemize}
 %Without loss of generality, perhaps by taking the top left $10d\log d\times 10d\log d$-principal submatrix of $M,L,E$, we assume that $m=10d\log d$. Note that rank is monotonically decreasing under taking principal sub-matrices, so this operations does not increase the rank. Furthermore, the above two properties of $L,E$ are closed under taking principal sub-matrices.
% rank at most $d$, and can be written as $M_{S,S}=T_{S,S}+E_{S,S}$ where submatrices $E_{S,S}$ and $T_{S,S}$ inherit the  aformentioned properties of $E$ and $T$. \\
Let $\sigma_1(M_s) \geq \ldots \geq \sigma_s(M_s)$ denote singular values of $M_s$. Obviously, $\sigma_{i}(M_s)>0$ implies $\rank(M_s) \geq i$ for any $1 \leq i \leq s$.
Considering this fact, let us give some intuition on why $M_s$ has a  large rank. Since $L$ is lower-triangular with non-zero entries on the diagonal, it  is a full rank matrix. Moreover, entries of $E$ are much smaller than (diagonal) entries of $L$. Singular values of $E$ are on average much smaller than those of $L$, so adding $E$ to $L$ can only make a small fraction of singular values of  $L$ vanish. This implies that $M_s=L+E$ must have a high rank.
Now we make the argument rigorous.  

Let $\cS(M_s),\cS(L),\cS(E)$ be the symmetrized versions of  $M_s,L$ and $E$ respectively (see \autoref{subsection:linalg}). 
By \autoref{fact:symmeigs}, to show $\sigma_{i}(M_s)>0$  for some $i$, we can equivalently prove $\lambda_{i}(\cS(M_s)) >0$. We use
Lemma \ref{lem:boundingeigvals}: Setting $A=\cS(L)$ and $B=\cS(E)$,   for any pair of integers $\ell<k\leq s$ such that 
\begin{equation}
\label{eq:findkl}
%\begin{aligned}
%k-l\geq t \hspace{0.5cm}  \text{and}  \hspace{0.5cm}    
\lambda_{k}(\cS(L)) +\lambda_{2s-\ell}(\cS(E)) > 0
%\end{aligned}
\end{equation}
we have $\lambda_{k-\ell}(\cS(M_s))>0$.
So to prove the lemma, it suffices to find $s, k$ and $\ell$ satisfying the above and $k-\ell \geq C\cdot\min\left \{ \left(\frac{1}{\eps \log \frac1{\eps}}\right)^2,\frac{m}{\log m}\right\}$ for some constant $C$.

%other words, we have to show the optimal of the following program is at least $t$. 
%\begin{equation}
%\begin{aligned}
%\max \hspace{0.5cm} &k-l \\
%\st \space{0.5cm} & 
%\end{aligned}
%\end{equation}
To find proper values of $k$ and $\ell$, we use the following two claims. 
%. Combining these two and applying  and $i=3d$,$j=2m-d$, implies $\lambda_{2d}(\cS(M)) >0$, and finishes the proof.  
\begin{claim}
\label{claim:1}
For any $\ell \leq s$, 
$$ \lambda_{2s-\ell}(\cS(E)) \geq \lambda_{2s-\ell+1}(\cS(E))=-\sigma_\ell(E) \geq \frac{-\norm{E}_F}{\sqrt{\ell}}\geq \frac{-\eps\cdot s}{\sqrt{\ell}}.$$
%$$\lambda_{2m-4m^2\eps^2}(\cS(E)) \geq \frac{-1}{2}.$$
\end{claim}

\begin{claim}
\label{claim:2}
For any $k<\frac{s}{2}$, 
%$$ \lambda_{\frac{m}{4\log m}}(\cS(L)) > \frac{1}{2}$$
$$ \lambda_k(\cS(L)) = \sigma_k(L) \geq \left(\frac{k-1}{s^2}\right)^{\frac{k-1}{s}}.$$
\end{claim}
Therefore, to show \eqref{eq:findkl} it is enough to show
\begin{equation}
\label{eq:findkl2}
%\begin{aligned}
s \log \frac{\sqrt{\ell}}{\eps s} > (k-1)\log \frac{s^2}{k-1},
%\end{aligned}
\end{equation}
for $k,\ell \leq \frac{s}{2}$.
We analyze the above in two cases. If $m\log m \leq \frac{1}{\eps^2}$, then one 
can see that for $s=m$, $k=\lfloor\frac{m}{4\log m}\rfloor$ and 
$\ell=\lfloor\frac{m}{8\log m}\rfloor$, and large enough $m$, \eqref{eq:findkl2} 
holds. It implies that in this case $\rank(M_s) \geq k-\ell \geq \frac{m}{8\log 
m}$, thus we are done. Now suppose that $\frac1{\eps^2} \leq m\log m$. We set 
$s<m$ to be the largest integer such that $s\log s\leq \frac{1}{16\eps^2}$. Next, 
we let $\ell=\lfloor 4\eps^2s^2\rfloor$. Note that $s\log s\leq \frac{1}{16\eps^2}$ 
implies $\ell \leq \frac{s}{4\log s}$. Now applying $\ell=\lfloor 4\eps^2s^2\rfloor$ 
into \eqref{eq:findkl2} turns it into 
\begin{equation*}
\begin{aligned}
s > (k-1)\log \frac{s^2}{k-1}.
\end{aligned}
\end{equation*}
So, for $k= \lfloor \frac{s}{2\log s}\rfloor$, the above is satisfied. Furthermore, in this case $k-\ell= \lfloor \frac{s}{2\log 
s}\rfloor-\lfloor4\eps^2s^2\rfloor \geq \frac{s}{4\log s} \geq \frac{1}{256\eps^2 \log^2\frac{1}{\eps}}$,  as  $s$ is the largest number such that $s\log s\leq \frac{1}{16\eps^2}$ and $\log s \leq 2\log \frac{1}{\epsilon}$. So the lemma holds for $C \geq \frac1{256}$. 
%Therefor, putting all together we get $\rank(M) \geq \frac{1}{8}\cdot \min \{\frac{1}{\eps^2 \log^2\frac{1}{\eps}},\frac{m}{\log m} \}$, which completes the proof.
\end{proof}

\begin{proofof}{Claim \ref{claim:1}}
%By \autoref{fact:symmeigs} %and the fact that $4m^2\eps^2 \leq m$, 
%we have $\lambda_{2m-k}(\cS(E))=-\sigma_{k+1}(E)$. So, equivalently, we prove $\sigma_{4m^2\eps^2}(E) \leq \frac{1}{2}$ (note that $\sigma$ values are defined in decreasing order). 
By Fact \ref{fact:singvaldet} we know that 
$$\sum_{i=1}^s \sigma_i(E)^2 =\| E \|_F^2 \leq \norm{E}^2_\infty \cdot s^2 \leq \eps^2\cdot s^2.$$
%where the inequality follows from that $\norm{E}_\infty \leq \eps$). 
Now, by Markov inequality we get 
$\sigma_{\ell}(E)^2 \leq \frac{\eps^2 s^2}{\ell}$. Therefore, the claim is proved.
\end{proofof}

\begin{proofof}{Claim \ref{claim:2}}
%Our goal is to prove $\lambda_{\frac{m}{4\log m}}(\cS(L))= \sigma_{\frac{m}{4\log m}}(L) > \frac{1}{2}$. 
Since $L$ is lower-triangular, we have that
\begin{equation}
\label{eq:det}
\prod_{i=1}^s \sigma_i(L) = \det{L} = \prod_{i=1}^s L_{i,i} =1,
\end{equation}
%where the last equation follows from $L_{ii}=1$ for all $1\leq i \leq m$ and $L$ is a triangular matrix. 
%Set $t=\frac{m}{4\log m}$. Then \eqref{eq:det} implies that 
It follows that for any $k\leq s$,
\begin{equation}
\label{eq:useofdet}
\prod_{i=1}^{k-1}\sigma_i(L) = \frac{1}{\prod_{j=k}^{s}\sigma_{j}(L)} \geq \frac{1}{\sigma_{k}(L)^{s-k+1}}.
\end{equation}
Now, we use the Frobenius norm to prove an upper bound on the first $k-1$ singular values. 
%Now again we use the relation between Frobenius norm and summation squared of the singular values.
By Fact \ref{fact:singvaldet},
\begin{equation} 
\label{eq:frobeniusnorm}
\sum_{i=1}^{s}\sigma_i(L)^2 = \|L\|_F^2 \leq \norm{L}_{\infty}\cdot s^2= s^2,
\end{equation}
%where the inequality follows by that $L$ . 
By AM-GM inequality we get
$$ \prod_{i=1}^{k-1}\sigma_i(L) \leq \left(\frac{\sum_{i=1}^{k-1}\sigma_i(L)^2}{k-1}\right)^{\frac{k-1}{2}} \leq \left(\frac{\norm{L}_F^2}{k-1}\right)^{\frac{k-1}{2}} \leq \left (\frac{s^2}{k-1}\right)^{\frac{k-1}{2}}.$$
%where the first inequality follows from \eqref{eq:frobeniusnorm}. Combining with \eqref{eq:useofdet}, we get
%$$\frac{1}{\sigma_{t}(L)^{m-d+1}} \leq \left (\frac{m^2}{t-1}\right)^{t},$$
%or equivalently
%$$\log \frac{1}{\sigma_{2d+k}(T)} \leq \frac{(d+k)(2\log m-\log d)}{m-2d+k}.$$
%Noting $t=\frac{m}{4\log m}$, and taking a logarithm  from both sides of the above implies $\sigma_{\frac{m}{4\log m}} > \frac{1}{2}$. Hence the proof is complete.
The above together with \eqref{eq:useofdet} proves $\sigma_k(L) \geq \left(\frac{k-1}{s^2}\right)^\frac{k-1}{2(s-k+1)}$. Noting that for $k \leq \frac{s}{2}$, $2(s-k+1)\geq s$ completes the proof  of the claim.
\end{proofof}

\medskip
We would like to thank an anonymous reviewer for suggesting an alternative algorithm for finding weak spanners. It offers slightly better guarantees, and finds a \wspa{$d$} 
 of size $O(d\log d)$.  However, as we argue, our algorithm can be made more efficient in practice, and in particular in a distributed setting.
\paragraph{An alternative 
Algorithm.}   For a 
subset $V \in \mathbb{R}^d$, define 
$\text{sym}(V)$ to be the symmetric 
set $\text{sym}(V)=V \cup \{-x | x 
\in V\}$. A direct application of  the separating hyperplane theorem shows that a subset $U \subseteq V$ is a \wspa{$\alpha$} of $V$, if 
$\text{conv}(\text{sym}(V))\subseteq \sqrt{\alpha}\cdot
\text{conv}(\text{sym}(U))$ where $\text{conv}$ refers to the convex hull of the set. Knowing this, we can apply the celebrated result of
F. John \cite{ball1997elementary} to get a \wspa{d}. Letting the notation MVEE of a set denote the minimum volume ellipsoid enclosing the set, it implies that there exists a subset $U \subset V$ of size $O(d^2)$ and an ellipsoid $E$  where $E=\text{MVEE}(\text{sym}(U))=\text{MVEE}(\text{sym}(V))$ and $\frac{E}{\sqrt{d}} \subseteq \text{conv}(\text{sym}(U))$. Therefore, $U$ is a \wspa{d} of $V$ with size $O(d^2)$. 
Moreover, the size of $U$ can be reduced to $O(d)$ by using the spectral sparsification machinery of \cite{batson2012twice}. We also note that these ideas has been extended in \cite{barvinok2014thrifty} to show that for any $d$-dimensional symmetric convex body $C$ and any $0 < \epsilon <1$, there is a polytope $P$ with roughly $d^{\frac{1}{\epsilon}}$ vertices such that $P \subset C \subset (\sqrt{\epsilon d})P$. In our terminology, it gives an algorithm to find a weak $d$-spectral spanner of $V$ of size $O(d)$. Although, the approximation guarantee can be improved by a 
log factor in this algorithm, this improvement comes at a cost.
%the cost of the following drawbacks 
%compared to the  greedy method described in Algorithm \ref{alg:d-spanner}. { 
First of all, finding the John's 
ellipsoid requires solving a semidefinite program  with $O(d^2)$ variables whereas 
 in Algorithm \ref{alg:d-spanner}, we only need to solve linear programs with $O(d)$ many variables. This requires polynomially smaller amount of memory.
%and can also cause stability issues. 
Furthermore, note that the main computational 
task of each step of Algorithm \ref{alg:d-spanner} is to solve $|V|$ 
feasibility LPs where each of 
them has  $\tilde{O}(d)$ variables and constraints. These LPs can be solved in parallel:  having access to $O(|V|)$ many processors, our greedy algorithm runs in $\text{poly}(d)$ time in PRAM model of computation. This extreme parallelism cannot be achieved using the above approach.
Finally, in finding the weak spanner one can {\em tune} the value of $\alpha$ in \eqref{eq:tunealpha} based on the structure of the given data points and the ideal size of the core-set, making the algorithm more suitable for applications.
%Finally, as alluded to  before, in the case the input vector are indicating edges of the graph, the output of the greedy algorithm offers a much stronger guarantee, and is a \wspa{$\log d$}. Based on that, we expect to see the greedy to produce spanenrs with better gaurantees for sparse vectors, however ....

\subsection{From Weak Spectral Spanners to Strong Spectral Spanners}
\label{subsection:dirheighttospanner}
In this section, we prove that  if $U$ is a 
\wspa{$\alpha$} of $V$, then it is a strong 
$\alpha$-spectral spanner of $V$. Combining 
with \autoref{prop:weakspanner} it proves 
\autoref{thm:main-spectralspanner}. 
\begin{lemma}
\label{lem:CoresetToSpannerD}
For any set of vectors $V \subset \Re^d$, any \wspa{$\alpha$} of $V$ is a strong $\alpha$-spectral spanner of $V$. 
\end{lemma}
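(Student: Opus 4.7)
The plan is to prove the contrapositive by a convex separation argument combined with Cauchy--Schwarz along a cleverly chosen test direction. Fix an arbitrary $v\in V$ and, for contradiction, suppose no probability distribution $\mu_v$ on $U$ satisfies $vv^{T}\preceq \alpha\cdot \mE_{u\sim\mu_v}uu^{T}$. Inside the space of symmetric $d\times d$ matrices, let $K=\mathrm{conv}\{uu^{T}:u\in U\}$, which is compact since $U$ is finite, and let $K'=\{\alpha M-vv^{T}:M\in K\}$. The assumed failure of the strong spanner property is precisely $K'\cap \mathcal{S}_d^{+}=\emptyset$.

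Next I would apply the strict separating hyperplane theorem (valid because $K'$ is compact and $\mathcal{S}_d^{+}$ is closed and convex) to obtain a nonzero symmetric matrix $A$ and a scalar $c$ with $\langle A,X\rangle<c<\langle A,Y\rangle$ for every $X\in K'$ and every $Y\in \mathcal{S}_d^{+}$. Taking $Y=0$ shows $c<0$, and exploiting that $\mathcal{S}_d^{+}$ is a cone (consider $Y=t\,ww^{T}$ as $t\to\infty$ along any would-be negative eigenvector $w$ of $A$) forces $A\succeq 0$. Unfolding the inequality on $K'$ gives
\[
\alpha\max_{u\in U} u^{T}Au \;<\; v^{T}Av,
\]
and since the left-hand side is nonnegative we additionally conclude $v^{T}Av>0$.

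To derive the contradiction I would now invoke the weak spectral spanner hypothesis along the test direction $x=Av$. This yields some $u\in U$ with $(u^{T}Av)^{2}\ge (v^{T}Av)^{2}/\alpha$. Because $A\succeq 0$, Cauchy--Schwarz applied to the semi-inner product $(w,w')\mapsto w^{T}Aw'$ gives $(u^{T}Av)^{2}\le (u^{T}Au)(v^{T}Av)$. Combining these two inequalities and dividing by $v^{T}Av>0$ produces $\alpha u^{T}Au \ge v^{T}Av$, which contradicts $\alpha u^{T}Au\le \alpha\max_{u'\in U}(u')^{T}Au' < v^{T}Av$ obtained from the separating functional.

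The main obstacle is the choice of test direction: an arbitrary $x$ only exploits the weak-spanner guarantee one rank-one direction at a time, whereas taking $x=Av$ is precisely what couples Cauchy--Schwarz to the separating PSD matrix $A$ and upgrades the single-direction weak condition into the PSD-level inequality that clashes with the separation. Everything else is a routine finite-dimensional convex separation setup, with only brief care needed to verify $A\succeq 0$ and $v^{T}Av>0$.
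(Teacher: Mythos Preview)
Your proof is correct. Both your argument and the paper's hinge on producing a PSD ``witness'' matrix from convexity and then converting it into a single bad direction contradicting the weak-spanner property, but the two implementations differ in both halves.

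The paper casts the search for $\mu_v$ as an SDP (maximize $\delta$ subject to $\delta\,vv^T\preceq \mE_{u\sim\mu_v}uu^T$), verifies Slater's condition, and passes to the dual: a PSD matrix $X$ with $u^TXu\le\lambda$ for all $u\in U$ and $v^TXv\ge 1$. To show $\lambda\ge 1/\alpha$, it then performs a \emph{second} separation, this time in $\Re^d$: if $v/\sqrt{\alpha}$ lay outside the ellipsoid $\{x:x^TXx\le\lambda\}$, the normal $e$ of a separating hyperplane would be a direction with $\langle v,e\rangle^2>\alpha\max_{u\in U}\langle u,e\rangle^2$.

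You instead separate once, directly in the space of symmetric matrices, obtaining a PSD $A$ with $\alpha\,u^TAu<v^TAv$ for all $u\in U$; then the choice $x=Av$ together with Cauchy--Schwarz for the semi-inner product $(w,w')\mapsto w^TAw'$ collapses the weak-spanner inequality into $\alpha\,u^TAu\ge v^TAv$, the desired contradiction. This avoids both the SDP-duality machinery and the ellipsoid-separation step in $\Re^d$, at the cost of the (standard but perhaps less geometrically transparent) Cauchy--Schwarz trick. The two arguments are essentially dual views of the same phenomenon; yours is the more self-contained of the two.
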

\begin{proof}	
Let $U$ be a \wspa{$\alpha$} of $V$.
Fix a vector $v\in V$, we write a program to find a probability distribution $\mu_v:U\to \Re^+$ such that 
    $vv^T  \preceq \frac1{\delta} \cdot \EE{u \sim \mu_v}{uu^T}$,
for the largest possible $\delta$.
It turns out that this is a semi-definite program, where we have a variable $p_u=\Pr_{\mu_v}(u)$ to denote the probability of each vector $u\in U$, see \eqref{CP:spanner} for details.
%Fix a vector $v$. Our idea is to  use an SDP program  to find  the distribution $\mu_v$(we denote it by $\mu$ in the SDP) satisfying \eqref{eq:firstspannercond}. 
\begin{equation}
				\begin{aligned}
				\max \hspace{.7cm} & \delta \\
				\text{s.t } \hspace{.7cm} &  \delta\cdot vv^T \preceq \EE{u \sim \mu_v}{uu^T} \\
                \hspace{.7cm} & \mu_v \text{ is a distribution on } U 
                %\\ &\delta >0
                \end{aligned}
                \label{CP:spanner}
\end{equation}

To prove the lemma, it suffices to show the optimal of the  program is at least $\frac{1}{\alpha}$.
To do that, we analyze the dual of the program.
% using the following inequality which follows from $U$ being a \wsp of $V$.
%\begin{equation}
 %   \label{eq:1approxcoreset}
%	\forall x \in \Re^d, \hspace{1mm} \langle v, x \rangle^2 \leq \alpha\cdot \max_{u \in U} \langle u,x\rangle^2 
%	.\end{equation}
We first show the set of feasible solutions of the program has 
	a non-empty interior; this implies that the Slater condition is satisfied, and the duality gap is zero. Then we show any solution of the dual has value at least $1/\alpha$.% Then the result follows form the strong duality. 
	
	To see 
	the first assertion, we let $\mu_v$ be equal to the uniform distribution on $U$ and $\delta \leq \frac{1}{\alpha|U|}$. It is not hard to see that this is a 
	feasible solution of the program since  $U$ is a \wspa{$\alpha$}.
	
	Next, we prove the second statement. First we write down the dual.	\begin{align*}
	\min \hspace{.5cm} &\lambda \hspace{1.5cm}\\
	\text{s.t.}\hspace{.5cm} &u^{T}Xu\leq \lambda, \, \forall u \in U\\
    \hspace{.5cm} &v^TXv \geq 1\\
    \hspace{.5cm} &X \succeq 0
	\end{align*}
 Let $(X,\lambda)$ be a feasible solution of the dual. Our goal is to show $\lambda \geq \frac{1}{\alpha}$.
Let $E=\{x \in 
	\mathbb{R}^d \, | \, x^TXx \leq \lambda \}$ be an   ellipsoid of radius $\sqrt{\lambda}$ defined by 
	$X$. The set $E$ has the following properties:
	\begin{itemize}
	\item Convexity,
	\item Symmetry: If $x\in E$, then $-x\in E$,
	\item $U\subseteq E$: By the dual constraints $u^{\intercal}Xu \leq \lambda$ for all $u \in U$.
	\end{itemize}
	Let $\bar{v}=v/\sqrt{\alpha}$. We claim that $\bar{v}\in E$. Note that if $\bar{v}\in E$ we obtain 
	$$\lambda\geq \bar{v}^\intercal X \bar{v} \geq \frac{1}{\alpha},$$ 
	which completes the proof. 
	
	For the sake of contradiction suppose $\bar{v} \notin E$. We show that $U$ is not a \wspa{$\alpha$}. 
	By convexity of $E$  there is a hyperplane  separating $\bar{v}$ from $E$. So there is a vector $e \in \Re^d$ such that
	\begin{align*}
 \langle v,e\rangle = \sqrt{\alpha}\cdot \langle \bar{v},e \rangle  \geq \sqrt{\alpha} \hspace{1cm} \text{and} \hspace{1cm}
	\forall x\in E, \hspace{1mm}\langle x,e \rangle <1.
	\end{align*}
	Moreover, by symmetry of $E$, for any $x\in E$,
	$$\langle x,e \rangle ^2 \leq \max\{\langle x,e\rangle, \langle -x,e\rangle\}^2 <1$$ 
Finally, since $U \subset E$, we obtain 
$\max_{u\in U} \langle u,e \rangle^2 <1.$
Therefore, $\langle v,e\rangle ^2\not\leq \alpha\max_{u\in U}\langle u,e\rangle^2$ which implies $U$ is not a \wspa{$\alpha$}.
%Setting $x=e$ in \eqref{eq:1approxcoreset} establishes a contradiction, as  $\langle v,e \rangle^2 \geq \alpha$ and $\max_{u\in U} \langle u,e \rangle^2 <1$ .  Hence $\lambda \geq \frac{1}{\alpha}$, and the proof is complete.	
\end{proof}

%\begin{proofof}{Theorem \ref{thm:mainspectralspannerford}}
%Run algorithm \ref{Alg:dirheight} on $V$ with $\alpha=400d\log^2d$, and let $U$ be the output. Combining Lemmas \ref{lem:LPdirheight} and \ref{lem:analyzesize} implies $U$ is a \wsp  of size $O(d\log d)$.  Now, we can apply Lemma \ref{lem:CoresetToSpannerD} to conclude it is an $400d\log^2d$-spectral spanner which completes the proof. 
%\end{proofof}

\section{Construction of Spectral $k$-Spanners}
\label{sec:spannerk}
In this section we extend our proof on spectral $d$-spanner to spectral $k$-spanners for $k<d$, this proves our main theorem \ref{thm:mainspannerfork}. 
Here is our high-level plan of proof:  First we use
the greedy algorithm of \cite{cm-smvsm-09} for volume maximization to 
 find an $\tilde{O}(k)$-dimensional linear subspace 
of $\Re^d$ onto which input vectors  have a ``large'' projection. Next, we 
apply  \autoref{thm:main-spectralspanner} to this $\tilde{O}(k)$-dimensional space to obtain the desired spectral $k$-spanner.
% To find the  proper $\tilde{O}(k)$-dimensional subspace, we use a .
\subsection{Greedy Algorithm for Volume Maximization}
In this subsection, we prove the following statement.
\begin{proposition}
\label{lem:orthogonalpartofv}
For any set of vectors $V \subset \Re^d$, and any $k<d$ and $m>2k$, there is a set $U\subseteq V$ of size $m$ such that for all $v\in V$ we have 
\begin{equation}\label{eq:vUperpkspanner} v_{U^\perp} v_{U^\perp}^\intercal \preceq_k 2m^{(\frac{2k}{m})} \cdot \mE_{u\sim\mu} uu^\intercal,
\end{equation}
where $v_{U^\perp}=\Pi_{\langle U\rangle^\perp}(v)$ is the projection of $v$ on the space orthogonal to the span of $U$ and $\mu$ is the uniform distribution on the set $U$.
%where  $\Pi \in \Pi_{d-k+1}$ is a projection matrix to an arbitrary $d-k+1$ dimensional subspace. 
\end{proposition}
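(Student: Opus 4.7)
The plan is to construct $U$ via the Civril--Magdon-Ismail-style greedy volume maximization algorithm: starting with $U_0=\emptyset$, pick $u_i=\argmax_{v\in V}\|v-\Pi_{\langle U_{i-1}\rangle}v\|$ and write $\ell_i$ for the achieved distance, for $i=1,\ldots,m$. The greedy rule gives $\ell_1\geq\ell_2\geq\cdots\geq\ell_m$ and, crucially, $\|v_{U^\perp}\|\leq\ell_m$ for every $v\in V$. Let $[u_1,\ldots,u_m]=ER$ be the QR factorization, where $E=[e_1,\ldots,e_m]$ is the Gram--Schmidt basis and $R$ is upper triangular with $R_{ii}=\ell_i$; the greedy inequalities translate into the column constraint $\sum_{i'\in[i,j]}R_{i',j}^2\leq\ell_i^2$ for all $j\geq i$.

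Writing $M=\sum_j u_ju_j^\intercal$ and $B=\tfrac{2m^{2k/m}}{m}M$, the $\preceq_k$ relation reduces, by \autoref{lem:minquadform}, to checking $\langle v_{U^\perp}v_{U^\perp}^\intercal,\Pi\rangle\leq\langle B,\Pi\rangle$ for every $\Pi\in\mathcal{P}_{d-k+1}$. The left side is at most $\|v_{U^\perp}\|^2\leq\ell_m^2$, while $\min_\Pi\langle B,\Pi\rangle=\tfrac{2m^{2k/m}}{m}\sum_{i=k}^m\sigma_i(R)^2$, since $\lambda_i(M)=\sigma_i(R)^2$ for $i\leq m$ and $0$ otherwise. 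Consequently it suffices to prove
\[
\sum_{i=k}^m\sigma_i(R)^2\;\geq\;\tfrac{1}{2}\,m^{1-2k/m}\,\ell_m^2.
\]

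The key auxiliary estimate is $\sigma_i(R)^2\leq m\,\ell_i^2$, which I obtain from the min-max formula $\sigma_i(R)^2=\sigma_i(R^\intercal)^2=\min_{\dim W=m-i+1}\max_{\|w\|=1,\,w\in W}\|R^\intercal w\|^2$ by choosing $W=\langle e_i,\ldots,e_m\rangle$ and applying Cauchy--Schwarz column-wise to $\|R^\intercal w\|^2=\sum_{j\geq i}\bigl(\sum_{i'\in[i,j]}R_{i',j}w_{i'}\bigr)^2$ using the greedy constraint $\sum_{i'\in[i,j]}R_{i',j}^2\leq\ell_i^2$. Combined with the identity $\prod_i\sigma_i^2=\det(R)^2=\prod_i\ell_i^2$, this gives $\prod_{i=k}^m\sigma_i^2\geq\ell_m^{2(m-k+1)}/m^{k-1}$, and AM--GM then yields $\sum_{i=k}^m\sigma_i^2\geq(m-k+1)\,\ell_m^2\,m^{-(k-1)/(m-k+1)}$. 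The target inequality reduces to the elementary estimate $2(m-k+1)\geq m^{1-2k/m+(k-1)/(m-k+1)}$, and the hypothesis $m>2k$ is used exactly here: it implies both $(k-1)/(m-k+1)\leq 2k/m$ (so the exponent is at most $1$) and $m\leq 2(m-k+1)$. The most delicate step I anticipate is the bound $\sigma_i(R)^2\leq m\ell_i^2$, which crucially uses the column sum constraint $\sum_{i'\in[i,j]}R_{i',j}^2\leq\ell_i^2$ rather than the weaker entrywise bound $|R_{ij}|\leq\ell_i$; without this refinement the AM--GM argument loses a factor of $(\ell_1/\ell_m)^{2(k-1)}$ that cannot be absorbed.
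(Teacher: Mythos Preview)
Your proof is correct and follows essentially the same route as the paper: greedy volume maximization, the Gram--Schmidt/QR structure of the chosen vectors, the reduction (via \autoref{lem:minquadform}) to the scalar inequality $\ell_m^2 \le 2m^{2k/m}\cdot\frac1m\sum_{i=k}^m\sigma_i(R)^2$, and then the determinant identity $\prod_i\sigma_i^2=\prod_i\ell_i^2$ combined with AM--GM. The only cosmetic difference is in how you obtain the key bound $\sigma_i(R)^2\le (m-i+1)\,\ell_i^2$: the paper passes to the bottom-right $(m-i+1)\times(m-i+1)$ submatrix via Cauchy interlacing and bounds $\sigma_1$ by the Frobenius norm, whereas you use the Courant--Fischer min-max formula with the test subspace $\langle e_i,\dots,e_m\rangle$ and Cauchy--Schwarz column-wise---both arguments hinge on exactly the same greedy column constraint $\sum_{i'\in[i,j]}R_{i',j}^2\le\ell_i^2$ and yield the same bound.
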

As we will see in the next subsection, for $m=\Theta(k\log k)$, the set $U$ promised above will be a part of our spectral $k$-spanner. 
Roughly speaking, to obtain a spectral spanner of $V$,   it is enough to additionally add a spectral spanner of $\{\Pi_{\langle U\rangle}(v)\}_{v\in V}$. In the next subsection, will use \autoref{thm:main-spectralspanner} for the latter part.

First, we will describe an algorithm to find the set $U$ promised in the proposition. Then, we will prove the correctness.
We use the greedy algorithm of \cite{ccivril2009selecting} for volume maximization to find the set $U$.
%and prove a property of that which we crucially use in the analysis of our algorithm.
%Given a set of vectors $V \subset \Re^d$ and an integer  $t \leq d$, the goal of this greedy 
%algorithm is to find a subset $S (|S|=t)$ of vectors such that  the volume of the $t$-
%dimensional parallelepiped spanned by them is approximately maximized among all 
%subsets of size $t$ \footnote{The approximation factor is roughly $\frac{1}{t!}$}.
\begin{algorithm}[H]
\SetKwProg{Fn}{Function}{}{}
\DontPrintSemicolon
\Fn{Volume Maximization($V$,$m$)}{
Let $U=\emptyset$\;
While $|U| <m$, add $\argmax_{v\in V} \norm{\Pi_{\langle U\rangle^\perp}(v)}$ to $U$\;
Return $U$.
}
\caption{Greedy Algorithm for Volume Maximization}
\label{alg:greedy}
\end{algorithm}
%As the description of the algorithm suggests, if a vector is not chosen by the algorithm, 
%then intuitively speaking most of its length lies onto the subspace spanned by $S$. We make this concrete by showing for $t=O(k \log k)$, if a vector $v \notin S$, then 
%$$\norm{\Pi_{\langle S\rangle ^\perp}(v)}^2 \leq \sum_{u \in S} \norm{H u}^2,$$
%for any $H \in \mathcal{P}_{d-k+1}$ (recall $\mathcal{P}_{d-k+1}$ is the set of all projection matrices onto $(d-k+1)$-dimensional linear subspaces). This allows us to reduce the 
%whole space into a $O(k \log k)$-dimensional space, and apply  our algorithm for finding spanners in this space. 
%The following lemma states this property formally.
%\begin{proof}
Let $U=\{u_1,\dots,u_m\}$ be the output of the algorithm and suppose $u_i$ is the $i$-th vector added to the set, and $\mu$ be a uniform distribution on $U$. Fix a vector $v\in V$ for which we will verify the assertion of the proposition. Note that if $v\in U$ the statement obviously holds. So, assume $v\notin U$.
%$S=\langle v_1,\ldots,v_m \rangle$ be the subspace spanned by $\{v_1,\ldots,v_m\}$. 

Fix a $\Pi\in\Pi_{d-k+1}$. Observe that $\langle v_{U^\perp}v_{U^\perp}^\intercal,\Pi\rangle \leq \norm{\Pi_{\langle U\rangle^\perp}(v)}^2$. On the other hand, 
by \autoref{lem:minquadform}, 
$\left\langle \mE_{u\sim\mu}  uu^\intercal,\Pi\right\rangle \geq \sum_{i=k}^d \lambda_i$ where $\lambda_1 \geq \lambda_2 \geq \ldots \geq \lambda_d$ are eigenvalues of $\mE_{u\sim\mu} uu^\intercal$.
Therefore, to prove \eqref{eq:vUperpkspanner}, it suffices to prove  
\begin{equation}
\label{eq:greedyvolgoal}
\norm{\Pi_{\langle U\rangle^\perp}(v)}^2 \leq 2m^{\frac{2k}{m}}\cdot \sum_{i=k}^d \lambda_i.
\end{equation}
%The reason is simple.  Applying Lemma 
%\ref{lem:minquadform} implies   
%\begin{equation*}
%\sum_{i=1}^{m}\sum_{j=1}^{d-k+1} \langle x_j,v_i \rangle^2 =\sum_{j=1}^{d-k+1}  x_j^\intercal \left( \sum_{i=1}^{m},v_iv_i^\intercal\right) x_j  \geq \sum_{i=k}^d \lambda_i
%,\end{equation*}
%which shows \ref{eq:greedyvolgoal} implies the theorem. 
%So let us prove \ref{eq:greedyvolgoal}.

Define $\hu_1,\hu_2,\ldots,\hu_m$ to be an orthonormal basis of $\langle U\rangle$ obtained by the Gram-Schmidt process on $u_1,\dots,u_m$, i.e., $\hu_1=\frac{u_1}{\norm{u_1}}$, $\hu_2=\frac{\Pi_{\langle u_1\rangle^\perp}(u_2)}{\norm{\Pi_{\langle u_1\rangle^\perp}(u_2)}}$ and so on. Define $M\in \Re^m$ to be a matrix where 
%columns are representing $v_i$'s and rows are representing $v'_i$s. More precisely, 
the $i_{\text{th}}$ column is the representation of $u_i$ in the orthonormal basis formed by $\{\hu_1,\ldots,\hu_m\}$, i.e., for all $1\leq i,j\leq m$,
$$M_{i,j}= \langle u_j,\hu_i\rangle.$$   
Note that $\mE_{u\sim\mu} uu^\intercal$ is the same as $\frac1m MM^\intercal$ up to a rotation of the space.
In other words, both matrices have the same set of non-zero eigenvalues.
%The matrix $\sum_{i} v_iv_i^\intercal$ has only $m$ non-zero eigenvalues, and they are the same as eigenvalues of $M^\intercal M$. 
Since eigenvalues of $\frac1m M M^\intercal$ are the squares of the singular values of $\frac1{\sqrt{m}}M$,  to prove \eqref{eq:greedyvolgoal} it is enough to show
\begin{equation}
\label{eq:maingoaloforthogonalpart}
\norm{\Pi_{\langle U\rangle^\perp}(v)}^2 \leq 2m^{\frac{2k}{m}}\cdot \sum_{i=k}^m \sigma_i^2(m^{-1/2} M).
\end{equation}
%where $\sigma_1\geq \ldots\geq \sigma_m$ are the singular values of $\frac1{\sqrt{m}}M$.

Since $v\in V$ and $v\notin U$ we get
$\norm{\Pi_{\langle U\rangle ^\perp}(v)}^2 \leq \norm{\Pi_{\langle \hu_1,\dots,\hu_{m-1}\rangle^\perp}(u_m)}^2 =M_{m,m}^2.$
So to prove \eqref{eq:maingoaloforthogonalpart}, it suffices to show 
\begin{equation}
\label{eq:maingoaloforthogonalpart2}
 M_{m,m}^2 \leq 2m^{\frac{2k}{m}}\cdot \sum_{i=k}^m \sigma_i^2(m^{-1/2}M)
\end{equation}

Note that the above inequality can be seen just as a property of the matrix $M$. First, let us discuss properties of $M$ that we will use to prove the above:
\begin{enumerate}[label=\Roman*)]
\item $M$ is upper-triangular as $u_i \in \langle \hu_1,\dots,\hu_i \rangle$.
\item By description of the algorithm, for any $i < j \leq m$ we have 
\begin{equation}
\label{eq:Mproperties}
M_{i,i}^2=\norm{\Pi_{\langle \hu_1,\dots,\hu_{i-1}\rangle^\perp}(u_i)}^2 \leq \norm{\Pi_{\langle \hu_1,\ldots,\hu_{i-1}\rangle^\perp}(u_j)}^2 =\sum_{\ell=i}^j M_{\ell,j}^2
\end{equation}
\end{enumerate}
%To show \eqref{eq:maingoaloforthogonalpart} we crucially use the following fact  which is an immediate consequence of the fact that $v_1,\ldots,v_m$ is the sequence of vectors chosen by the volume maximization greedy algorithm. 
%\begin{fact}
%\label{fact:MatrixProperties}
%\hspace{1mm}
%\begin{itemize}
%\item $M$ is upper-triangular, i.e. $M_{ij}=0$ for $i >j$.
%\item For any $ 1\leq a<b \leq m$, we have $M_{a,a}^2 \geq \sum_{i=a}^b M_{i,b}^2$. 
%\end{itemize}
%\end{fact}
The following lemma completes the proof of \autoref{lem:orthogonalpartofv}.
\begin{lemma}
Let $M\in \Re^{m\times m}$ satisfying (i) and (ii). For any $k<m/2$, we have
$$ M_{m,m}^2 \leq 2m^{\frac{2k}{m}}\sum_{i=k}^m\frac1m \sigma_i^2(M).$$	
\end{lemma}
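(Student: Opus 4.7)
Without loss of generality (scaling each column if needed and applying $-1$'s) we may take all diagonal entries positive; write $a_i = M_{i,i}$ and $c_i = a_i^2$. A first observation is that property (II) with $j = i+1$ gives $c_i \geq M_{i,i+1}^2 + c_{i+1} \geq c_{i+1}$, so the diagonal is monotone: $c_1 \geq c_2 \geq \dots \geq c_m$. The plan is to mirror the proof of Claim~\ref{claim:2}, using the determinantal identity $\prod_i \sigma_i(M)^2 = (\det M)^2 = \prod_i c_i$ together with AM-GM; the role of property (II) is to bound the Frobenius norm of $M$ in terms of $c_1$.

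The first step is to observe that property (II) applied with $i=1$ gives, for every column $j$,
\[
\|M_{:,j}\|^2 = \sum_{\ell=1}^{j} M_{\ell,j}^2 \leq c_1,
\]
so $\|M\|_F^2 \leq m c_1$. AM-GM on the top $k-1$ singular values squared then yields
\[
\prod_{i=1}^{k-1} \sigma_i(M)^2 \leq \Bigl(\tfrac{\|M\|_F^2}{k-1}\Bigr)^{k-1} \leq \Bigl(\tfrac{m c_1}{k-1}\Bigr)^{k-1},
\]
and combining with the determinantal identity gives the lower bound
\[
\prod_{i=k}^{m} \sigma_i(M)^2 \;\geq\; \prod_{i=1}^m c_i \cdot \Bigl(\tfrac{k-1}{m c_1}\Bigr)^{k-1}.
\]

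The second step is to apply AM-GM to the bottom $m-k+1$ singular values, getting
\[
\sum_{i=k}^{m} \sigma_i(M)^2 \;\geq\; (m-k+1)\Bigl(\prod_{i=k}^m \sigma_i(M)^2\Bigr)^{1/(m-k+1)},
\]
and then substitute the previous bound and the crude estimate $\prod_i c_i \geq c_1 \cdot c_m^{m-1}$ (using $c_i \geq c_m$). After simplification the powers of $c_1$ cancel and one obtains a bound of the form
\[
\sum_{i=k}^m \sigma_i(M)^2 \;\geq\; (m-k+1)\, c_m \cdot \Bigl(\tfrac{k-1}{m}\Bigr)^{(k-1)/(m-k+1)}\cdot (\text{error factor in } c_1/c_m),
\]
and the final task is to verify that this right-hand side is at least $\tfrac{m c_m}{2 m^{2k/m}}$ whenever $k < m/2$; this is an elementary but careful algebraic check using $(k-1)/(m-k+1) \leq 2k/m$ and $m - k + 1 \geq m/2$.

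The main obstacle I expect is the last algebraic verification: the naive two-step AM-GM becomes loose precisely when $\sigma_1$ is much larger than the other $\sigma_i$'s (e.g.\ when $c_1/c_m$ is huge), and one must check that the gap is absorbed by the $m^{2k/m}$ factor on the target side, or tightened using Weyl's multiplicative majorization $\prod_{i=1}^r \sigma_i \geq \prod_{i=1}^r a_i$ (which follows since $M$ is triangular). Once this calibration is settled, the rest of the proof is a routine chain of the inequalities listed above, entirely analogous to the bound $\sigma_k(L)\geq((k-1)/s^2)^{(k-1)/s}$ established in Claim~\ref{claim:2}.
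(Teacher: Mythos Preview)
Your outline follows the same skeleton as the paper's proof (determinant identity $\prod_i \sigma_i^2 = \prod_i c_i$, then AM-GM on the bottom block of singular values), and your observations that $c_1 \geq c_2 \geq \dots \geq c_m$ and $\|M\|_F^2 \leq m c_1$ are correct. The gap is in the upper bound you use for $\prod_{i=1}^{k-1}\sigma_i^2$.

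Your bound $\prod_{i=1}^{k-1}\sigma_i^2 \le (mc_1/(k-1))^{k-1}$ only sees $c_1$, so after substituting $\prod_i c_i \geq c_1 c_m^{m-1}$ and taking the $(m-k+1)$-th root you end up with
\[
\sum_{i=k}^m \sigma_i^2 \;\geq\; (m-k+1)\,c_m\cdot\Bigl(\tfrac{c_1}{c_m}\Bigr)^{(2-k)/(m-k+1)}\cdot\Bigl(\tfrac{k-1}{m}\Bigr)^{(k-1)/(m-k+1)}.
\]
For $k\geq 3$ the exponent $(2-k)/(m-k+1)$ is negative, so the factor $(c_1/c_m)^{(2-k)/(m-k+1)}$ can be made arbitrarily small by taking $c_1/c_m$ large; the inequality you are trying to establish is uniform in $c_1/c_m$, so this route cannot close. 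Your proposed fix, Weyl's majorization $\prod_{i=1}^r \sigma_i \geq \prod_{i=1}^r |a_i|$, points the wrong way: it gives a \emph{lower} bound on $\prod_{i=1}^{k-1}\sigma_i^2$ (equivalently an \emph{upper} bound on $\prod_{i=k}^m\sigma_i^2$), whereas you need the opposite.

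What the paper does instead is bound each top singular value individually by its own diagonal entry, not by $c_1$: using Cauchy interlacing on the bottom-right $(m-i+1)\times(m-i+1)$ principal submatrix $M_i$ together with property (ii) (which gives $\|M_i\|_F^2 \leq (m-i+1)c_i$), one gets
\[
\sigma_i(M)^2 \;\leq\; \sigma_1(M_i)^2 \;\leq\; \|M_i\|_F^2 \;\leq\; (m-i+1)\,c_i .
\]
This yields $\prod_{i=1}^{k-1}\sigma_i^2 \leq m^{k}\prod_{i=1}^{k-1}c_i$, so the $c_i$'s cancel exactly against $\prod_i c_i$, leaving $\prod_{i=k}^m c_i \leq m^k\bigl(\frac{1}{m-k+1}\sum_{i=k}^m\sigma_i^2\bigr)^{m-k+1}$; combined with $\prod_{i=k}^m c_i \geq c_m^{m-k+1}$ and $m-k+1\geq m/2$ this gives the claim. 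The interlacing step is the missing idea in your plan.
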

\begin{proof}
Here is the main idea of the proof. First, we   use Cauchy-Interlacing theorem  along with property (ii)  to deduce $\sigma_i$ cannot be much larger than $M_{i,i}$. Then, we combine it with the fact that $M$ is upper triangular and so  $\det(M)= \prod_{i=1}^m M_{i,i} = \prod_{i=1}^m \sigma_i,$
to upper-bound $M_{m,m}^2$ by a multiple of $\sum_{i=k}^m \sigma_i^2(M)$.

%The following claim  states the first part of the aforementioned argument formally.
%\begin{claim}
%\label{claim:cauchyinterlacingapp}
First, we show for all $1 \leq i \leq m$, 
\begin{equation}\label{eq:sigmaiMupper} \sigma_i^2(M) \leq (m-i+1)M_{i,i}^2.	
\end{equation}
Define $M_i$ to be the $(m-i+1)\times(m-i+1)$ matrix obtained by removing the first $i-1$ rows and columns of $M$. %Moreover, let $\mu_1 \geq \mu_2 \ldots,\geq \mu_{m-i+1}$ be the  singular values of $M_i$.  
%The claim follows straight-forwardly from the combination of the following two observations: 
First, Cauchy interlacing theorem tells us $ \sigma_i(M) \leq \sigma_1(M_i)$. Secondly, by Fact \ref{fact:singvaldet} and property (ii) we have
$$\sigma_1(M_i)^2 \leq \sum_{j=1}^{m-i+1} \sigma_j(M_i)^2 = \norm{M_i}_F^2  \leq  (m-i+1)M_{i,i}^2.$$
This proves \eqref{eq:sigmaiMupper}.
 Since $M$ is upper-triangular,
\begin{equation*}	
\det(M)^2=\prod_{i=1}^m M_{i,i}^2 = \prod_{i=1}^m \sigma_i^2(M) \leq \left(\frac{\sum_{j=k}^m \sigma_j(M)^2}{m-k+1}\right)^{m-k+1} \prod_{i=1}^{k-1} \sigma_i^2 =:\beta \prod_{i=1}^{k-1}\sigma_i^2 
\end{equation*}
where the inequality follows by the AM-GM inequality and $\beta=(\frac{\sum_{j=k}^m \sigma_j(M)^2}{m-k+1})^{m-k+1}$. 
By \eqref{eq:sigmaiMupper}, 
\begin{equation}
\label{eq:applydet}
\prod_{i=1}^m \sigma_i^2(M) \leq \beta \prod_{i=1}^{k-1} (m-i+1)M_{i,i}^2 \leq m^k\beta \prod_{i=1}^{k-1} M_{i,i}^2.
\end{equation}
Using $\prod_{i=1}^m \sigma_i^2=\prod_{i=1}^m M_{i,i}^2$ again, we get
$$ \prod_{i=k}^m M_{i,i}^2 \leq m^k\beta.$$
Using property (ii) again, we have $M_{i,i}^2 \geq M_{m,m}^2$ for all $i$. Therefore, 
$\prod_{i=k}^m M_{i,i}^2 \geq M_{m,m}^{2(m-k+1)}$, we get
$$ M_{m,m}^{2(m-k+1)} \leq m^k\beta$$
The lemma follows by raising both sides to $1/(m-k+1)$ and using that $m-k+1\geq m/2$ since $k<m/2$.
\end{proof}

%which proves the lemma.
%\end{proof}
\subsection{Main algorithm}
In this section we prove Theorem \ref{thm:mainspannerfork}. The details of our algorithm are described in Algorithm \ref{alg:k-spanner}.

%\noindent\begin{proofof}{Theorem \ref{thm:mainspannerfork}}
%Let us begin with describing the algorithm.\\
\begin{algorithm}
\SetKwProg{Fn}{Function}{}{}
\DontPrintSemicolon
\Fn{Spectral-$k$-Spanner($V$,$\alpha$,$k$)}{
Set $m$, such that $m^{\frac{2k}{m}}=O(1)$.\;
Run Volume-Maximization($V$,$m$) of Algorithm \ref{alg:greedy} and let $U$ be the output, i.e., the set of vectors satisfying \eqref{eq:vUperpkspanner}. \;
3. Run Spectral-$d$-Spanner($\{\Pi_{\langle U\rangle}(v)\}_{v\in V}$,$\alpha$) of  Algorithm \ref{alg:d-spanner}
and let $W$ be the output of the corresponding spectral $m$-spanner. \;
4. Return $U\cup \{v: \Pi_{\langle U\rangle}(v)\in W\}$.
}
\caption{Finds an $\alpha$-Spectral $k$-Spanner}
\label{alg:k-spanner}
\end{algorithm}

%Now, let us depart to analyzing the algorithm and prove it generates an $O(k\log^3k)$-spectral $k$-spanner for $V$.

 First of all let us analyze the size of the output. By definition, Algorithm 
\ref{alg:greedy} returns $m$ vectors. Then, by \autoref{thm:main-spectralspanner}, Algorithm \ref{alg:d-spanner} has size at most  
$O(m\log m)$. Since $m=O(k\log k)$, the size of the output is $|U|+|W| \leq O(k\log^2k)$, as desired. 

In the rest of this section we prove the correctness. Fix a vector $v \in V$, we need to find a distribution $\mu_v$ on $U\cup W$ such that $vv^\intercal \preceq_k \alpha \mE_{u\sim\mu_v} uu^\intercal$ for some $\alpha=\tilde O(k)$.

%For simplicity let $U=\{v_1,\ldots,v_m,\ldots,v_l\}$ be the output, where the first $m$ vectors are those  returned by the greedy volume maximization algorithm. Fix $v \in V$, and without 
%loss of generality assume $v\notin U$. Our goal is to find distribution $\mu_v:\{v_1,\ldots,v_\ell\}\to \Re^+$, such that for any $H \in \mathcal{P}_{d-
%k+1}$, $$\langle vv^\intercal H\rangle \leq O(k\log^3 k)\cdot \langle 
%\EE{v_i \sim \mu_v}{v_iv_i^\intercal},H\rangle.$$  
%Equivalently by defining   
%$s_i=O(k\log^3 k)\cdot\mu_v(v_i)$ for any $i$, we aim to find  $s_1,\ldots,s_m$ such that  
%\begin{equation}
%\label{eq:conofspanner}
%\begin{aligned}
%\sum_{i=1}^\ell s_i &= O(k \log^3k)\\
%\forall H \in \mathcal{P}_{d-k+1},\hspace{0.1cm}\langle vv_\intercal,H \rangle &\leq \langle \sum_{i=1}^m s_iv_iv_i^\intercal,H\rangle
%\end{aligned}
%\end{equation}
%\label{eq:secconofspanner} and that for all orthonormal vectors $x_1,\ldots,x_{d-k+1} \in \Re^d$ 
%\begin{equation}
%\label{eq:spannergoal}
%\sum_{j=1}^{d-k+1} \langle x_j,v\rangle^2 \leq \sum_{i=1}^l\sum_{j=1}^{d-k+1} s_i \langle x_j,v_i \rangle^2 
%\end{equation}
%We start by simplifying the $\langle vv^\intercal,H \rangle$. Define $S=\langle v_1,\ldots,v_m\rangle$ and write $v= \Pi_S(v)+\Pi_{S^\perp}(v)$. We have 
%$$\langle H,vv^\intercal \rangle^2 = \norm{Hv}^2 \leq 2\left(\norm{H\Pi_S(v)}^2+\norm{H\Pi_{S^\perp}(v)}^2\right) \leq 2\norm{H\Pi_S(v)}^2+ 2 \norm{\Pi_{S^\perp(v)}}^2.$$
First, by \autoref{fact:PiCauchy},
$$ vv^\intercal \preceq_k 2(\Pi_{\langle U\rangle^\perp}(v) \Pi_{\langle U\rangle^\perp}(v)^\intercal + \Pi_{\langle U\rangle}(v)\Pi_{\langle U\rangle}(v)^\intercal)$$
 So, it is enough to prove that
\begin{equation}
\label{eq:spannergoal2}
\Pi_{\langle U\rangle^\perp}(v)\Pi_{\langle U\rangle^\perp}(v)^\intercal + \Pi_{\langle U\rangle}(v) \Pi_{\langle U\rangle}(v)^\intercal \preceq_k (\alpha/2)\mE_{u\sim\mu_v} uu^\intercal
\end{equation}
 We proceed by upper-bounding the LHS term by term. 
 By \autoref{lem:orthogonalpartofv}, 
\begin{equation}\label{eq:Uperppreceqk}\Pi_{\langle U\rangle^\perp}(v)\Pi_{\langle U\rangle^\perp}(v)^\intercal \preceq_k O(1) \cdot \mE_{u\sim \mu} uu^\intercal
\end{equation} 
 where $\mu$ is the uniform distribution on $U$.  So, to prove the above, it is enough to find a distribution $\mu_v$ on $U\cup W$ such that
 %For the second term, we use Lemma \ref{lem:orthogonalpartofv}. Given $v \in \{v_1,\ldots,v_{m}\}$ and $m^{2k/m}=O(1)$ it implies that for any $H$ 
\begin{equation}
\label{eq:notorthogonalpart}
\Pi_{\langle U\rangle}(v) \Pi_{\langle U\rangle}(v)^\intercal \preceq_k \alpha\mE_{u\sim\mu_v} uu^\intercal
\end{equation}
for some $\alpha=\tilde {O}(k)$. From now on, for any vector $v\in V$ we use $\hv$ to denote $\Pi_{\langle U\rangle}(v)$.

%  It suggests that the term $\norm{\Pi_{S^\perp}(v)}^2$ can be upperbounded by paying a cost of $\frac{1}{m}$ for each $s_i \,(1 \leq i \leq m)$.
%So it only remains to find $s_i$ values with total summation of $O(k\log^3 k)$ and satisfying 
%\begin{equation}
%\label{eq:spannerpartinS}
%\norm{H\Pi_{S}(v)}^2 \leq \sum_{i=1}^\ell s_i \langle H,v_iv_i^\intercal \rangle 
%\end{equation}
%for any $H \in \mathcal{P}_{d-k+1}$. 
By description of the algorithm, $\{\hv\}_{v\in W}$  is an  $O(m\log^2 m)$-spectral $m$-spanner for $\{\hv\}_{v\in V}$. So, there exists a probability distribution $\nu_v$ on $W$ such that 
%for all $x \in \langle U\rangle $,
\begin{equation} 
\label{eq:indonsmallspace}
 \hv\hv^\intercal \preceq O(m\log^2m)) \mE_{w\sim \nu_v}\hw\hw^\intercal \hspace{0.5cm} \text{equiv} \hspace{0.5cm} \forall x\in\langle U\rangle: \langle x,\hv \rangle^2 \leq O(m\log^2 m)) \cdot \mE_{w \sim \nu_v}\langle \hw,x \rangle^2 
\end{equation}
In fact the above holds for any $x \in \Re^d$, as $\langle x,\hu \rangle =\langle \Pi_{\langle U \rangle}(x),\hu \rangle$ for any vector $u\in V$. Therefore,  for any $\Pi \in \Pi_{d-k+1}$, by summing \eqref{eq:indonsmallspace} up over an orthonormal basis of $\Pi$ and noting $m=O(k\log k)$, we get 
$$ \langle \Pi,\hv\hv^\intercal \rangle \leq  \tilde O(k) \cdot \langle \mE_{ w \sim \nu_v} \hw\hw^\intercal,\Pi \rangle,$$ 
which by definition implies 
\begin{equation}
\label{eq:indforinS}
\hv\hv^\intercal \preceq_k \tilde O(k)  \cdot \mE_{w \sim \nu_v}\hw\hw^\intercal.
\end{equation}
Therefore, to show \eqref{eq:notorthogonalpart} for $\alpha= \tilde O(k)$  it suffices to find a distribution $\mu_v$ on $U\cup W$ such that
$$ \mE_{w \sim \nu_v}\hw\hw^\intercal \preceq_k O(1)\cdot \mE_{u \sim \mu_v}uu^\intercal.$$ 
But, observe that  for any $w  \in W$, we can write
\begin{equation*}
\hw\hw^\intercal \preceq_k 2\left( ww^\intercal+\Pi_{\langle U \rangle ^\perp}(w)\Pi_{\langle U \rangle ^\perp}(w)^\intercal \right) \preceq_k O(1)\cdot(ww^\intercal + \mE_{u\sim\mu} uu^\intercal)
\end{equation*}
where $\mu$ is the uniform distribution over $U$.
  The first inequality follows by \autoref{fact:PiCauchy}
and the second inequality follows by  equation \eqref{eq:Uperppreceqk} which holds for all vectors $v\in V$.
Averaging out the above inequality with respect to the distribution $\nu_v$ completes the proof.
%\begin{equation}
%\label{eq:boundthediffofSandwhole}
%\hw\hw^\intercal \preceq_k O(1)\cdot( ww^\intercal+ \mE_{u \sim \mu} uu^\intercal),
%\end{equation}

%To see \eqref{eq:boundthediffofSandwhole}, note that,
%\begin{equation*}
%\hw\hw^\intercal \preceq_k 2\left( ww^\intercal+\Pi_{\langle U \rangle ^\perp}(w)\Pi_{\langle U \rangle ^\perp}(w)^\intercal \right)
%\end{equation*}
%where we are only using the facts that $U$ is a linear subspace of $\Re^d$ and $(a+b)^2 \leq 2a^2+2b^2$ for any $a,b \in \Re$. 
%Combining the above with \eqref{eq:Uperppreceqk}
%$$\Pi_{\langle U\rangle^\perp}(v)\Pi_{\langle U\rangle^\perp}(v)^\intercal \preceq_k \mE_{u\sim \mu} uu^\intercal,$$
%which holds for all $v\in V$ completes the proof.  
%\end{proofof}
%\begin{fact}
%\label{fact:projofproj}
%If $x_1,\ldots,x_p$ are orthogonal vectors in $\Re^n$ and $S$ is a $q$-dimensional subspace, then
%$$ \sum_{i=1}^d \Pi_S(x_j)\Pi_S(x_j)^\intercal \preceq I,$$
%and $\sum_{i=1}^d \|\Pi_S(x_j)\|^2 \geq p+q-d$.
%\end{fact}

\section{Applications}
\label{sec:app}
In this section we discuss applications of \autoref{thm:mainspannerfork} in designing composable core-sets. 
As we discussed in the intro, we show that for many problems spectral spanners provide almost the best possible composable core-set in the worst case. Next, we see that for any function $f$ that is ``monotone'' on PSD matrices, spectral spanners provide a composable core-set for a  {\em fractional} budgeted minimization problem with respect to $f$. Later, in Sections \ref{sec:detmaximization} and \ref{sec:experimentaldesign} we see that for a large class of monotone functions the optimum of the fractional budgeted minimization problem is within a small factor of the optimum of the corresponding integral problem. So,  spectral spanners provide almost optimal composable core-sets for several spectral budgeted minimization problems.

Let $V \subset \Re^d$ be a set of  vectors. For a function  $f:\mathbb{S}_d^+\to \Re^+$ on PSD matrices and a positive integer $B$ denoting the budget, the {\em fractional budgeted minimization} problem is to
choose a mass $B$ of the vectors of $V$, i.e., $\{s_v\}_{v\in V}$ where $\sum_v s_v\leq B$, such that $f(\sum_v s_v vv^T)$ is minimized. This can be modeled as a continuous optimization problem, see \ref{CP:Application} for details.
\begin{figure*}[h]
\centering
\fbox{\parbox{4in}{ \vspace*{0mm}
				\vspace{-.17mm}
                \begin{equation*}
				\begin{aligned}
				\inf \hspace{.7cm} &  f\left(\sum_{v \in V} s_vvv^T\right).\\
				\text{s.t } \hspace{.7cm} &  \sum_{v \in V} s_v \leq B \\
                \hspace{.7cm} & s_v \geq 0,\,\, \forall v \in V 
				\end{aligned}
                \end{equation*}
				}}
\begin{center}{\bf \setword{BM}{CP:Application}}\end{center}
\end{figure*}

\begin{definition}[{$k$-monotone functions}]
We say a function $f:\mathbb{S}_d^+\to \Re^+$ is {\em $k$-monotone} for an integer  $1\leq k\leq d$, if for all PSD matrices $A,B\succeq 0$, we have $A \preceq_k B$ implies $f(A) \geq f(B)$.

	We say $f$ is {\em vector $k$-monotone} if for all PSD matrices $A,B$ and all vectors $v\in\Re^d$, if $vv^\intercal \preceq_k B$, then $f(A+vv^\intercal) \geq f(A+B)$. Note that any $k$-monotone function is obviously vector $k$-monotone as $A+vv^\intercal \preceq_k A+B$. 
\end{definition}

We show that an algorithm for finding $\alpha$-spectral $k$-spanners give an $\alpha$-composable core-set  function for the fractional budgeted minimization {\em for any} function $f$ that is vector $k$-monotone. We emphasize that our composable core-set {\em does not} depend on the choice of $f$ as long as it is vector monotone.
\begin{proposition}
\label{thm:mainoptdesign}
For any $1 \leq k \leq d$ and any vector $k$-monotone function $f: \mathbb{S}^+_d\to\Re^+$, Algorithm \ref{alg:k-spanner} gives  a $\beta(f,\tilde O(k))$-composable core-set of size $\tilde O(k)$  for the
fractional budgeted minimization problem, \ref{CP:Application}($V$,$f$,$B$),  where for any $t>0$,
%program \ref{CP:Application}, where  $f:\mathbb{S}_d^+\to \Re^+$ and $b$ can be any $k$-monotone function and any number positive number, and for any $t>0$ 
$$\beta(f,t) = \sup_{A \in \mathbb{S}_d^+ }\frac{f(A)}{f(tA)}.$$ 
%Our bound improves to $\beta(f,O(d\log^2 d))$ and size $O(d\log d)$ if $k=d$.
\end{proposition}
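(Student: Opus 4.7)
The plan is to show that running Algorithm~\ref{alg:k-spanner} with parameter $\alpha = O(k\log^3 k)$ on each input set $V_i$ yields the claimed composable core-set. By \autoref{thm:mainspannerfork}, the output $c(V_i) = U_i$ has size $O(k\log^2 k)$ and is an $\alpha$-spectral $k$-spanner of $V_i$. The first step is to invoke the composability remark immediately after \autoref{def:kspectralspanner}: if each $U_i$ is an $\alpha$-spectral $k$-spanner of $V_i$, then $U := \bigcup_i U_i$ is an $\alpha$-spectral $k$-spanner of $V := \bigcup_i V_i$, since for any $v \in V$ the witnessing distribution $\mu_v$ comes from whichever $V_i$ contains $v$ and is supported on $U_i \subseteq U$.

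Next, let $\{s_v^*\}_{v \in V}$ be an optimal fractional solution of \ref{CP:Application}($V,f,B$) with value $\mathrm{OPT}(V) = f(M^*)$, where $M^* = \sum_v s_v^* vv^\intercal$. Using the spanner distributions $\mu_v$ that witness $vv^\intercal \preceq_k \alpha\cdot \mE_{u \sim \mu_v} uu^\intercal$, I will define $t_u := \sum_{v \in V} s_v^* \mu_v(u)$ on $U$. A one-line computation gives $\sum_u t_u = \sum_v s_v^* \leq B$, so $\{t_u\}_{u \in U}$ is feasible for \ref{CP:Application}($U,f,B$) with matrix $M'' := \sum_u t_u uu^\intercal$, which already gives $\mathrm{OPT}(U) \leq f(M'')$. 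The remaining task is to compare $f(M'')$ against $f(M^*)$.

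The main obstacle is the hybridization argument showing $f(M^*) \geq f(\alpha M'')$. I will enumerate $V = \{v_1,\dots,v_n\}$ and define the intermediate PSD matrices
\[ N_i := \sum_{j \leq i} s_{v_j}^* \alpha \cdot \mE_{u \sim \mu_{v_j}} uu^\intercal + \sum_{j > i} s_{v_j}^* v_j v_j^\intercal, \]
so $N_0 = M^*$ and $N_n = \alpha M''$. For the step $N_{i-1} \to N_i$, the vector $w := \sqrt{s_{v_i}^*}\, v_i$ satisfies $ww^\intercal = s_{v_i}^* v_i v_i^\intercal \preceq_k s_{v_i}^* \alpha \cdot \mE_{u \sim \mu_{v_i}} uu^\intercal$ (by nonnegative scaling of the spanner inequality), and vector $k$-monotonicity applied with this $w$, taking $A$ to be the remaining PSD summands, yields $f(N_{i-1}) \geq f(N_i)$. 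Telescoping gives $f(M^*) \geq f(\alpha M'')$, and combining with the definition of $\beta$,
\[ \mathrm{OPT}(U) \leq f(M'') \leq \beta(f,\alpha) \cdot f(\alpha M'') \leq \beta(f,\alpha) \cdot f(M^*) = \beta(f,\alpha) \cdot \mathrm{OPT}(V). \]
The delicate point is that the assumption in the statement is only vector $k$-monotonicity, which lets us swap a single rank-one summand at a time; this is why the fractional optimum must be decomposed as $\sum_v (\sqrt{s_v^*}\, v)(\sqrt{s_v^*}\, v)^\intercal$ rather than swapped in one shot. Once this decomposition is in place, the telescoping is routine. The case $k=d$ follows by the same argument, invoking \autoref{thm:main-spectralspanner} in place of \autoref{thm:mainspannerfork} to obtain the sharper $\beta(f, O(d\log^2 d))$ guarantee with core-set size $O(d\log d)$.
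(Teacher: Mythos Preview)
Your proposal is correct and follows essentially the same approach as the paper: both use composability of spectral $k$-spanners to get a spanner $U$ of $V=\bigcup_i V_i$, then perform a one-vector-at-a-time swap (your hybrids $N_i$) using vector $k$-monotonicity to obtain $f(M^*)\ge f(\alpha M'')$, and finish with the definition of $\beta(f,\alpha)$. Your write-up is in fact slightly more careful than the paper's in making the rank-one vector $w=\sqrt{s_{v_i}^*}\,v_i$ explicit and in getting the direction of the final $\beta$-inequality right.
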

\begin{proof}
Let  $V_1,V_2,\dots,V_p$ be $p$  given input sets for an arbitrary integer $p$, and let $\bigcup_{i=1}^p V_i=V$. For each $1\leq i\leq p$, let  
$U_i$ be the output of Spectral $k$-Spanner($V_i$,$k$,$\alpha$). By \autoref{thm:mainspannerfork}, for  $\alpha=\tilde O(k)$, $|U_i|\leq \tilde O(k)$. Let $U=U_1\cup\dots\cup U_p$.

Fix a $k$-monotone function $f$ and a budget $B>0$ and let $\mathbf{s}=\{s_v\}_{v\in V}$ be a feasible solution of \ref{CP:Application}($V$, $f$, $B$). To prove the assertion we need to show that 
there exists a 
feasible solution $\tilde{\mathbf{s}}$ of \ref{CP:Application}($U,f,B)$ such that 
\begin{equation}
f\left(\sum_{u\in U} \tilde{s}_u uu^T\right) \leq \beta(f,\alpha)  \cdot f\left(\sum_{v \in V} s_v vv^T\right) 
\label{eq:goalapplication}
\end{equation}
By composability property of spanners, $U$ is an $\alpha$-spectral $k$-spanner of $V$. Therefore, for any $v \in V$, there exists a probability distribution $\mu_v$ on $U$ such that $vv^T \preceq_k \alpha \cdot \mE_{u\sim\mu_v}  uu^\intercal$. Say $V=\{v_1,\dots,v_m\}$. It follows by vector $k$-monotonicity of $f$ and by $U$ bein an $\alpha$-spectral $k$-spanner that
\begin{eqnarray*}{f\left(\sum_{i=1}^m s_{v_i} v_iv_i^\intercal\right)} &\geq & f\left(\alpha s_{v_1}\mE_{u\sim\mu_{v_1}}uu^\intercal +\sum_{i=2}^m s_{v_i}v_iv_i^\intercal\right) \\
	& \geq & f\left(\sum_{i=1}^2 \alpha s_{v_i}\mE_{u\sim\mu_{v_i}} uu^\intercal + \sum_{i=3}^m s_{v_i} v_iv_i^\intercal\right)\geq \dots\geq 
{f\left(\sum_{i=1}^m\alpha s_{v_i}\mE_{u\sim\mu_{v_i}} uu^T\right)}  
\end{eqnarray*}
Now, define $\tilde{\mathbf{s}}$ by $\tilde{s}_u= \sum_{v\in V} s_v\Pr_{\mu_v}[u]$ for any $u \in U$.
It is straight-forward to see that this is a feasible solution of \ref{CP:Application}($V,f,B$) since $\sum_{u\in U} \tilde{s}_u = \sum_{v\in V} \sum_{u\in U} \Pr_{\mu_v}[u] = \sum_{v\in V} s_v\leq B$. Therefore,
\begin{align*}
%\frac{f(\sum_{u \in U} \tilde{s}_u uu^\intercal)}
{f\left(\sum_{v \in V} s_v vv^\intercal\right)} \geq 
%\frac{f(\sum_{u \in U} \tilde{s}_u uu^\intercal)}
{f\left(\sum_{v \in V}\alpha s_v\mE_{u\sim\mu_v} uu^\intercal\right)} =f\left(\sum_{u\in U}\alpha\tilde{s}_u uu^\intercal\right) \geq \beta(f,\alpha)f\left(\sum_{u\in U} \tilde{s}_uuu^\intercal\right),
\end{align*}
 This proves \eqref{eq:goalapplication} as desired.	
\end{proof}

In general, we may not solve \ref{CP:Application} efficiently if $f$ is not a convex function. It turns out that if $f$ is convex and has a certain reciprocal multiplicity property then the integrality gap of the program is small, so assuming further that $f$ is (vector) $k$-monotone, by the above theorem we obtain a composable core-set for the corresponding integral budgeted minimization problems.
In the next two sections we explain two such families of functions namely determinant maximization and optimal design. 
%Variants of this program for convex functions $f$ can be viewed as a convex relaxation for well-studied problems. For example, setting $f=- \ln \det_k$ will produces a relaxation for the problem of $k$-determinant maximization which has been considered before in \cite{nikolov2015randomized} to design an approximation algorithm for the problem. 
\subsection{Determinant Maximization}\label{sec:detmaximization}
In this section, we use \autoref{thm:mainoptdesign} to prove \autoref{thm:main_upper_bound}.
Throughout this section, for an integer $1\leq k\leq d$ we let $f:\mathbb{S}^+_d\to\Re^+$ be the map $A \mapsto - \det_k(A)^{1/k}$.
 It follows from theory of Hyperbolic polynomials that for any $1\leq k\leq d$, $-\det_k(A)^{1/k}$  is a convex 
function \cite{guler1997hyperbolic}, so one can solve \ref{CP:Application}($-\det_k^{1/k},V,k$) using convex programming.
Furthermore, observe that \ref{CP:Application}($-\det_k^{1/k},V,k$) gives a relaxation of $k$-determinant maximization problem. Nikolov \cite{nikolov2015randomized} showed that any fractional solution can be rounded to an integral solution incurring only a multiplicative error of $e$.
\begin{theorem}[\cite{nikolov2015randomized}]
\label{thm:offlineversion}
There is a  randomized algorithm that for any set $V\subseteq \Re^d$, $1\leq k\leq d$, and any feasible solution $x$  of \ref{CP:Application}($-\det_k^{1/k},V,k$)  outputs  $S \subset V$ of size $k$ such that
\begin{equation}
\label{eq:OfflineGaurantee}
\det_k \left(\sum_{v \in S} vv^\intercal\right) \geq e^{-k}\max_{T\in{V\choose k}} \det_k\left(\sum_{u \in T} uu^\intercal\right)
\end{equation}
\end{theorem}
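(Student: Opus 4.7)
The plan is to express the fractional objective $\det_k\bigl(\sum_v x_v vv^\intercal\bigr)$ as a weighted combination of integral values $\det_k\bigl(\sum_{v\in S} vv^\intercal\bigr)$ over $k$-subsets $S$, with total weight bounded by $e^k$, and then apply the probabilistic method (plus a standard derandomization) to extract a good integral $S$.

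First I would apply \autoref{lem:CauchyBinet} after rescaling each input vector to $\sqrt{x_v}\,v$. For a fixed $S$ with $|S|=k$, writing $\sum_{v\in S} x_v vv^\intercal = V_S X_S V_S^\intercal$ with $V_S \in \Re^{d\times k}$ and $X_S$ the diagonal matrix of weights, one has $\det_k(V_S X_S V_S^\intercal) = \det(X_S)\,\det(V_S^\intercal V_S) = x^S\,\det_k(\sum_{v\in S} vv^\intercal)$, where $x^S := \prod_{v\in S} x_v$. Substituting into \autoref{lem:CauchyBinet} yields the key identity
\[
\det_k\Bigl(\sum_v x_v vv^\intercal\Bigr) \;=\; \sum_{S\in\binom{V}{k}} x^S \cdot \det_k\Bigl(\sum_{v\in S} vv^\intercal\Bigr).
\]
WLOG $\sum_v x_v = k$ (scaling $x$ upward preserves feasibility and does not decrease the LHS). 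Consider the probability distribution $\Pr[S] = x^S/e_k(x)$ on $\binom{V}{k}$, where $e_k(x) := \sum_{S}x^S$ is the $k$-th elementary symmetric polynomial. Under it, $\mathbb{E}\bigl[\det_k(\sum_{v\in S} vv^\intercal)\bigr] = \det_k\bigl(\sum_v x_v vv^\intercal\bigr)/e_k(x)$. Maclaurin's inequality gives $(e_k(x)/\binom{n}{k})^{1/k}\leq (\sum_v x_v)/n = k/n$, whence $e_k(x)\leq \binom{n}{k}(k/n)^k \leq (en/k)^k(k/n)^k = e^k$. Therefore some $S$ achieves $\det_k(\sum_{v\in S} vv^\intercal)\geq e^{-k}\det_k(\sum_v x_v vv^\intercal)$. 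Since for any integral $T\in\binom{V}{k}$ the indicator vector $\mathbf{1}_T$ is feasible, the optimum of \ref{CP:Application}$(-\det_k^{1/k},V,k)$ satisfies $\det_k(\sum_v x^*_v vv^\intercal)\geq \max_{T}\det_k(\sum_{u\in T} uu^\intercal)$, so applying the rounding to $x^*$ (obtained by convex programming, feasible since $-\det_k^{1/k}$ is convex by hyperbolicity) gives the desired $e^{-k}$ bound against $\max_T$.

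To make this algorithmic, the conditional product distribution $\Pr[S]\propto x^S$ on $\binom{V}{k}$ can be sampled efficiently by sequentially deciding each vector's membership using ratios of $e_j(x)$'s, all computable in polynomial time via the generating polynomial $\prod_v(1+x_v z)$. Derandomization by the method of conditional expectations tracks the quantity $\mathbb{E}\bigl[\det_k(\sum_{v\in S} vv^\intercal)\mid \text{partial commitments}\bigr]$, which is itself a ratio of two $e_k$-type quantities for an updated weighted vector system and can be evaluated via the characteristic polynomial of $\sum_{v\in R}x_v vv^\intercal$ on the remaining set $R$. The main obstacle is verifying that these conditional expectations remain in the required form under the update ``commit $v$'' / ``discard $v$'', so that the greedy choice provably maintains a value at least $e^{-k}\det_k(\sum_v x_v vv^\intercal)$ throughout; once established, this yields the deterministic polynomial-time algorithm claimed in the statement.
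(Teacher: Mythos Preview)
Your proof is correct and follows essentially the same route as the paper: both use the Cauchy--Binet identity $\det_k(\sum_v x_v vv^\intercal)=\sum_{S}x^S\det_k(\sum_{v\in S}vv^\intercal)$ together with $k^k/k!\le e^k$ to extract a good integral $S$ via a randomized rounding. The paper's version is a touch simpler: it samples $k$ vectors i.i.d.\ with probability $x_v/k$ (repetitions contribute zero determinant), so the factor $k!/k^k\ge e^{-k}$ appears directly and there is no need to bound $e_k(x)$ via Maclaurin or to implement sampling from the conditional distribution $\Pr[S]\propto x^S$.
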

\begin{proof}We include the proof for the sake of completeness. First, we explain the algorithm:
For $1\leq i\leq k$, choose a vector $v$ with probability 
$\frac{x_v}{k}$ (with replacement) and call it $u_i$.
It follows that,
\begin{equation*}
\mathbb{E}\left[\det_k \left(\sum_{i=1}^k u_iu_i^\intercal\right)\right] = \sum_{S \in {V\choose k}} \frac{k!}{k^k}\Pi_{v\in S} x_v \det_k(\sum_{v \in S} vv^\intercal)  
 \geq e^{-k}    \cdot \sum_{S\in {V\choose k}} \det_k\left(\sum_{v \in S} x_vvv^\intercal\right) 
\end{equation*}
where first equality holds, since we have $k!$ different orderings for selecting a fixed set $S$ of $k$ vectors, but by Cauchy-Binet identity the RHS is equal to $e^{-k}\det_k(\sum_{v\in V} x_vvv^\intercal)$ as desired. 
\end{proof}
Note that the algorithm we discussed in the above proof may have an exponentially small probability of success but \cite{nikolov2015randomized} also gives a de-randomization using the conditional expectation method.
From now on, we do not need convexity.
% Observe that since $\ln$ is a monotonicially increasing function, the optimum solution of \ref{CP:Application}($-\ln\det_k,V,k$) is the same as the optimum solution of \ref{CP:Application}($-\det_k,V,k$). So, from now on, we work with the latter to obtain a multiplicative approximation of determinant maximization.
To use \autoref{thm:mainoptdesign} we need to verify that $- \det_k^{1/k}$ is (vector) $k$-monotone.
\begin{lemma}
For any integer $1\leq k\leq d$, the function $-\det_k^{1/k}$ is vector $k$-monotone.
\end{lemma}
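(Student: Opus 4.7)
Unwrapping the definitions, I need to show: if $A,B\succeq 0$ and $vv^\intercal \preceq_k B$, then $\det_k(A+vv^\intercal) \le \det_k(A+B)$; taking $k$-th roots and negating then yields vector $k$-monotonicity of $-\det_k^{1/k}$. The plan proceeds in three steps.

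First, I would derive an \emph{exact} rank-one identity
$$\det_k(A+vv^\intercal) \;=\; \det_k(A) + v^\intercal M_k(A)\, v,$$
where $M_k(A)\succeq 0$ admits the following spectral description: if $A=\sum_i \lambda_i u_i u_i^\intercal$ is a spectral decomposition, then
$$M_k(A) \;=\; \sum_{i=1}^d \Bigl(\sum_{\substack{J\subseteq [d]\setminus\{i\}\\ |J|=k-1}} \prod_{j\in J}\lambda_j\Bigr) u_i u_i^\intercal.$$
This identity comes from expanding $\det_k(A+vv^\intercal)=\sum_{|S|=k}\det(A_{S,S}+v_S v_S^\intercal)$ and applying the matrix determinant lemma $\det(X+vv^\intercal)=\det(X)+v^\intercal \operatorname{adj}(X) v$ to each $k\times k$ minor; summing gives a coordinate description for $M_k(A)$, which coincides with the spectral formula above by basis-invariance of $\det_k$. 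Note the identity is truly linear in the rank-one perturbation, not merely a first-order expansion.

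The second, and crucial, step is a \emph{projection decomposition} of $M_k(A)$ into rank-$(d{-}k{+}1)$ projections. For each $I\subseteq [d]$ with $|I|=d-k+1$, let $\Pi_I = \sum_{i\in I} u_i u_i^\intercal \in \mathcal{P}_{d-k+1}$ and $c_I = \prod_{j\notin I}\lambda_j \ge 0$. Using the combinatorial identity obtained from the bijection $J = [d]\setminus I$ (so $|J|=k-1$ and $i\in I \Leftrightarrow i\notin J$),
$$\sum_{\substack{I\ni i\\ |I|=d-k+1}} \prod_{j\notin I}\lambda_j \;=\; \sum_{\substack{J\subseteq [d]\setminus\{i\}\\ |J|=k-1}} \prod_{j\in J}\lambda_j,$$
one reads off $M_k(A) = \sum_I c_I \Pi_I$. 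Now I feed the hypothesis $vv^\intercal \preceq_k B$ into each projection: $\|\Pi_I v\|^2 = \langle \Pi_I, vv^\intercal\rangle \le \langle \Pi_I, B\rangle$; multiplying by $c_I\ge 0$ and summing over $I$ yields
$$v^\intercal M_k(A) v \;=\; \langle M_k(A), vv^\intercal\rangle \;\le\; \langle M_k(A), B\rangle.$$

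For the final step, I would invoke convexity of $g(s)=\det_k(A+sB)$ on $[0,\infty)$ to get $\det_k(A+B) = g(1)\ge g(0)+g'(0) = \det_k(A)+\langle M_k(A), B\rangle$. Convexity follows by writing $g(s)=\sum_{|S|=k}\det(A_{S,S}+sB_{S,S})$ and observing that each summand is a polynomial in $s$ with non-negative coefficients (via mixed discriminants of PSD matrices, or by passing to $B_{S,S}+\epsilon I$ and using that $\det(A_{S,S}+sB_{S,S})=\det(B_{S,S})\prod_i(s+\mu_i^{(S)})$ with non-negative generalized eigenvalues $\mu_i^{(S)}$, then $\epsilon\to 0$). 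Chaining the three inequalities gives $\det_k(A+vv^\intercal) \le \det_k(A+B)$, which finishes the proof. I expect the second step, the decomposition of $M_k(A)$ into a non-negative combination of rank-$(d{-}k{+}1)$ projections, to be the linchpin: it is precisely this structural fact that converts the hypothesis $vv^\intercal\preceq_k B$ (which only tests against such projections) into the needed pointwise inequality on the quadratic form $v^\intercal M_k(A) v$; once the bijection $I\leftrightarrow J$ is spotted the identity is routine, but finding the right decomposition is the non-obvious move.
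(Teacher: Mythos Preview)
Your proof is correct. Both your argument and the paper's hinge on the same structural fact: the increment $\det_k(A+vv^\intercal)-\det_k(A)$ is a nonnegative combination of terms $\langle vv^\intercal,\Pi\rangle$ with $\Pi\in\mathcal{P}_{d-k+1}$, which is exactly what lets the hypothesis $vv^\intercal\preceq_k B$ be applied termwise. The paper obtains this decomposition via Cauchy--Binet after writing $A=\sum_a aa^\intercal$ arbitrarily, getting $\sum_{S\in\binom{A}{k-1}}\det_{k-1}\big(\sum_{a\in S}aa^\intercal\big)\,\langle vv^\intercal,\Pi_{\langle S\rangle^\perp}\rangle$; when the rank-one decomposition of $A$ is taken to be the spectral one, this is literally your $\sum_I c_I\langle vv^\intercal,\Pi_I\rangle$ under the bijection $I=[d]\setminus J$. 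The one genuine difference is the final step: to bound $\langle M_k(A),B\rangle\le \det_k(A+B)-\det_k(A)$ you invoke convexity of $s\mapsto\det_k(A+sB)$, whereas the paper re-applies Cauchy--Binet (the ``missing'' nonnegative terms on the right being those subsets drawing two or more vectors from a rank-one decomposition of $B$). The paper's route stays entirely within its Cauchy--Binet lemma and needs no spectral decomposition; your route gives an explicit closed form for $M_k(A)$ and a clean projection decomposition, at the price of importing the matrix determinant lemma and a convexity argument.
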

\begin{proof}
Equivalently, we show $-det_k$ is vector $k$-monotone.
%Since $\ln$ is a monotone function it is enough to show $-\det_k$ is a vector $k$-monotone. 
Fix $A\succeq 0$, and decompose it as $A=\sum_{a\in A} aa^\intercal$ where we abuse notation and also use $A$ to denote the set of vectors in the decomposition of $A$. Also, fix a vector $v$ and suppose $vv^\intercal \preceq_k B$ for some $B\succeq 0$. We need to show $\det_k(A+vv^\intercal) \leq \det_k(A+B)$. By  \autoref{lem:CauchyBinet},
\begin{eqnarray*}
\det_k (A+vv^\intercal) - \det_k(A) &=&\sum_{S \in {A\choose k-1}} \det_k\left(\sum_{a\in S} aa^\intercal +vv^\intercal\right)\\
&=& \sum_{S\in {A\choose k-1}}\det_{k-1}\left(\sum_{a\in S} aa^\intercal\right) \langle vv^\intercal, \Pi_{\langle S\rangle^\perp}\rangle .
%\label{eq:CauchyBinet}
\end{eqnarray*}
The second equality follows by the fact that $\det_k(\sum_{a\in S} aa^\intercal+vv^\intercal)$ is the same as the determinant of the $k\times k$ inner product matrix of all of these $k$ vectors. Using Gram-Schmidt orthogonalization process the latter can be re-written as $\det_{k-1}(\sum_{a\in S}aa^\intercal) \langle vv^\intercal, \Pi_{\langle S\rangle^\perp}\rangle .$
Since $vv^\intercal \preceq_k B$ for any such $S$ we have $\langle vv^\intercal, \Pi_{\langle S\rangle^\perp}\rangle \leq \langle B, \Pi_{\langle S\rangle^\perp}\rangle.$ Therefore,
\begin{eqnarray*}
\det_k (A+vv^T)&=& \det_k(A)+ \sum_{S\in {A\choose k-1}}\det_{k-1}\left(\sum_{a\in S} aa^\intercal\right) \langle vv^\intercal, \Pi_{\langle S\rangle^\perp}\rangle \\
&\leq& \det_k(A) +  \sum_{S\in {A\choose k-1}}\det_{k-1}\left(\sum_{a\in S} aa^\intercal\right) \langle B, \Pi_{\langle S\rangle^\perp}\rangle  \leq \det_k(A+B).
\end{eqnarray*}
The last inequality follows by another application of Cauchy-Binet identity, \autoref{lem:CauchyBinet}.
%where the first inequality directly follows from the assumption of 
%$vv^T \preceq_k \sum_{i=1}^{m} v_iv_i^T$ and $\dist(P_S,v)^2= 
%\|\Pi_{{P_S}^\perp}(v)\|^2$. The second inequality is implied by  rewriting
%$\det_k(\sum_{i=1}^d u_iu_i^T+\sum_{l=1}^m v_lv_l^T)$ using Fact \ref{fact:CauchyBinet} and noting the fact that 
%$\det_k$ is non-negative. Thus, 
%the proof is complete.
\end{proof}

Now, we are ready to prove \autoref{thm:main_upper_bound}. Let $V\subseteq \Re^d$ and suppose we are given $p$ subsets $V_1,\dots, V_p$ such that $\bigcup_{i=1}^p V_i=V$.
First, by \autoref{thm:mainoptdesign}, spectral spanners give a $\beta(-\det_k^{1/k},\tilde O(k)) $-composable core-set of size $\tilde O(k)$  for the fractional budgeted minimization problem \ref{CP:Application}($-\det_k^{1/k},V,k)$. Observe that for any $t$,
$$\beta(-\det_k(.)^{1/k},t)=\sup_{A\in\mathbb{S}^+_d}\frac{-\det_k(A)^{1/k}}{-\det_k(tA)^{1/k}} = \sup_{A\in\mathbb{S}^+_d} \frac{\det_k(A)^{1/k}}{t\det_k(A)^{1/k}} =\frac1{t}. $$
So, \autoref{thm:mainoptdesign} gives an $\tilde O(k)$-composable core-set for \ref{CP:Application}($-\det_k^{1/k},V,k)$.
But, by \autoref{thm:offlineversion}, the integrality gap of \ref{CP:Application}($-\det_k^{1/k},V,k$) is $e$. Therefore, \autoref{thm:mainoptdesign}  gives an $\tilde O(k)^k$-composable core-set for integral  determinant maximization problem.
This completes the proof of \autoref{thm:main_upper_bound}.

\subsection{Experimental Design}\label{sec:experimentaldesign}
In this section we discuss another set of applications of \autoref{thm:mainoptdesign} to the problem of {\it 
experimental design}  \cite{pukelsheim1993optimal} 
Consider a noisy linear regression problem: Given $n$ data points 
$v_1,v_2,\dots,v_n \in \Re^d$, we are interested in learning a vector $w \in \Re^d$ from 
observations of the form $y_i=\langle v_i,w\rangle+ \eta_i$ where the noise values $\eta_i$ 
are i.i.d samples from a zero-mean Gaussian distribution. Suppose we are allowed  to learn 
parameter $w$ by only observing  $k\ll n$ data points. Letting $S$ be the set of $k$ chosen 
data points and $\hat{w}$ be the maximum likelihood estimation of $w$, $w-\hat{w}$ has a $d$-
dimensional zero-mean Gaussian distribution with covariance matrix $(\sum_{i \in S} 
v_iv_i^T)^{-1}$. In the {\it experimental design} problem the goal is to choose $k$ data points 
where the corresponding covariance matrix minimizes a given function  $f:\mathbb{S}_d^+ \to 
\mathbb{R}$. The formal definition of the problem is as follows.
\begin{definition}[Experimental Design]
For $V \subset \Re^d$  and $f:\mathbb{S}_d^+\to \Re^+$ and an integer $B$, the experimental design is the problem of finding
\begin{equation*}
S^*(f,k) = \argmin_{S\in {V\choose B}} f\left(\sum_{v\in S} vv^\intercal\right)
,\end{equation*}
where $S$ ranges over all multi-sets of size $B$.
\end{definition}
Experimental design problem has applications to linear bandit \cite{deshpande2012linear,huang2016following}, diversity sampling \cite{kulesza2012determinantal}, active learning \cite{chaudhuri2015convergence}, feature selection and matrix approximation \cite{de2007subset,avron2013faster}, sensor placement in wireless networks \cite{joshi2009sensor}. 

Note that for any function $f$, \ref{CP:Application}($V,f,B$) is a continuous relaxation to the above problem.
It is shown in \cite{singh2018approximation,allen2017near,nikolov2018proportional}
that there is a polynomial time algorithm that if $f$, in addition to being convex and monotone, has a `` reciprocal multiplicity property'', then for $B\gg d$, the solution of \ref{CP:Application}($V,f,B$) can be rounded to an integer solution losing only a constant factor in the value.

We say a function  $f:\mathbb{S}_d^+\to \Re^+$ is {\em regular} if it is convex, monotone and
 $f(tA) =f(A)/t$ for any $t>0$ and $A \succeq 0$.
Here are some examples of regular functions: Average $A\mapsto  \frac{\Tr(A^{-1})}{d}$, Determinant $A \mapsto \det(A)^{-\frac{1}{d}}$,
Min Eigenvalue  $A\mapsto \norm{A^{-1}}_2$, Variance, $A\mapsto \frac{1}{d}\langle \sum_{v\in V} vv^\intercal \Sigma^{-1}\rangle$.
They prove the following.
\begin{theorem}[\cite{allen2017near}]
\label{thm:optdesignoffline}
There exists a polynomial time algorithm that  for any regular function $f:\mathbb{S}_d^+\to \Re^+$, $\eps<1/3$  and $B \geq \frac{5d}{\eps^2}$.  outputs a multi-set $S$ such that
$$f\left(\sum_{v\in S} vv^\intercal\right) \leq (1+8\epsilon) \opt(\ref{CP:Application}(V,f,B)),$$ 
\label{thm:offlineexpdesign}
\end{theorem}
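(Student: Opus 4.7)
The plan is to prove Theorem \ref{thm:optdesignoffline} in two stages: first solve the convex relaxation \ref{CP:Application}$(V,f,B)$ to obtain a fractional optimum, then round the fractional solution to an integral multi-set of size $B$ while losing only a $(1+8\eps)$ factor, exploiting the three defining properties of a regular function---convexity, monotonicity, and the homogeneity $f(tA)=f(A)/t$.

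For the first stage, since $f$ is convex on $\mathbb{S}_d^+$ and the constraints of \ref{CP:Application} are linear, the program is a convex program and can be solved in polynomial time to any desired accuracy (for instance by interior-point methods using a subgradient oracle for $f$). Let $x^* = (x_v^*)_{v \in V}$ be an optimal fractional solution with $\sum_v x_v^* \leq B$, and set $X^* = \sum_v x_v^* vv^\intercal$, so that $f(X^*) = \opt(\ref{CP:Application}(V,f,B))$.

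For the rounding stage I would employ a regret-minimization procedure in the spirit of Allen-Zhu--Li--Singh--Wang. Maintain a running matrix $A_t = \eta I + \sum_{s \leq t} v_{i_s} v_{i_s}^\intercal$ with a small barrier $\eta>0$, and at step $t+1$ add the vector that most decreases a potential coupled to the gradient $\nabla f(A_t)$; concretely choose $v_{i_{t+1}} \in \argmin_{v\in V}\{f(A_t + vv^\intercal) - \langle \Lambda_t, vv^\intercal\rangle\}$, where the dual matrix $\Lambda_t$ is updated via Follow-The-Regularized-Leader (or matrix multiplicative weights) to steer the selection toward the fractional distribution $x^*/B$. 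Convexity of $f$ justifies linearizing with first-order surrogates; monotonicity ensures that adding vectors only helps; and the barrier keeps $\nabla f(A_t)$ spectrally bounded, yielding an additive regret bound of order $O(\sqrt{dB})$ for the selected $A_B$ against $X^*$ in the appropriate norm.

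The main obstacle is converting this additive regret into the stated multiplicative $(1+8\eps)$ guarantee. Here the reciprocal-multiplicative property is crucial: writing $X^* = B\bar X$ we get $f(X^*)=f(\bar X)/B$, so the natural scale of $f$ on the output matrix is $1/B$, and an additive regret of $O(\sqrt{dB})$ normalized by this scale becomes a multiplicative error of $1+O(\sqrt{d/B})$. The hypothesis $B \geq 5d/\eps^2$ then forces $\sqrt{d/B}\leq \eps/\sqrt 5$, and chasing constants through the monotonicity and convexity inequalities produces the claimed $(1+8\eps)$ bound. The analysis instantiates uniformly for the four listed regular functions---trace-inverse, determinant, operator-norm-inverse, and variance---once one checks smoothness on the interior of $\mathbb{S}_d^+$, which is routine from their closed forms.
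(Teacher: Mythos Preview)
The paper does not prove this theorem at all: it is stated with the attribution \cite{allen2017near} and used as a black box to derive Corollary~\ref{cor:expdesign}. There is thus no ``paper's own proof'' to compare against.

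Your sketch is in the right spirit relative to the cited reference---Allen-Zhu, Li, Singh, and Wang do solve the convex relaxation and then round via a regret-minimization scheme---but as written it contains genuine gaps. The selection rule you describe (greedily minimizing $f(A_t+vv^\intercal)-\langle\Lambda_t,vv^\intercal\rangle$ with FTRL updates on $\Lambda_t$) is not the algorithm in that paper; their scheme is built around an explicit barrier potential of the form $\Tr\,(cI-A_t)^{-1}$ and a swap-based update, and the analysis depends on concrete matrix inequalities for that potential, not on generic smoothness of $f$. The assertion that regret is $O(\sqrt{dB})$ ``in the appropriate norm'' is precisely the hard part and does not follow from the three properties you list. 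Finally, the conversion from additive regret to a $(1+O(\sqrt{d/B}))$ multiplicative factor in $f$ requires more than the homogeneity identity $f(tA)=f(A)/t$: for an arbitrary regular $f$ an additive spectral error need not translate to a multiplicative error in $f$ without the specific structural lemmas proved in the cited paper.

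In short: the present paper treats Theorem~\ref{thm:optdesignoffline} as a citation, and your proposal is a reasonable high-level outline of the cited work's strategy but not a self-contained proof.
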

Combining it with  \autoref{thm:mainoptdesign} for $k=d$ leads to the following corollary. 
\begin{corollary}
\label{cor:expdesign}
There exists a polynomial time algorithm which finds an $\tilde O(d)$-composable core-set of size $\tilde O(d)$ for the experimental design with any regular function and $k\geq Cd$ where $C$ is a universal constant.
\end{corollary}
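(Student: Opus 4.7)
The plan is to combine the $k=d$ refinement of \autoref{thm:mainoptdesign} with the integral rounding provided by \autoref{thm:optdesignoffline}. Given inputs $V_1,\ldots,V_p$, I first run \textsc{Spectral-d-Spanner}$(V_i,\alpha)$ from Algorithm \ref{alg:d-spanner} with $\alpha=\Theta(d\log^2 d)$ on each $V_i$, obtaining $U_i\subseteq V_i$ of size $O(d\log d)$ by \autoref{thm:main-spectralspanner}. Composability of spectral spanners (noted after \autoref{def:kspectralspanner}) then implies $U := \bigcup_i U_i$ is an $O(d\log^2 d)$-spectral spanner of $V := \bigcup_i V_i$; I claim $U$ is the desired composable core-set.

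Before invoking \autoref{thm:mainoptdesign} I must verify that every regular $f$ is vector $d$-monotone. Since $\preceq_d$ coincides with the usual Loewner order $\preceq$, this is immediate: whenever $vv^\intercal\preceq B$ one has $A+vv^\intercal\preceq A+B$, and monotonicity of $f$ yields $f(A+vv^\intercal)\geq f(A+B)$. Furthermore, regularity gives $f(tA)=f(A)/t$, so
\begin{equation*}
\beta(f,t)\ =\ \sup_{A\succeq 0}\frac{f(A)}{f(tA)}\ =\ t.
\end{equation*}
The $k=d$ refinement of \autoref{thm:mainoptdesign} therefore gives a fractional core-set $U$ of size $O(d\log d)$ with
\begin{equation*}
\opt\bigl(\ref{CP:Application}(U,f,B)\bigr)\ \leq\ O(d\log^2 d)\cdot\opt\bigl(\ref{CP:Application}(V,f,B)\bigr).
\end{equation*}

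To turn this fractional guarantee into a core-set for the integral experimental design problem, I fix $\epsilon=1/3$ and set $C:=5/\epsilon^2=45$. For $B\geq Cd$ the hypothesis $B\geq 5d/\epsilon^2$ of \autoref{thm:optdesignoffline} is satisfied, so applied with input set $U$ the algorithm returns a multi-set $S\subseteq U$ of size $B$ with $f(\sum_{v\in S}vv^\intercal)\leq(1+8\epsilon)\cdot\opt(\ref{CP:Application}(U,f,B))$. Chaining the two bounds, and using that every integral solution on $V$ is a feasible fractional solution for $\ref{CP:Application}(V,f,B)$, shows that the integral optimum over $U$ is within a factor $(1+8\epsilon)\cdot O(d\log^2 d)=O(d\log^2 d)$ of the integral optimum over $V$, which is precisely the composable core-set guarantee. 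The only real bookkeeping issue is checking that \autoref{thm:optdesignoffline} applies equally well with input $U$ rather than the original $V$; this is automatic because the rounding theorem is stated purely in terms of the BM instance on its input set and does not reference any ambient ground set. Everything else is an immediate chaining of previously established components, so I do not anticipate a substantive obstacle.
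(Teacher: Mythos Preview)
Your proof is correct and follows essentially the same route as the paper: verify that regular functions are (vector) $d$-monotone, apply the $k=d$ case of \autoref{thm:mainoptdesign} to obtain an $O(d\log^2 d)$-composable core-set for the fractional problem, and then invoke \autoref{thm:optdesignoffline} to pass from the fractional to the integral experimental design problem at constant loss. Your computation $\beta(f,t)=t$ is correct (the paper's text ``$\beta(f,t)=1/t$'' is a typo); one tiny nitpick is that \autoref{thm:optdesignoffline} requires $\epsilon<1/3$ strictly, so you should take, e.g., $\epsilon=1/4$ and $C=80$ rather than $\epsilon=1/3$.
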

We simply use \autoref{thm:mainoptdesign} and the fact that any regular function is monotone, and hence vector $d$-monotone.
Since for any regular function $f$, $\beta(f,t)=1/t$, we obtain an $\tilde O(d)$-composable core-set of size $\tilde O(d)$ for the fractional version of the experimental design problem. 
But then, by \autoref{thm:offlineexpdesign} any $\alpha$-composable core-set for the fractional experimental design problem is an $O(\alpha)$-composable core-set for (integer) experimental design.
%\begin{proof}
%Let $V\subseteq \Re^d$ and let  $V_1,V_2,\dots,V_n \subseteq V$ be a partition of $V$ into $n$ sets. For each $1\leq i\leq n$, let  
%$U_i$ be the output of Spectral $d$-Spanner($V_i$,$\alpha$). By \autoref{thm:main-spectralspanner}, for  $\alpha=O(d \log^2 d)$, $|U_i|\leq O(d\log d)$. 
%
%Fix a regular function $f$ and an integer $k\geq Cd$ for a constant $C$ that we specify later. Since $f$ is convex, the program \ref{CP:Application}$(U,f,k)$ for $U= \cup_{i=1}^n U_i$, can be solved up to an arbitrary precision. Also, by an argument analogous  to \autoref{thm:mainoptdesign}, we can deduce 
%$$ \opt(\ref{CP:Application}(U,f,k))\leq \beta(f,\alpha)\cdot \opt(\ref{CP:Application}(V,f,k)) $$%\leq \beta(f,\alpha)\cdot S^*(f,k)$$
%as regularity includes vector monotonicity. Now we use \autoref{thm:optdesignoffline}. Putting it together with the above implies there exists a multi-set $S$ of $U$ of size $k$, such that   
%$$f(\sum_{u \in S} uu^\intercal) \leq O(1)\cdot \opt(\ref{CP:Application}(U,f,k)) \leq O(\beta(f,k)) \cdot \opt(\ref{CP:Application}(V,f,k)) \leq O(\beta(f,\alpha))\cdot  S^*(f,k),$$
%if $C$ is a large enough constant to satisfy the statement of \autoref{thm:optdesignoffline}. Note that the last inequality follows since \ref{CP:Application} is a relaxation for the problem. Finally, reciprocal monotonicity of $f$ implies $f(\alpha A)=f(A)/\alpha$ for any $A \in \mathcal{S}_d^+$. It shows $\beta(f,\alpha)= \alpha$ which completes the proof.
%\end{proof}
Again, we emphasize that given $V,B$, for any regular function our algorithm outputs exactly the same composable core-set.

In Section \ref{sec:lower_bound} we show that for many examples of regular functions $f$, the above bound is almost optimal.

\section{Lower Bound}\label{sec:lower_bound}
In this section, we study lower-bounds on the 
approximation ratio and size of the composable core-sets 
for the $k$-determinant maximization  and the experimental 
design problem. In particular, we prove 
\autoref{thm:main_lower_bound}.%, hence demonstrating the 
%approximation factor given by 
%\autoref{thm:main_upper_bound} is the best possible up to 
%a $\log^{O(k)}k$ factor among all polynomial size 
%composable core-sets.  
 We also prove the bound given by 
\autoref{cor:expdesign} is optimal up to a logarithmic 
factor for some of the regular functions. 
\subsection{Construction of a Hard Input}
Here, we describe a distribution over collection of vectors which turns out to be  a``hard'' input for composable core-sets in many spectral problems. We use that in the next subsection to establish our lower-bound results. The construction of the instance is described in \ref{alg:BadExample}.
%Since we are only interested in polynomial size core-sets,  we suppose that $d^\beta$ is the core-set size for some 
%constant $\beta\geq 1$ in the rest. 
\begin{figure*}
\begin{mdframed}[frametitle=\setword{Figure 1}{alg:BadExample}: A Hard Input for Composable Core-sets,  frametitlerule=true,    frametitlerulewidth=1pt]
\begin{enumerate}[wide, labelwidth=!, labelindent=0pt]
\item Set $m=\frac{d}{\log d}$ to have  $d^{m/d}=O(1)$.
\item Consider a set $G \subset \Re^{m+1} $ of $n=d^{\beta+2}$ 
vectors such that for every two vectors $p,q\in G$, we have 
\begin{equation}
\label{eq:randomvectors}
\langle p,q\rangle \leq O\left(\frac{\sqrt{\beta}\log d}{\sqrt{d}}\right)
\end{equation}
\item Do the following for any $1 \leq i \leq d-m$: 

Embed $G$ into the space spanned by $e_1,\ldots,e_m$ and 
    $e_{m+i}$, and call it $G_i$. 
Choose an index $\pi_i \in [n]$ uniformly at random. Construct $X_i$ by rotating $G_i$ using a rotation $R(\pi_i) \colon \Re^{d} \to \Re^{d}$ that maps the $\pi_i$-th vector in $G_i$ to $e_{m+i}$, and that maps the rest of the vectors in $G_i$ to points in $\langle e_1,\ldots ,e_m, {\color{red}e_{m+i}}\rangle$. 
\item Choose a matrix $Q$ uniformly at random from the Haar measure over the space of rotations in $\Re^{d}$ (i.e., orthogonal $d \times d$ matrices). 
\item Return $QX_1,\ldots,QX_{d-m}$ and $QY_1, \ldots, QY_m$ where $Y_i= Me_i$ for a large enough scalar $M$.
\end{enumerate}
%\end{algorithm}
\end{mdframed}
\end{figure*}
%\captionof{figure}{A hard input for composable core-sets}

%\begin{algorithm}
%\SetKwProg{Fn}{Function}{}{}
%\DontPrintSemicolon
%\Fn{Hard-Instance()}{

%}
%\caption{Constructs a hard instance for the $k$-determinant %maximization and experimental design in the core-set setting}
%\label{alg:BadExample}
%\end{algorithm}
To construct the instance we need to start with a set of vectors $G$ satisfying \eqref{eq:randomvectors}. 
The following lemma guarantees the existence of the set $G$.
\begin{lemma}[Implied by \cite{DG-AEPJL-99}]\label{lem:orth}
Let $G$  be a set of $d^{\beta+2}$ vectors chosen independently and uniformly at random from the $(m+1)$-dimensional unit sphere for $m=\frac{d}{\log d}$ and $\beta \geq 1$. Then with  with probability at least $1-1/d^3$,  for every two vectors $p,q\in G$, we have $\langle p,q\rangle \leq O(\frac{\sqrt{\beta} \log d}{\sqrt d})$.
\end{lemma}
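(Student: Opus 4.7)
The plan is to prove this via a standard concentration-of-measure argument on the sphere, followed by a union bound over all $\binom{n}{2}$ pairs. The key fact is that if $p$ and $q$ are independent uniform unit vectors in $\Re^{m+1}$, then by rotational symmetry the random variable $\langle p,q\rangle$ has the same distribution as the first coordinate of a uniform random unit vector in $\Re^{m+1}$, and such a coordinate is highly concentrated around $0$ with sub-Gaussian tail of scale $1/\sqrt{m+1}$.

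First I would invoke the standard concentration inequality (this is exactly the projection lemma underlying Dasgupta--Gupta's proof of Johnson--Lindenstrauss): for independent uniform $p,q$ on $S^m\subset \Re^{m+1}$,
\[
\Pr\bigl[\,|\langle p,q\rangle|\geq t\,\bigr]\;\leq\; 2\exp\!\left(-\tfrac{(m+1)t^{2}}{2}\right).
\]
Plugging in the claimed bound $t = C\sqrt{\beta}\log d/\sqrt{d}$ together with $m+1\geq d/\log d$ gives
\[
\tfrac{(m+1)t^{2}}{2} \;\geq\; \tfrac{1}{2}\cdot\tfrac{d}{\log d}\cdot \tfrac{C^{2}\beta (\log d)^{2}}{d} \;=\; \tfrac{C^{2}\beta \log d}{2},
\]
so the failure probability for any single pair is at most $2d^{-C^{2}\beta/2}$.

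Next I would apply a union bound over all $\binom{n}{2}\leq n^{2}=d^{2\beta+4}$ unordered pairs in $G$. The overall failure probability is bounded by
\[
2\,d^{2\beta+4}\cdot d^{-C^{2}\beta/2} \;=\; 2\,d^{2\beta+4-C^{2}\beta/2}.
\]
Choosing $C$ to be a sufficiently large absolute constant (any $C\geq 10$ works, using $\beta\geq 1$) makes the exponent at most $-3$, yielding failure probability at most $1/d^{3}$, as required.

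The argument is almost entirely routine; there is no real obstacle. The only care needed is verifying the arithmetic of the exponents so that the bound $O(\sqrt{\beta}\log d/\sqrt{d})$ is tight enough both to beat the $d^{2\beta+4}$ union-bound loss and to make the $(m+1)t^2$ exponent grow as $\beta\log d$ after the substitution $m=d/\log d$. This balance is exactly what drives the choice $m=d/\log d$ in \ref{alg:BadExample}, Step~1.
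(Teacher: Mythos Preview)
Your proposal is correct and follows essentially the same route as the paper: invoke the sub-Gaussian tail for $\langle p,q\rangle$ (the paper cites Lemma~2.2(b) of Dasgupta--Gupta for the bound $e^{-\eps^2 m/3}$, you use the equivalent Levy-type bound $2e^{-(m+1)t^2/2}$), then union-bound over the $d^{2\beta+4}$ pairs and choose the constant $C$ large enough. The only cosmetic differences are the constant in the exponent and the specific value of $C$ (the paper takes $C_1=6$, you take $C\geq 10$); neither affects the argument.
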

\begin{proof}
Let $\eps = \frac{C_1\sqrt{\beta}\log d}{\sqrt d}$ for some constant $C_1$ that we specify later. For any two random vectors chosen uniformly at random from the $(m+1)$-dimensional unit sphere, their inner product is distributed as $\mathcal{N}(0,1)/\sqrt m$. Using Lemma 2.2 (b) from \cite{DG-AEPJL-99}, the probability that their inner product is more than $\eps$, is bounded by $e^{-\frac{\eps^2}{3}\cdot m}$ (note that this uses the fact that $\eps^2m>27$).

Thus, by union bound, the probability that for any pair of points in $G$, their inner product is bounded by $\eps$, is at least $1 - d^{2\beta+4} \cdot e^{-\frac{\eps^2}{3}\cdot m} \geq 1- d^{2\beta+4} \cdot e^{-\frac{\beta(C_1\log d)^2}{3d}\cdot \frac{d}{\log d}} \geq 1 - d^{-\beta C_1^2/3 + 2\beta +4}$. Setting $C_1=6$, this probability is at least $1 -1/d^3$.
\end{proof}

%Let $c$ be any core-set function. 
%, and $\|c(S)\|_1 \leq d^{\beta}$. 
The main property of the sets generated in \ref{alg:BadExample} that we use is the following.
\begin{lemma}\label{l:miss} Let $c$ be an arbitrary core-set function. For any $i=1 \ldots d-m$, the probability that the image of $e_{m+i}$ under $Q$ is in the core-set for $QX_i$ is bounded by $\frac{\abs{c(X_i)}}{\abs{X_i}}$, i.e.,
\[ \Pr_{Q,\pi} [  Qe_{m+i} \in c( Q R(\pi_i) G_i ) ] \le \frac{\abs{c(X_i)}}{\abs{X_i}}\]%=\frac{1}{d^2}\]
\end{lemma}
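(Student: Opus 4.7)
The plan is a symmetry argument via a change of variables: I would absorb the index-dependent rotation $R(\pi_i)$ into the global random rotation $Q$, obtaining a new Haar-uniform rotation that is independent of $\pi_i$. Once this independence is established, conditioning on the observed set $QX_i$ makes the core-set deterministic, and $Qe_{m+i}$ is uniformly distributed among the $n$ elements of the observed set.

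More concretely, let me label the vectors of $G_i$ as $g_1,\dots,g_n$. By construction, $R(\pi_i)$ sends $g_{\pi_i}$ to $e_{m+i}$, so
\[
Qe_{m+i} \;=\; Q\,R(\pi_i)\,g_{\pi_i}, \qquad QX_i \;=\; Q\,R(\pi_i)\,G_i.
\]
Next, define $T := Q\,R(\pi_i)$. Conditioned on any $\pi_i = j$, the rotation $R(j)$ is a fixed element of the orthogonal group and $Q$ is Haar-uniform, so $T = Q\,R(j)$ is also Haar-uniform (by translation invariance of Haar measure). Because the conditional law of $T$ is the same for every $j$, the variables $T$ and $\pi_i$ are \emph{independent}, with $T$ Haar-uniform on the rotations of $\Re^d$ and $\pi_i$ uniform on $[n]$.

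Under this reparametrization the event of interest rewrites as $T g_{\pi_i} \in c(TG_i)$. Conditioning on $T$: the set $TG_i = \{Tg_1,\dots,Tg_n\}$ is fixed, its elements are distinct almost surely (since $G$ consists of distinct vectors by Lemma~\ref{lem:orth}), and $c(TG_i)$ is a deterministic subset of it. Hence, using independence,
\[
\Pr\!\big[T g_{\pi_i}\in c(TG_i)\,\big|\,T\big] \;=\; \frac{|\{j\in[n]\,:\,T g_j\in c(TG_i)\}|}{n} \;=\; \frac{|c(TG_i)|}{n}.
\]
Taking expectation over $T$ and observing that $|TG_i| = |X_i| = n$ while $|c(TG_i)|$ equals $|c(X_i)|$ under this identification yields the bound $|c(X_i)|/|X_i|$.

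There is essentially no hard step; the only subtlety is verifying that $T$ is independent of $\pi_i$ despite being defined from both $Q$ and $\pi_i$, which is immediate from the invariance of Haar measure under right multiplication by any fixed rotation. If $c$ is allowed to be randomized, the argument still goes through by first conditioning on its internal randomness before applying the symmetry.
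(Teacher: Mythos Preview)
Your argument is correct and is essentially the paper's own proof: both absorb $R(\pi_i)$ into $Q$ via the right-translation-invariance of Haar measure to decouple the observed set from $\pi_i$, then use the uniformity of $\pi_i$ to get the $|c(\cdot)|/n$ bound. The only cosmetic difference is that you name the composite rotation $T=QR(\pi_i)$ explicitly, whereas the paper phrases the same step as an equality of joint distributions $(\pi_i,\,QR(\pi_i)G_i)\sim(\pi_i,\,QG_i)$.
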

%For notational convenience, we assume that for any input set $S$,  $c(S)$ also returns the characteristic vector of the core-set, i.e.,  $c(S) \in \{0,1\}^n$.
%We remark that to index the vector $c(QA_i)$, we assume the same underlying order as in $G_i$ which is preserved after we apply $R(\pi_i)$  or $Q R(\pi_i)$ to it. So $(R(\pi_i) G_i)_{\pi_i}	$ is always equal to $e_{m+i}$, and $Qe_{m+i}$ is a point in $QX_i$ and $c( Q R(\pi_i) G_i )_{\pi_i}$ detects whether $Qe_{m+i}$ is contained in the core-set.
\begin{proof}
From the right-translation-invariance of Haar measure, it follows that, for any fixed value of $\pi_i$, the distribution of $Q R(\pi_i) G_i$ is the same as the distribution of $Q G_i$. 
Therefore,  the joint distribution of $(\pi_i, Q R(\pi_i) G_i)$ is the same as of $(\pi_i, QG_i)$, so it suffices to bound  $\Pr_{Q,\pi} [  (G_i)_{\pi_i} \in c( Q G_i )  ]$, where  $(G_i)_{\pi_i}$ denotes the $\pi_i$-th vector in $G_i$. 
Since  $\pi_i$ and $QG_i$ are independent, the bound follows.  
\end{proof}

\subsection{Lower-bounds for Composable Core-sets for Spectral Problems}
Consider the collection of sets generated by the procedure described in \ref{alg:BadExample}.
Without loss of generality we may assume $Q=I$, as  rotation matrices do not change spectral quantities  we are interested in. So let $X_1,\ldots,X_{d-m}$ and $Y_1,\ldots,Y_m$ be the output sets. 
%be the  collection of sets generated as in \ref{alg:BadExample}. %We prove our results using this instance.% 
We are only interested in polynomial size core-sets, so fix
 a core-set function $c$ that  maps any set in $\Re^d$ to its 
subsets of size at most $d^\beta$ for some constant $\beta \geq 1$. 
Using  Lemma~\ref{l:miss} and union bound, the probability that for at least one $1 \leq i \leq d-m$ we have $e_{m+i}\in c(X_i)$, is at most  $(d-m)\cdot \frac{\abs{c(X_i)}}{\abs{X_i}}\leq \frac{(d-m)d^{\beta}}{d^{\beta+2}}\leq 1/d$. So WLOG we can assume $(Q)e_{m+i}=e_{m+i} \notin c(X_i)$ for any $1 \leq i \leq d-m$. It implies the following assumption that   
we crucially use in the future proofs.

\noindent\textbf{Assumption.} For any $u \in \bigcup_{i=1}^{d-m} c(X_i)$, 
\begin{equation}
\label{eq:instanceprop}
\left \langle \Pi_{\langle e_{m+1},\ldots,e_d \rangle},uu^\intercal  \right \rangle = \sum_{j=1}^{d-m} \langle u,e_{m+j}\rangle^2 \leq O\left(\frac{\beta \log^2 d}{d}\right).
\end{equation}
To see this, suppose $u \in c(X_i)$ for some $i$. Since $X_i \subset \langle e_1,\dots,e_m,e_{m+i}\rangle$ by construction, we have $\langle u,e_{m+j}\rangle =0$ for $j\neq i$. Moreover, we assumed $e_{m+i}\notin c(X_i)$, so $\langle u,e_{m+i} \rangle \leq O(\frac{\sqrt{\beta}\log d}{\sqrt{d}})$ by \eqref{eq:randomvectors}.  

We also define  
\begin{equation}
\begin{array}{ccc}
\label{def:collections}
V=\left(\bigcup_{i=1}^{d-
m}X_i\right) \bigcup \left(\bigcup_{j=1}^m Y_j\right)& \text{and} &U=\left(\bigcup_{i=1}^{d-m}c(X_i)\right)\bigcup\left(\bigcup_{j=1}^m c(Y_j)\right).
\end{array}
\end{equation}
In what follows we assume \eqref{eq:instanceprop} holds.

\begin{proofof}{\autoref{thm:main_lower_bound}}
First let us proof the theorem for $k=d$. Consider the core-set function $c$ and input sets $X_1,\dots,X_{d-m},Y_1,\dots,Y_m$ explained above. 
Consider the optimal set of $d$ vectors maximizing the determinant on the union of the input sets, $V$. It 
contains $ e_{m+1},\cdots, e_{d}$ from the sets $X_1,\cdots, X_{d-m}$, 
respectively, and the points $M  e_1,\cdots,M  e_m$ from the sets $ Y_1,\cdots, 
 Y_m$ respectively.  The value of this solution is equal to $(M^m)^2$. At the same 
time, the optimal solution from the union of the core-sets $U$ must contain the $m$ 
vectors $M  e_1,\cdots,M e_m$ from the sets $Y_1,\cdots,Y_m$, if we set $M$ to 
be large enough. Any other set of $k-m=d-m$ vectors must be chosen from the union 
of core-sets $c(X_i)$. So by Hadamard inequality we get  the optimum is at most $(M^m)^2\cdot 
\max_{u \in U} \left(\langle\Pi_{\langle e_{m+1},\ldots,e_d \rangle^\perp},uu^\intercal 
\rangle\right)^{d-m}$ which results in a value of at most $(\frac{M^m}{(\sqrt d/ 
(O(\sqrt{\beta}) \log d))^{d-m}})^2 = \frac{M^{2m}(O(\sqrt{\beta}) \log d)^{2(d-m)}}{d^{d-m}}$ 
by  assumption \eqref{eq:instanceprop}. 
 Hence the approximation ratio is at least  $(d/(O(\sqrt{\beta})\log d)^2)^{d-m}$. Noting $m=\frac{d}{\log d}=o(d)$ completes the proof for $k=d$. 

To extend the above proof for smaller $k$, we can consider the same instance in $d' = k$ dimensions, and then append the vectors with $d-d'$ zeros. It is straight-forward to see this gives us the same result for any value of $k\leq d$, yielding Theorem~\ref{thm:main_lower_bound}.
\end{proofof}

Now, we present our lower-bounds on the approximation ratio of composable core-sets for the experimental design problem. Again, we consider the aforementioned core-set function $c$ and input sets $X_1,\dots,X_{d-m},Y_1,\dots,Y_m$.   
\begin{proposition}
\label{prop:lower_boundexp1}
Composable core-sets of size at most  $d^{\beta}$ for the experimental design problem with respect to the function $A\mapsto \norm{A^{-1}}_2$ and size parameter $B \geq Cd$ have an approximation factor of at least $O(\frac{d}{\beta \log^2 d})$, for any $\beta \ge 1$ and a universal constant $C$.
\end{proposition}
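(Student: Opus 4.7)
The plan is to reuse the hard instance from Figure~\ref{alg:BadExample} and its key property \eqref{eq:instanceprop}. Since $\|A^{-1}\|_2 = 1/\lambda_{\min}(A)$, minimizing this objective is the same as maximizing $\lambda_{\min}(\sum_{v\in S} vv^\intercal)$, and the approximation ratio for a core-set solution compared to OPT is the ratio of these minimum eigenvalues (flipped). As before, by rotation invariance we may assume $Q=I$, and we condition on the event that $e_{m+i}\notin c(X_i)$ for all $i$, so that \eqref{eq:instanceprop} holds.

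First I would construct a good fractional (or multi-set) solution on $V$ that witnesses $\|A^{*-1}\|_2 \le O(d/B)$. The natural choice is to place roughly $B/(2(d-m))$ copies of $e_{m+i}$ (coming from $X_i$) for each $i=1,\dots,d-m$, and roughly $B/(2m)$ copies of $Me_j$ (from $Y_j$) for each $j=1,\dots,m$. The resulting matrix is diagonal with entries of order $B/d$ on the coordinates $e_{m+1},\dots,e_d$ and of order $BM^2/m$ on the coordinates $e_1,\dots,e_m$, so $\lambda_{\min} = \Omega(B/d)$ for $M$ large enough, giving $\|A^{*-1}\|_2 \le O(d/B)$.

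Next, I bound $\lambda_{\min}(A)$ for any feasible multi-set $S$ of size at most $B$ drawn from the union of core-sets $U$. Split $S = S_X \cup S_Y$ according to whether the vectors come from the $c(X_i)$'s or from the $c(Y_j)$'s. Every vector in $S_Y$ lies in $\langle e_1,\dots,e_m\rangle$, so on the subspace $W=\langle e_{m+1},\dots,e_d\rangle$ only $S_X$ contributes. Using \eqref{eq:instanceprop} and $|S_X|\le B$,
\[
\bigl\langle \Pi_W,\ \textstyle\sum_{v\in S} vv^\intercal \bigr\rangle
= \sum_{v\in S_X} \sum_{j=1}^{d-m} \langle v,e_{m+j}\rangle^2
\le B\cdot O\!\left(\tfrac{\beta\log^2 d}{d}\right).
\]
Since $\dim W = d-m = \Theta(d)$, the $(d-m)$ eigenvalues of $\Pi_W A \Pi_W$ restricted to $W$ average at most $O(B\beta\log^2 d / d^2)$, so there exists a unit vector $x\in W$ with $x^\intercal A x \le O(B\beta\log^2 d / d^2)$. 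By the min-max characterization this upper-bounds $\lambda_{\min}(A)$, giving $\|A^{-1}\|_2 \ge \Omega(d^2/(B\beta\log^2 d))$.

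Combining the two bounds, the approximation ratio is at least
\[
\frac{\|A^{-1}\|_2}{\|A^{*-1}\|_2} \ge \Omega\!\left(\frac{d^2/(B\beta\log^2 d)}{d/B}\right) = \Omega\!\left(\frac{d}{\beta\log^2 d}\right),
\]
as claimed. The main subtlety to verify is that the high-probability assumption \eqref{eq:instanceprop} survives a union bound over all $d-m$ indices even when the core-set size is $d^\beta$; this was already done in the analysis preceding Theorem~\ref{thm:main_lower_bound} and can be invoked verbatim. The rest is the straightforward trace/eigenvalue averaging above; I do not foresee further obstacles.
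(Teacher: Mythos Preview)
Your proposal is correct and follows essentially the same route as the paper: construct the same diagonal witness on $V$ to get $\lambda_{\min}=\Omega(B/d)$, then for any feasible solution on $U$ bound $\langle\Pi_W,A\rangle\le B\cdot O(\beta\log^2 d/d)$ via \eqref{eq:instanceprop} and average over the $d-m$ directions in $W$ to force $\lambda_{\min}(A)=O(B\beta\log^2 d/d^2)$. The only cosmetic difference is that the paper first reduces to the fractional program \ref{CP:Application} via the constant integrality gap of Theorem~\ref{thm:offlineexpdesign} and then invokes Lemma~\ref{lem:minquadform} for the averaging, whereas you work directly with multi-sets and phrase the averaging in terms of the eigenvalues of $\Pi_WA\Pi_W$; these are equivalent.
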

\begin{proof}
Note that it is enough to show the same lower-bound for the corresponding fractional budget minimization problem (\ref{CP:Application}). Since as pointed out in \autoref{sec:app}, the relaxation \ref{CP:Application} has constant integrality  gap when the conditions of \autoref{thm:offlineexpdesign} are satisfied (which is satisfied for large enough $C$).
Therefore, we show for any $B$ and $f=(A \mapsto \norm{A^{-1}}_2)$,
\begin{equation*}
\frac{\opt(\text{\ref{CP:Application}}(U,f,B))}{\opt(\text{\ref{CP:Application}}(V,f,B))} \geq \Omega\left(\frac{d}{\beta\log^2 d}\right)
\end{equation*}
where $V$ and $U$ are defined by \eqref{def:collections}.
Let us first find an upper bound on the optimal on $V$. For simplicity we work with the reciprocal of $f$ (note that for any $A \in \mathbb{S}_d^+$, $\frac1{f(A)}= \lambda_d(A)$).  Picking  $Me_i \in B_i$ for any $1 \leq i \leq m$, and $e_{m+i} \in A_i$ for any $1 \leq i \leq d-m$, all with multiplicity $\frac{B}{d}$, we can deduce  
$$ \frac{1}{\opt (\text{\ref{CP:Application}}(V,f,B))} \geq  \lambda_d\left(\frac{B}{d}\cdot \left(\sum_{i=1}^{d-m} e_{m+i}e_{m+i}^\intercal +\sum_{j=1}^m M^2e_je_j^\intercal\right)\right) \geq \frac{B}{d},$$
for $M>1$. So in order to prove the theorem, is suffices to show for any feasible solution $s \in \Re^U$ of \ref{CP:Application}$(U,f,B)$ (which means  $\sum_{u \in U} s_u\leq B$), we have $\lambda_d\left(\sum_{u \in U} s_u uu^\intercal\right)\leq O\left(\frac{\beta \cdot B\log^2 d}{d^2}\right)$. We have 
\begin{align*}
\lambda_d\left(\sum_{u \in U} s_uuu^\intercal\right) &\leq \frac1{d-m}\sum_{i=m+1}^{d} \lambda_i\left(\sum_{u \in U} s_uuu^\intercal\right) &\\
&\leq \frac1{d-m}\left\langle \sum_{u \in U} s_uuu^\intercal ,\Pi_{\langle e_{m+1},\ldots,e_d \rangle}  \right\rangle  & \text{By \autoref{lem:minquadform}}\\
&\leq \frac{\sum_{u \in U} s_u}{d-m} \cdot O\left(\frac{\beta \log^2 d}{d}\right) & \text{By  \eqref{eq:instanceprop}}
\end{align*}
which  completes  the proof, as $\sum_{u \in U} s_u\leq B$.
\end{proof}
\begin{proposition}
\label{prop:lower_boundexp2}
Composable core-sets of size at most  $d^{\beta}$ for the experimental design problem with respect to the function $A\mapsto \det(A)^{-1/d}$ and size parameter $B \geq Cd$ have an approximation factor of at least $O(\frac{d}{\beta \log^2 d})$, for any $\beta \ge 1$ and a universal constant $C$.
\end{proposition}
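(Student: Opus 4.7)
The plan is to mimic the proof of \autoref{prop:lower_boundexp1} with the Fischer block determinant inequality replacing the min-eigenvalue argument. Since $A \mapsto \det(A)^{-1/d}$ is a regular function, by \autoref{thm:offlineexpdesign} the fractional relaxation \ref{CP:Application} has constant integrality gap for $B \geq Cd$, so it suffices to establish the ratio $\Omega(d/(\beta\log^2 d))$ for the continuous problem $\opt(\text{\ref{CP:Application}}(U,f,B))/\opt(\text{\ref{CP:Application}}(V,f,B))$. Here, as before, $V$ and $U$ are given by \eqref{def:collections}, and I will rely on \eqref{eq:instanceprop} which says that every $u\in \bigcup_i c(X_i)$ has $\|\Pi_{\langle e_{m+1},\ldots,e_d\rangle}u\|^2 \leq O(\beta\log^2 d/d)$.

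For the upper bound on $\opt(V)$, I will take the canonical solution $s_{Me_j}=s_{e_{m+i}}=B/d$; then $\sum_v s_v vv^\intercal$ is the diagonal matrix with entries $M^2 B/d$ on the first $m$ diagonal slots and $B/d$ on the remaining $d-m$ slots. Its determinant is $(B/d)^d M^{2m}$, giving $\opt(\text{\ref{CP:Application}}(V,f,B))\leq (d/B)M^{-2m/d}$.

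For the lower bound on $\opt(\text{\ref{CP:Application}}(U,f,B))$, fix a feasible $s$ and set $A=\sum_{u\in U}s_u uu^\intercal$. Decompose $\Re^d=S_1\oplus S_2$ with $S_1=\langle e_1,\ldots,e_m\rangle$ and $S_2=\langle e_{m+1},\ldots,e_d\rangle$, and let $A_{11}, A_{22}$ be the diagonal blocks of $A$ in this basis. By Fischer's inequality for PSD matrices, $\det(A)\leq \det(A_{11})\det(A_{22})$. Splitting the total mass as $T_Y=\sum_j s_{Me_j}$ and $T_X=\sum_{u\in \bigcup_i c(X_i)} s_u$ (so $T_Y+T_X\leq B$), I get $\operatorname{tr}(A_{11})\leq M^2 T_Y + T_X$ (using $\|u\|=1$ for $u\in c(X_i)$) and, crucially by \eqref{eq:instanceprop}, $\operatorname{tr}(A_{22})\leq O(\beta\log^2 d/d)\cdot T_X$ (the $Me_j$'s contribute nothing to $S_2$). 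AM--GM then yields
\begin{equation*}
\det(A)^{1/d}\leq \left(\tfrac{M^2 T_Y+T_X}{m}\right)^{m/d}\!\left(\tfrac{O(\beta\log^2 d)\,T_X}{d(d-m)}\right)^{(d-m)/d}.
\end{equation*}
Maximizing the right-hand side subject to $T_Y+T_X\leq B$ (for large $M$ the optimum is attained near $T_Y=(m/d)B$, $T_X=((d-m)/d)B$, and $(mB/d)^{m/d}$ and $((d-m)B/d)^{(d-m)/d}$ combine with the $1/m$ and $1/((d-m)d)$ terms to give $(B/d)\cdot(O(\beta\log^2 d)/d)^{(d-m)/d}$) produces $\det(A)^{1/d}\leq O(M^{2m/d})\cdot \tfrac{B\beta\log^2 d}{d^2}$, and hence $\opt(\text{\ref{CP:Application}}(U,f,B))\geq \Omega\!\bigl(\tfrac{d^2}{B\beta\log^2 d}\bigr)M^{-2m/d}$.

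Dividing the two bounds gives the claimed ratio $\Omega(d/(\beta\log^2 d))$, independent of $M$. The main obstacle is step three: Fischer's inequality must be applied to the exact coordinate decomposition $S_1\oplus S_2$ so that the small-projection estimate \eqref{eq:instanceprop} can be fed into the $S_2$ trace, and then the $T_Y,T_X$ optimization has to be done carefully enough that the large factor $M^{2m/d}$ cancels exactly between numerator and denominator, leaving only the $d/(\beta\log^2 d)$ coming from $1/\epsilon^2$ with $\epsilon^2=O(\beta\log^2 d/d)$ as in the min-eigenvalue case.
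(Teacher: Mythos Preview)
Your proof is correct and takes a genuinely different route from the paper's.

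The paper expands $\det\!\left(\sum_{u\in U}s_uuu^\intercal\right)$ via Cauchy--Binet into a sum over $d$-subsets of $U$, argues that for $M$ large the only surviving terms are those containing all of $Me_1,\dots,Me_m$, and then exploits the finer structural fact that each $c(X_i)\subset\langle e_1,\dots,e_m,e_{m+i}\rangle$ so that no two of the remaining $d-m$ vectors can come from the same $c(X_i)$. After projecting onto $\langle e_{m+1},\dots,e_d\rangle$, Hadamard's inequality together with \eqref{eq:instanceprop} bounds each $(d-m)$-term, and an AM--GM on the product $\prod_j s_{e_j}\cdot\prod_i\sum_{u\in c(X_i)}s_u$ feeds in the budget constraint.

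You bypass all of this with a single application of Fischer's inequality $\det(A)\leq\det(A_{11})\det(A_{22})$ to the block decomposition $S_1\oplus S_2$, followed by the trace--determinant inequality on each block. This is more elementary: it uses only the aggregate projection bound \eqref{eq:instanceprop} on $\operatorname{tr}(A_{22})$ and never needs to know where each individual $c(X_i)$ lives, nor does it need $M$ to be large to isolate the dominant Cauchy--Binet terms. One small simplification of your writeup: rather than locating the exact optimizer of $(M^2T_Y+T_X)^{m/d}T_X^{(d-m)/d}$ under $T_Y+T_X\leq B$, you can just use the crude bounds $M^2T_Y+T_X\leq M^2B$ and $T_X\leq B$; the same estimates $m^{m/d}=O(1)$, $(d-m)^{(d-m)/d}=\Theta(d)$, and $(O(\beta\log^2 d/d))^{(d-m)/d}=O(\beta\log^2 d/d)$ (all consequences of $d^{m/d}=O(1)$) then yield your final bound directly, without the optimization step.
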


\begin{proof}
Similar to the previous proposition, it is enough  to show that for function $f=A\mapsto \det\vspace{0cm}^{-1/d}$ and  any $B$   
\begin{equation*}
\frac{\opt(\text{\ref{CP:Application}}(U,f,B))}{\opt(\text{\ref{CP:Application}}(V,f,B))} \geq \Omega\left(\frac{d}{\beta\log^2 d}\right),
\end{equation*}
where $U$ and $V$ are defined in \eqref{def:collections}.
let us first find  an upper bound  
on the optimum on $V$ (or equivalently a lower-bound on its reciprocal). If we choose $e_{m+i}$ from $X_i$ for any $1 \leq i 
\leq d-m$ and $Me_j$ from $Y_j$ for any $1 \leq j \leq m$ with equal weights of $B/d$, we get $\frac{1}{\opt\text{\ref{CP:Application}}(V,f,B)} \geq  \frac{B}{d}M^{2m/d}.$ 
%$$\frac{1}{\opt\text{\ref{CP:Application}}(V,f,B)} %\geq \det\left(\sum_{i=1}^{d-m} e_{m+i}e_{m+i}^\intercal+\sum_{j=1}^m M^2e_je_j^\intercal\right)^{1/d}=%
%\frac{B}{d}M^{2m/d}.$$
So in order to prove the theorem it suffices to show $$\frac{1}{\opt(\text{\ref{CP:Application}}(U,f,B))}=\det\left(\sum_{u \in U} 
s_uuu^\intercal\right)^{1/d} \leq BM^{2m/d}\cdot O\left(\frac{\beta \log^2 d}{d^2}\right)$$ 
for any feasible solution $s\in \Re^U$, i.e. $\sum_{u \in U} s(u)\leq B$. By Cauchy-Binet we know $\det\left(\sum_{u \in U} s_uuu^\intercal\right)=\left(\sum_{S \in \binom{U}{d} } \det(\sum_{u \in S} s_uuu^\intercal) \right)$. Taking $M$ to be large enough implies the summation is dominated by terms containing all vectors $Me_1,\dots,Me_m$. So letting $H=\langle e_{m+1},\dots,e_d \rangle$, we have   
\begin{equation*}
\begin{aligned}
\label{eq:upperboundopt}
\det\left(\sum_{u \in U} s_uuu^\intercal\right)&=
M^{2m}(1+o(1))\cdot  \sum_{\substack{S\in \binom{U}{d-m}}} \det\left(\sum_{j=1}^m s_{e_j}e_je_j^\intercal+  \sum_{u \in S} s_{u} uu^{\intercal}\right)  \\ 
&=
M^{2m}(1+o(1))\cdot  \prod_{j=1}^m s_{e_j}\cdot \sum_{\substack{S\in \binom{U}{d-m}}} \det_{d-m}\left(\sum_{u \in S} s_{u} \Pi_{H }(u_i)\Pi_{H}(u_i)^\intercal \right) \\ 
\end{aligned}
\end{equation*}
Now, note that if $p,q \in U$ both belong to the same  $c(X_i)$, then the corresponding determinant in the summation is zero as $p,q \in \langle e_1,\dots,e_m,e_{m+i} \rangle$ by construction. So we have   
\begin{equation}
\begin{aligned}
\label{eq:lastupperbound}
\det\left(\sum_{u \in U} s_uuu^\intercal\right)&=  M^{2m}(1+o(1))\cdot \prod_{j=1}^m s_{e_j} \cdot \sum_{\substack{(u_1,\dots,u_{d-m}) \\ \in \\ c(X_1)\times \dots,\times c(X_{d-m})}} \det_{d-m}\left(  \sum_{i=1}^{d-m} s_{u_i}\Pi_{H }(u_i)\Pi_{H}(u_i)^\intercal  \right) \\
&\leq M^{2m}(1+o(1)) \cdot\prod_{j=1}^m s_{e_j} \cdot \left(\prod_{i=1}^{d-m}\sum_{u \in c(X_i)} s_u\right) \cdot  \max_{S \in \binom{U}{d-m}}\det_{d-m}\left( \sum_{u\in S} \Pi_{H}(u)\Pi_{H}(u)^{\intercal}\right)  
\end{aligned}
\end{equation}
We can further simplify the above by  combining Hadamard inequality and \eqref{eq:instanceprop}. It implies 
\begin{equation}
\label{eq:HadamardInequalityapp}
\max_{S \in \binom{U}{d-m}}\det_{d-m} \left(\sum_{u\in S} \Pi_{H}(u)\Pi_{H}(u)^\intercal\right) \leq  \max_{S \in \binom{U}{d-m}}  \prod_{u \in S} \norm{\Pi_{H}(u)}^2 \leq O\left(\frac{\beta \log^2 d}{d}\right)^{(d-m)} 
\end{equation}
Furthermore, by AM-GM inequality we get that $\prod_{j=1}^m s_{e_j} \cdot \left(\prod_{i=1}^{d-m}\sum_{u \in c(X_i)} s_u\right)\leq 
\left(\frac{\sum_{u \in U} s_u}{d}\right)^d\leq \frac{B^d}{d^d}$. 
Combining the above with \eqref{eq:lastupperbound} and \eqref{eq:HadamardInequalityapp} proves 
$$ \det\left(\sum_{u \in U} s_uuu^\intercal\right)^{1/d} \leq M^{2m/d}\cdot \frac{B}{d}\cdot  O\left(\frac{\beta \log^2 d}{d}\right)^{(d-m)/d}.$$
Noting $d^{2m/d}=O(1)$ completes the proof.
\end{proof}

\begin{comment}
\section{Discussion}
In this paper we designed an $O(k\log^3 k)$-spectral $k$-spanner of size $O(k\log^2k)$ for any set of vectors $V\subseteq \Re^d$ and $1\leq k\leq d$. We leave an open problem to improve the logarithmic dependency on $k$ in our approximation guarantee.

Although we did not discuss here, spectral spanners seems to be closely related to the recent developments on spectral sparsifieres \cite{ST04,SS09,BSS11}. It is a fascinating question to make this connection rigorous. Is it possible to use linear-sized spectral sparsifiers to design $O(k)$-spectral spanners?
%\begin{theorem}
%The approximation ratio of any core-set function of size $d^{\beta}$ for the $d$-determinant maximization problem is at least $(\frac{d}{C\log^2 d})^{d-d/\log d}$ where $C = 6\beta+21$.
%\end{theorem}

%\iffalse
%\begin{corollary}
%For all values of $k\leq d$, the approximation ratio of any core-set function of size $k^{\beta}$ for the $k$-determinant maximization problem in $d$ dimensions, is at least $(\frac{k}{C\log^2 k})^{k(1-o(1))}$, where $C=6\beta+21$.
%\end{corollary}
%\fi

\iffalse

\begin{definition} Let $\sC$ and $\sC'$ be two sets of $n=d^3$ points in $2d$-dimensional space without any non-trivial automorphism. We say $\sC$ can be transformed to $\sC'$ and write $\sC\sim \sC'$, if there exists an orthogonal matrix $T$ such that $\sC = T\sC'$. Note that this relation is symmetric.
\end{definition}

To analyze the lower bound, first we will prove the following lemma which intuitively holds by spherical symmetry. However, we also prove it rigorously for the sake of completeness.

%%%%%%%%%%%%%%%%%%%%%%
\begin{lemma}
Fix $i\leq d$, and let $\sC= \{v_1,\cdots,v_n\} \in \Re^{2d}$ be a set of $n=d^3$ unit vectors  that spans a $d+1$ dimensional subspace, such that $\sC\sim G_i$.
Then for each $j\leq n$, we have $\Pr[b_{i+d} = v_j | A_i=\sC] = 1/n$. Note that the probability is taken over all random choices in the construction of our instance that affects $A_i$: namely, $Q$ (the relevant components are $b_1,\cdots,b_d, b_{d+i}$), and $\pi_i$.
\end{lemma}
\begin{proof} 
First, let us define $f_{Q,\pi_i}$ to be the underlying joint probability density function. Thus since $A_i$ is determined by the choices of those random variables, we can also define $f_{A_i}$. Note that all the marginal distributions can be computed. Specifically, by our construction we know that $f_{b_1,\cdots,b_d,b_{d+i}}$ is uniform; $f_{\pi_i}$ is uniform; etc.

Since we know by our construction that given $\sC$, the vector $b_{d+i}$ is one of the $v_1,\cdots,v_n$, we have
\[
\Pr[(b_{d+i} = v_j) | (A_i=\sC)] = \frac{f_{b_{d+i},A_i}(v_j,\sC)}{f_{A_i}(\sC)} = \frac{f_{b_{d+i},A_i}(v_j,\sC)}{\sum_{k=1}^n f_{b_{d+i},A_i}(v_k,\sC)}
\]
So it is enough to prove that for any $j,k\in [n]$, we have $f_{b_{d+i},A_i}(v_j,\sC) = f_{b_{d+i},A_i}(v_k,\sC)$ for any such $C$. 

Let $Y_i$ be the random variable denoting the point set after we apply the random rotation $R_i$ to $G_i$ and let $\sC'$ an instance of $Y_i$. Note that there are $n$ such instances and let us denote their collection by $\sI$.
Further note that by our construction we know that $\sC'$ contains $e_{i+d}$. Again by the law of total probability we have 
\[
f_{b_{d+i},A_i}(v_j,\sC) = \sum_{\sC'\in \sI} \Pr[Y_i=\sC'] f_{b_{d+i},A_i}(v_j,\sC| Y_i=\sC')
\]

Now the first term $\Pr[Y_i=\sC']$ is equal to $1/n$ for each such instance $\sC'$ and is independent of $j$ or $k$. Next we analyze the second term. Note that since $\sC'\sim G_i$ and $G_i\sim \sC_i$, we know that $\sC'\sim\sC$. Thus since $G_i$ (and consequently $\sC$ and $\sC'$) has no non-trivial automorphism, and that  there are $n\gg d+1$ points in $\sC$, there exists a unique setting of the parameters $b_1,\cdots,b_d , b_{d+i}$ (defined by $Q$), so that $\sC'$ is transformed to $\sC$. Given this transformation, the preimage of $v_j \in \sC$ is uniquely defined as a point in $\sC'$. Thus, $f_{b_{d+i},A_i}(v_j,\sC|  Y_i=\sC')$ is zero if  $e_{d+i}$ is not the preimage of $v_j$ in this unique transformation. Otherwise, this density function is uniform over any such $\sC'$. This is due to the uniqueness of the transformation and $f_{Q}$ being uniform. Hence, this term is also independent of whether $b_{d+i} = v_j$ or $b_{d+i}=v_k$.

Therefore it remains to prove that there exists only one instance $\sC'\in \sI$, where the preimage of $v_j$ in $\sC'$ is $e_{i+d}$. However this is just by definition: we are first choosing $\pi_i$ uniformly at random, and then deterministically choose $R_i$ such that the $\pi_i$-th point is mapped to $e_{d+i}$. Thus, this probability is equal for every $v_j$ and $v_k$. This completes the proof.
\end{proof}
\fi

\end{comment}
 \bibliographystyle{alpha}
 
 \bibliography{biblio}
 
 %\appendix
\end{document}